\newtheorem{theorem}{Theorem}
\newtheorem{corollary}{Corollary}
\newtheorem{lemma}{Lemma}
\newtheorem{problem}{Problem}
\begin{document}

\title{Random Caching in Backhaul-Limited Multi-Antenna Networks: Analysis and Area Spectrum Efficiency Optimization}

\author{Sufeng Kuang, \emph{Student Member, IEEE}, Nan Liu, \emph{Member, IEEE}

\thanks
{S.Kuang and N.Liu are with the National Mobile Communications Research Laboratory, Southeast University, Nanjing 210096, China.(Email: sfkuang@seu.edu.cn and nanliu@seu.edu.cn).}}

\maketitle

\begin{abstract}

  Caching at base stations is a promising technology to satisfy the increasing capacity requirements and reduce the backhaul loads in future wireless networks. Careful design of random caching can fully exploit the file popularity and achieve good performance. However, previous works on random caching scheme usually assumed single antenna at BSs and users, which is not the case in practical multi-antenna networks. In this paper, we consider the analysis and optimization in the cache-enabled  multi-antenna networks with limited backhaul.  We first derive a closed-form expression and a simple tight upper bound of the successful transmission probability,  using tools from stochastic geometry and  a gamma approximation. Based on the analytic results, we then consider the area spectrum efficiency maximization by optimizing design parameters, which is a complicated mixed-integer optimization problem. After analyzing the optimal properties, we obtain a local optimal solution with lower complexity. To further simplify the optimization, we then  solve an asymptotic optimization problem in the high user density region, using the upper bound as the objective function.  Numerical simulations show that the asymptotic optimal caching scheme achieves better performance over  existing caching schemes. The analysis and optimization results provide insightful design guidelines for random caching in practical networks.

\end{abstract}

\begin{IEEEkeywords}
Cache, limited backhaul, multi-antenna, Poisson point process, stochastic geometry
\end{IEEEkeywords}

\section {Introduction}\label{Introduction}

The deployment of small base stations (SBSs) or network densification, is proposed as a key method for meeting the tremendous capacity increase in 5G networks \cite{Bhushan14networkdensification}. Dense deployment of small cells can significantly improve the performance of the network by bringing the BSs closer to the users. On the other hand, MIMO technology, especially the deployment of  massive number of antennas at BSs,  is playing an essential role in 5G networks to satisfy the increasing data requirements of the users.  Therefore, the combination of small cells and novel MIMO techniques is inevitable in 5G networks \cite{Jeffrey145G}. However, this approach aggravates the transmission loads of the backhaul links connecting the SBSs and the core networks.

Caching at BSs is a promising method to alleviate the heavy backhaul loads in small cell networks \cite{XWang14Cacheinair}. Therefore, jointly deploying cache and multi antennas at BSs is proposed to achieve 1000x capacity increase for 5G networks \cite{Anliu14MIMOcache,Anliu15MIMOcaching,Meixiatao16cache}. In \cite{Anliu14MIMOcache} and \cite{Anliu15MIMOcaching}, the authors considered the optimization of the cooperative MIMO for video streaming in cache-enabled networks. In \cite{Meixiatao16cache}, the authors considered the optimal multicasting beamforming design for cache-enabled cloud radio access network (C-RAN).  Note that the focuses of the above works were on the parameter optimization for cache-enabled networks under the traditional grid model.

Recently, Poisson point process (PPP) is proposed to model the BS locations  to capture the irregularity and randomness of the small cell networks\cite{Jeffrey11Tractable}.  Based on the random spatial model,  the authors in \cite{Debbah15cache} and \cite{Binxia15Cache} considered storing the most popular files in the cache for small cell networks and heterogeneous cellular networks (HetNets). In \cite{15UCcaching}, the authors considered uniformly caching all the files at BSs, assuming that the file request popularity follows a uniform distribution.  In \cite{Nagananda15IIDcache} , the authors considered storing different files at the cache in an i.i.d. manner, in which each file is selected with its request probability.  Note that in \cite{Debbah15cache,Binxia15Cache,15UCcaching,Nagananda15IIDcache}, the authors did not consider the optimal cache placement, rather, they analyzed the performance under a given cache placement, and thus, the results of the papers might not yield the best performance of the system.

In view of the above problem, random caching strategy is proposed to achieve the optimal performance \cite{Yingcui16Smallnet,Yingcui17Hetnet,Kaibinhuang2016HetCache,Quek17cachecooperation}. In \cite{Yingcui16Smallnet} and \cite{Yingcui17Hetnet}, the authors considered the caching and multicasting in the small cell networks or the HetNets,  assuming random caching at SBSs. In \cite{Kaibinhuang2016HetCache}, the authors considered the  analysis and optimization of the cache-enabled multi-tier HetNets, assuming random caching for all tiers of the BSs. Note that in \cite{Yingcui16Smallnet,Yingcui17Hetnet,Kaibinhuang2016HetCache}, the authors obtained a water-filling-type optimal solutions in some special cases due to the Rayleigh distribution of the fading. In  \cite{Quek17cachecooperation} and \cite{ourwork17WCNC}, the authors considered the helper cooperation for the small cell networks or the HetNets, in which the locations of BSs are modeled as Poisson cluster Process.

However, the works mentioned above considered the random caching in networks equipped with a \emph{single} antenna at BSs and users, which is rarely the case in practical networks.  In \cite{Chenyangyang17cacheHetnet}, the authors considered the cache distribution optimization in  HetNets, where multi-antennas are deployed at the MBSs.  However, the authors considered a special case of zero forcing precoding, i.e., the number of the users is equal to the number of BS antennas, in which the equivalent channel gains from the BSs to its served users follow the Rayleigh distribution. This property does not hold for the general MIMO scenario, where the equivalent channel gains from the BSs to its served users follow the Gamma distribution \cite{Letaief2017MIMOsummary,Alouini16MIMOretransmission}.

The main difficulty of the analysis of the multi-antenna networks stems from the complexity of the random matrix channel \cite{Jeffrey13MIMOHetNet,Changli14SmallCell,Changli16Hetnet}. In \cite{Jeffrey13MIMOHetNet}, the authors utilized the stochastic ordering to compare MIMO techniques in the HetNets,  but the authors did not comprehensively analyze the SINR distribution. In \cite{Changli14SmallCell}, the authors proposed to utilize a Toeplitz matrix representation to obtain a tractable expression of the successful transmission probability in the multi-antenna small cell networks. In \cite{Changli16Hetnet,Changli15IN,Letaief17mmwave}, the authors extended the approach of the Toeplitz matrix representation to analyze the MIMO mutli-user HetNets, MIMO networks with interference nulling and millimeter wave networks with directional antenna arrays.   However, this expression involves the matrix inverse, which is difficult for analysis and  optimization. In \cite{Tianyangbai15mmWave}, a gamma approximation \cite{alzer1997some}  was utilized to facilitate the analysis in the millimeter wave networks.

In this paper, we consider the analysis and optimization of random caching in backhaul-limited multi-antenna networks. Unlike the previous works \cite{Yingcui16Smallnet,Yingcui17Hetnet,Kaibinhuang2016HetCache} focusing on the successful transmission probability of the typical user, we analyze and optimize the area spectrum efficiency, which is a widely-used metric to describe the network capacity.  The optimization is over  \emph{file allocation} strategy and \emph{cache placement} strategy, where a file allocation strategy dictates which file should be stored at the cache of the BSs and which file should be transmitted via the backhaul, and a cache placement strategy is to design the probability vector according to which the files are randomly stored in the cache of the BSs.

First, we derive an exact expression of the successful transmission probability in cache-enabled multi-antenna networks with limited backhaul, using tools from stochastic geometry and a Toeplitz matrix representation. We then utilize a gamma approximation to derive a tight upper bound on the performance metrics to facilitate the parameter design.  The exact expression involves the inverse of a lower triangular Toeplitz matrix and the upper bound is a sum of a series of fractional functions of the caching probability.   These expressions reveal the impacts of the parameters on the performance metrics, i.e., the successful transmission probability and the area spectrum efficiency.  Furthermore, the simple analytical form of the upper bound facilitates the parameter design.

Next, we consider the area spectrum efficiency maximization by jointly optimizing the file allocation and cache placement, which is a very challenging mixed-integer optimization problem.  We first prove that the area spectrum efficiency is an increasing function of the cache placement. Based on this characteristic, we then exploit the properties of the file allocation and obtain a local optimal solution in the general region, in which the user density is moderate. To further reduce the complexity, we then solve an asymptotic optimization problem in the high user density region, using the upper bound as objective function. Interestingly,  we find that the optimal file allocation for the asymptotic optimization is to deliver the most $B$ popular files via the backhaul and store the rest of the files at the cache, where $B$ is the largest number of files the backhaul can deliver at same time. In this way, the users requesting the most $B$ popular files are associated with the nearest BSs, who obtain the most $B$ popular files via the backhaul, and therefore achieve the optimal area spectrum efficiency.

Finally, by numerical simulations, we show that the asymptotic optimal solution with low complexity achieves a significant gain in terms of area spectrum efficiency over previous caching schemes.

\section{System  Model}\label{System Model}

\subsection{Network Model}
We consider a downlink cache-enabled multi-antenna network with limited backhaul, as shown in Fig. \ref{Networkmodel}, where BSs, equipped with $N$ antennas, are distributed according to a homogeneous Poisson point process (PPP) $\Phi_b$ with density $\lambda_b$. The locations of the single-antenna users are distributed as an independent homogeneous PPP $\Phi_u$ with density $\lambda_u$. According to Slivnyak's theorem \cite{haenggi2012stochastic}, we analyze the performance of the typical user who is located at the origin without loss of generality. All BSs are operating on the
same frequency band and the users suffer intercell interference from other BSs. We assume that all the BSs are active due to high user density.

We assume that each user requests a certain file from a finite content library which contains $F$ files. Let $\mathcal{F} = \{1,2,3,\cdots,F \}$ denotes the set of the files in the network. The popularity of the requested files is known a priori and is modeled as a Zipf distribution \cite{LBreslau99CacheZipf}
\begin{align}\label{Zipfdistribution}
q_f=\frac {f^{-\gamma} } {\sum_{i=1}^F i^{-\gamma}   },
\end{align}
where $q_f$ is the probability that a user requests file $f$  and $\gamma$ is the shape parameter of the Zipf distribution. We assume that all the files have same size and the size of a file is normalized to $1$ for simplicity.

Each BS is equipped with a cache with $C$ segments and the cache can store at most $C$ different files out of the content library. For the files which are not stored in the cache, the BSs can fetch them from the core network via backhaul links which can transmit at most $B$ files at same time. We refer to $C$ as  \emph{cache size} and $B$ as \emph{backhaul capability}.   We assume  $B+C \leq F$ to illustrate the resource limitation.

\begin{figure}
\begin{minipage}[t]{0.2\linewidth}
\centering
    \includegraphics[height=2in,width=1.5in]{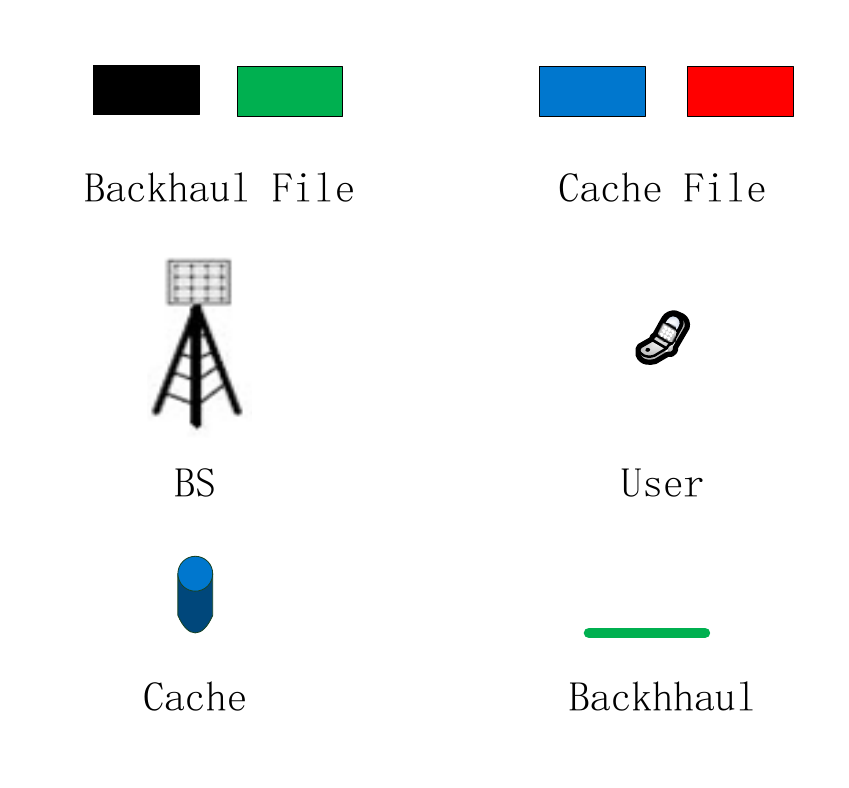}
\end{minipage}
\begin{minipage}[t]{0.43\linewidth}
\centering
    \includegraphics[height=2in,width=2.5in]{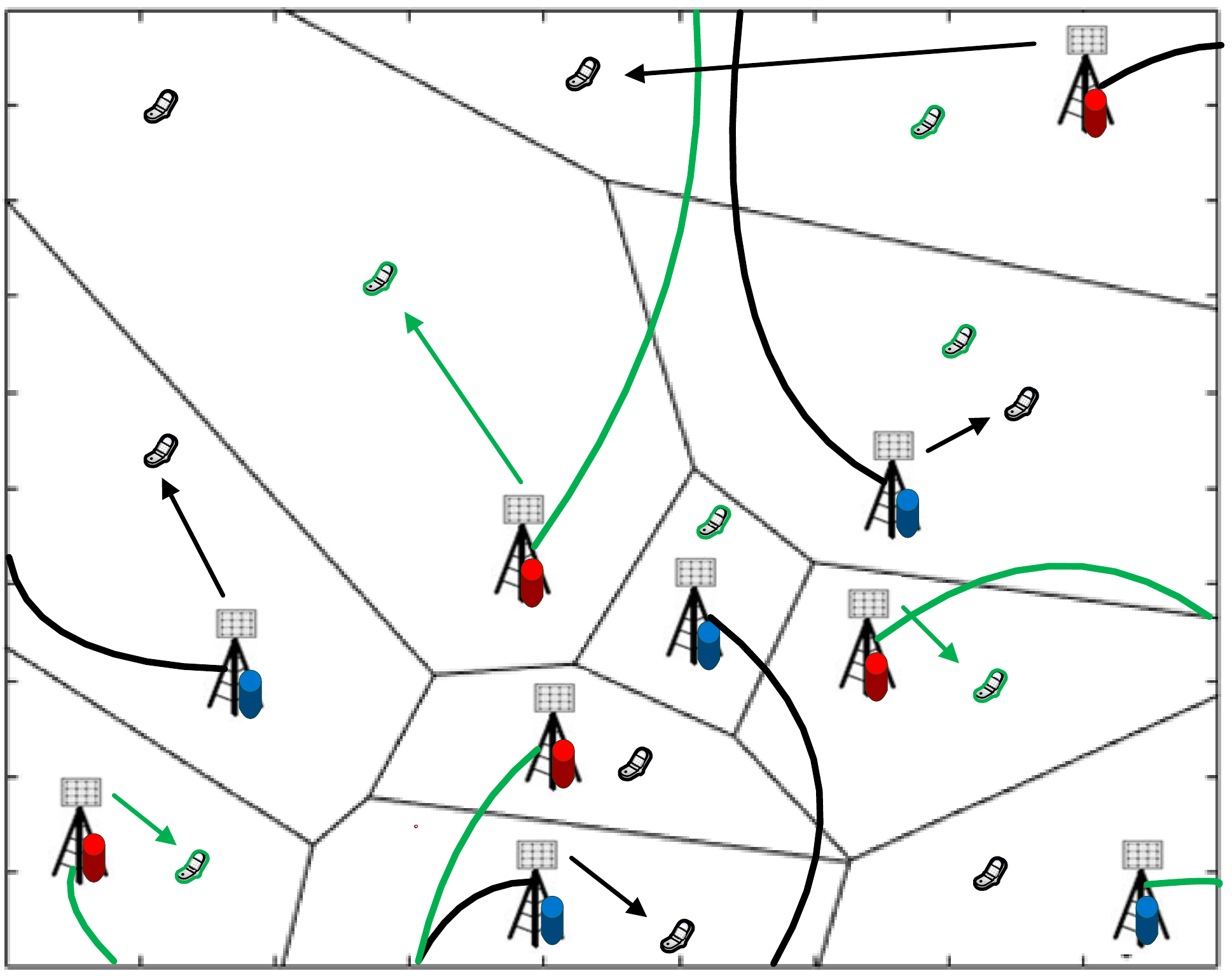}
\caption*{Distance-based Association.}
\end{minipage}
\begin{minipage}[t]{0.35\linewidth}
\centering
    \includegraphics[height=2in,width=2.5in]{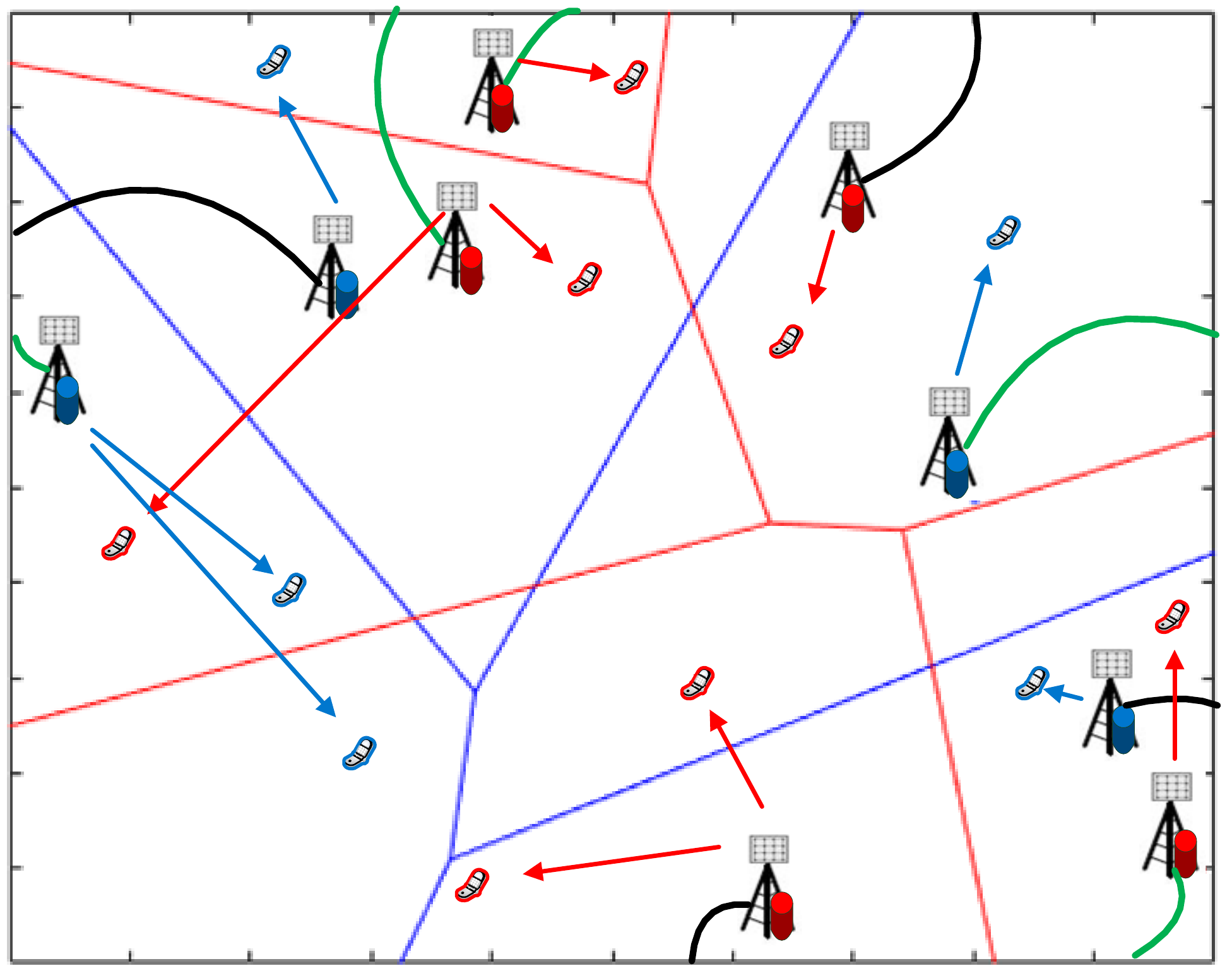}
\caption*{Content-centric Association.}
\end{minipage}
\caption{Illustration of the system model. For the backhaul file, the distance-based association scheme is adopted. For the cached file, the content-centric association scheme is adopted. }
\label{Networkmodel}
\end{figure}

\subsection{Caching and Backhaul Delivery}

The set of all the files is partitioned into two disjoint sets, and we define the set of files stored in the cache of all the BSs and the set of files not stored in any of the BSs, i.e., the files in it must be transmitted via the backhaul, as \emph{cached file set}  and \emph{backhaul file set}, which are denoted as $\mathcal{F}_c$ and $\mathcal{F}_b$, respectively. The number of files in $\mathcal{F}_x$ is $F_x$, $x=c,b$. Since $\mathcal{F}_c$ and $\mathcal{F}_b$ form a partition of $\mathcal{F}$, we have
\begin{align}\label{fileconstraint}
\mathcal{F}_c \bigcup \mathcal{F}_b =\mathcal{F}, ~~\mathcal{F}_c  \bigcap \mathcal{F}_b =\emptyset.
\end{align}
We define the process of designing $\mathcal{F}_c$ and $\mathcal{F}_b$ as \emph{file allocation}.

Each BS can cache $C$ different files from a total $F_c$ files via the \emph{random caching} scheme, in which the BS stores a certain file $i \in \mathcal{F}_c$ randomly with probability $t_i$. Let $\mathbf{t}=(t_i)_{i \in \mathcal{F}_c}$ denotes the caching distribution of all the files. Then we have the following constraints
\begin{align}
&t_i \in [0,1], ~\forall i \in \mathcal{F}_c, \label{T01} \\
&\sum_{i \in \mathcal{F}_c} t_i \leq C.  \label{Ttotal}
\end{align}
We refer to the specification of $\mathbf{t}$ as \emph{cache placement}.

For further analysis, we define the BSs that have the cached file $f \in \mathcal{F}_c$ in the cache as  the \emph{f-cached BSs}. According to the thinning theory of the PPP, the density of the $f$-cached BSs is $\lambda_b^f=t_f \lambda_b$. We denote the set of the $f$-cached BSs as $\Phi_b^f$ and the set of the remaining BSs that do not have file $f$ in their cache as  $\Phi_b^{-f}$.  When a user requests the file $f$ out of the cached file set, the user is associated with the nearest $f$-cached BSs which will provide the required file $f$ from its cache. We refer to the BS associated with the typical user as the \emph{tagged BS} and this association scheme is called \emph{content-centric association} scheme. Content-centric association scheme is different from the traditional \emph{distance-based association} scheme, where the user is associated with the nearest BS. We denote the tagged BS as BS $1$.

When a user requests a file out of the backhaul file set, the distance-based association is adopted and the user is associated with the nearest BS. We define the set of \emph{backhaul requested files} as the set of the backhaul files requested by the users of BS $i$, which is denoted as $\mathcal{F}_{b,i}^r \subseteq \mathcal{F}_b$. The number of the backhaul request file is denoted as $F_{b,i}^r$. If BS $i$ needs to transmit less than $B$ backhaul files via backhaul, i.e., $F_{b,i}^r \leq B$, then BS $i$ gets all $F_{b,i}^r$ files from the backhaul and transmits all of them to the designated users; otherwise,  BS $i$ will randomly select $B$ different files from $\mathcal{F}_{b,i}^r$ according to the uniform distribution, where these files will be transmitted from the backhaul links and then passed on to the designated users.

\section{Performance Metric and Problem Formulation}\label{OptimalDesign}
\subsection{Performance Metric}
In this part, we define the successful transmission probability (STP) and the area spectrum efficiency (ASE) of the typical user when the single-user maximal ratio combination (MRT) beamforming is adopted. We consider the single-user MRT beamforming due to the low complexity of the beamforming design, which is suitable for the scenario when a large number of antennas are deployed at the BSs. For other MIMO scenarios, the equivalent channel gain follows the gamma distribution with different parameters \cite{Letaief2017MIMOsummary,Alouini16MIMOretransmission}.  Therefore, the method of the analysis here can be extended to the cache-enabled MIMO networks   with different precoding and combining strategies.   We assume that the transmitter can get the perfect channel state infomation (CSI) through the feedback from the users. The BSs do not have the CSI of the other cells due to the high BS density.

We refer to the typical  user as user $0$ and it is served by the tagged BS located at $\mathbf{x}_1$.
Due to the assumption of single-user MRT, each BS serves \emph{one} user per resource block (RB). Hence, the received signal of the typical user on its resource block is
\begin{align}\label{signalmodel}
y_0^f=\| \mathbf{x}_1 \|^{-\frac {\beta} {2}} \mathbf{h}_{0,1}^{*} \mathbf{w}_1 s_1 + \sum_{i \in \left \{ \Phi_b \setminus 1 \right \} } \| \mathbf{x}_i \|^{-\frac {\beta} {2}} \mathbf{h}_{0,i}^{*} \mathbf{w}_i s_i + n_0,
\end{align}
 where $\mathbf{w}_i$ is the beamforming vector of BS $i$ to its served user,  $f$ is the file requested by the typical user,
$\mathbf{x}_i$ is the location of BS $i$, $\mathbf{h}_{0,i} \in \mathbb{C}^{N\times 1}$ is the channel coefficient vector from BS $i$ to the typical user, $s_i \in \mathbb{C}^{1 \times 1}$ is the transmitting message of BS $i$ and $n_0$ is the additive white Gaussian noise (AWGN) at the receiver. We assume that $\mathbb{E}[{s_i}^* s_i]=P$ for any $i$ and $P$ is the transmit power of the BS. The elements of the channel coefficient vector $\mathbf{h}_{0,i}$ are independent and identical complex Gaussian random variables, i.e., $\mathcal{CN} (0,1)$. $\beta>2$ is the pathloss exponent.

For single-user MRT,  to maximize the channel gain from the BS to its served user, the beamforming at BS $i$ is  $\mathbf{w}_i=\frac {\mathbf{h}_i} {\|\mathbf{h}_i\|}$ \cite{Jeffrey08Gammadistribution}, where $\mathbf{h}_i \in  \mathbb{C}^{N \times 1}$ is the channel coefficient from BS $i$ to its served user. Thus the SIR of the typical user requiring file $f$, whether $f$ is in the cached file set or the backhaul file set, is given by
\begin{align}\label{MRTSIR}
\textrm{SIR}_f= \frac {P \| \mathbf{x}_1 \|^{-\beta} g_1 } {\sum_{i \in \left \{ \Phi_b \setminus 1 \right \} }  P \| \mathbf{x}_i \|^{-\beta} g_i },
\end{align}
where $g_{i}= \frac {\| \mathbf{h}_{0,i}^*\mathbf{h}_i \|^2} {\|\mathbf{h}_i\|^2}$ is the equivalent channel gain (including channel coefficient and the beamforming) from BS $i$ to the typical user.  It is shown  that the  equivalent channel gain from the tagged BS to its served user, i.e., $g_{1} \sim \textrm{Gamma}(N,1)$ and  $g_{i} \sim \textrm{Exp}(1), \forall i>1$ \cite{Jeffrey08Gammadistribution}. In this paper, we consider SIR for performance analysis rather than SINR due to the dense deployments of the SBSs. In simulations, we include the noise to illustrate that in dense networks  the consideration of the SIR achieves nearly the same performance as that of the SINR.

For the single-user MRT, the successful transmission probability (STP) of the typical user is defined as the probability that the SIR is larger than a threshold, i.e.,
\begin{align}\label{MRTfileSP}
P_{\textrm{s}}(\mathcal{F}_c,\mathbf{t})=\sum_{f \in \mathcal{F}_c} q_f \mathbb{P} \left(\textrm{SIR}_f>\tau \right) +\sum_{f \in \mathcal{F}_b} q_f \mathbb{P} \left( \textrm{SIR}_f>\tau, f \textrm{ transmitted through backhaul} \right),
\end{align}
where $\tau$ is the SIR threshold. As mentioned before, $f$ is transmitted by the backhaul if the number of the backhaul requested files of the tagged BS $F_{b,1}^r$ is no more than the backhaul capability $B$, or if it is been chosen to be transmitted according to the uniform distribution in the event where $F_{b,1}^r >B$. Note that the STP  is related to the random variables $F_{b,1}^r$ and $\textrm{SIR}_f$ .

We use the area spectrum efficiency (ASE) as the metric to describe the average spectrum efficiency per area. The ASE of the single-user MRT is defined as \cite{Baccelli06multihop,Quek12Throughput}
\begin{align}\label{ASEMRT}
R_{}(\mathcal{F}_c,\mathbf{t})=\lambda_b  P_{\textrm{s}}(\mathcal{F}_c,\mathbf{t}) \log_2(1+\tau).
\end{align}
where the unit is bit/s/Hz/$\textrm{km}^2$. Note that the ASE reveals the relationship between the BS density and the network capacity.

\subsection{Problem Formulation}
Under given backhaul capability $B$ and cache size $C$, the caching strategy, i.e., the file allocation strategy and the cache placement strategy, fundamentally affects the ASE. We study the problem of maximizing the ASE via a careful design of file allocation $\mathcal{F}_c$ and cache placement $\mathbf{t}$  as follows
\begin{align}\label{problemformulation}
 &\max_{\mathcal{F}_c,\mathbf{t}} ~~R_{}(\mathcal{F}_c,\mathbf{t})  \\
 &s.t ~~~(\ref{fileconstraint}),(\ref{T01}),(\ref{Ttotal}). \notag
\end{align}

We will derive the expressions of the STP and the ASE  in Section \ref{Performance Analysis} and solve the ASE maximization problem, i.e., the problem in (\ref{problemformulation}), in Section \ref{ASEoptimization}.

\section{Performance Analysis}\label{Performance Analysis}
In this section, we first derive an exact expression of the STP and ASE under given file allocation and cache placement strategy, i.e., under given $\mathcal{F}_c$ and $\mathbf{t}$.  Then we utilize a gamma distribution approximation to obtain a simpler upper bound of the STP and ASE.

\subsection{Exact Expression }

In this part, we derive an exact expression of the STP and the ASE using tools from stochastic geometry. In general, the STP is related to the number of the backhaul request file of the tagged BS, i.e., $F_{b,1}^r$. Therefore, to obtain the STP, we first calculate the probability mass function (PMF) of $F_{b,1}^r$.

\begin{lemma}\label{fbrspmf}
(pmf of $F_{b,1}^r$) When $f \in \mathcal{F}_b$ is requested by the tagged BS, the pmf of $F_{b,1}^r$ is given by
\begin{align}\label{pmfkbr}
\mathbb{P}_f^{\mathcal{F}_b}\left(F_{b,1}^r=k \right )= g \left( \{ \mathcal{F}_b \setminus f \},k-1 \right), \quad k \in\{1,2,\cdots, F_b \},
\end{align}
where $g(\mathcal{B},k) $  is given by
\begin{align}\label{fcfbexpression}
g \left(\mathcal{B},k \right) \overset{\bigtriangleup} = \sum_{\mathcal{Y} \in \left \{\mathcal{X} \subseteq \mathcal{B} : |\mathcal{X} |=k \right \} }
\prod_{i \in \mathcal{Y} } \left(1- {\left(1+ \frac {q_i \lambda_u} {3.5  \lambda_b} \right)}^{-4.5} \right)
\prod_{i \in \mathcal{B} \setminus \mathcal{Y} } {\left(1+ \frac {q_i  \lambda_u} {3.5 \lambda_b} \right)}^{-4.5}.
\end{align}
\end{lemma}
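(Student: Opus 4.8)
The plan is to condition on the event that the typical user requests the fixed backhaul file $f\in\mathcal{F}_b$ and to decompose $F_{b,1}^r$ into a sum of per-file indicators. First I would write
\[
F_{b,1}^r = 1 + \sum_{i\in\mathcal{F}_b\setminus f}\mathbf{1}_i,
\]
where $\mathbf{1}_i$ indicates that at least one user served by the tagged BS (BS $1$) requests file $i$; the leading $1$ accounts for file $f$, which is surely requested by the typical user, and since $F_{b,1}^r$ counts \emph{distinct} backhaul files, $f$ contributes exactly once regardless of how many other users also request it. The claim then reduces to showing that $\sum_{i\in\mathcal{F}_b\setminus f}\mathbf{1}_i$ follows a Poisson--binomial law with per-file success probabilities $p_i = 1-(1+q_i\lambda_u/(3.5\lambda_b))^{-4.5}$, because $g(\{\mathcal{F}_b\setminus f\},k-1)$ is precisely the probability that such a variable equals $k-1$.

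The core computation is the marginal request probability $p_i$. The users served by BS $1$ under distance-based association are exactly those falling in its Voronoi cell. Conditioned on the typical user at the origin being served by BS $1$, this is the cell containing the origin, whose area is \emph{size-biased} relative to the typical Poisson--Voronoi cell. I would invoke the standard gamma approximation for the typical cell area, which, normalized by its mean $1/\lambda_b$, is $\mathrm{Gamma}$ with shape and rate both $3.5$; size-biasing raises the shape to $4.5$ while leaving the rate at $3.5$, and this is exactly the origin of the exponent $4.5$ in the statement (rather than $3.5$). Writing the cell area as $A=S/\lambda_b$ with $S\sim\mathrm{Gamma}(4.5,3.5)$, and noting that users requesting file $i$ form an independent thinning of $\Phi_u$ with density $q_i\lambda_u$, the number of such users in the cell is Poisson with mean $q_i\lambda_u A$ given $A$. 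Hence the probability that file $i$ is \emph{not} requested is the void probability $\mathbb{E}_S[\exp(-q_i\lambda_u S/\lambda_b)]$, and evaluating this Laplace transform of the gamma density gives $(1+q_i\lambda_u/(3.5\lambda_b))^{-4.5}$, so $p_i$ is its complement.

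It then remains to assemble the indicators into the stated PMF. Given the cell area $A$, the counts of users requesting distinct files are independent (independent marking of the user PPP), so the $\mathbf{1}_i$ are conditionally independent $\mathrm{Bernoulli}(1-\exp(-q_i\lambda_u A))$. Treating them as independent with the marginal probabilities $p_i$, the sum is Poisson--binomial, whose PMF at $k-1$ is the sum over all size-$(k-1)$ subsets $\mathcal{Y}\subseteq\mathcal{F}_b\setminus f$ of $\prod_{i\in\mathcal{Y}}p_i\prod_{i\in(\mathcal{F}_b\setminus f)\setminus\mathcal{Y}}(1-p_i)$. Identifying $p_i$ with $1-(1+q_i\lambda_u/(3.5\lambda_b))^{-4.5}$ and $1-p_i$ with $(1+q_i\lambda_u/(3.5\lambda_b))^{-4.5}$, this is exactly $g(\{\mathcal{F}_b\setminus f\},k-1)$, and shifting by the sure request of $f$ yields $\mathbb{P}_f^{\mathcal{F}_b}(F_{b,1}^r=k)=g(\{\mathcal{F}_b\setminus f\},k-1)$ for $k\in\{1,\dots,F_b\}$.

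I expect the main obstacle to be the independence step. The indicators $\mathbf{1}_i$ share the common random cell area $A$, so they are only \emph{conditionally} independent; the genuinely exact probability would average the conditional Poisson--binomial over the law of $A$, rather than feeding the marginal $p_i=\mathbb{E}_A[1-\exp(-q_i\lambda_u A)]$ directly into a product form. The clean product-form PMF therefore relies on decoupling the files through the gamma cell-area approximation, and the care required is in arguing that this cross-file correlation is negligible, together with getting the size-biasing right so that the shape parameter is $4.5$ rather than $3.5$.
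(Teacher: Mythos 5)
Your proposal follows essentially the same route as the paper's proof: decompose $F_{b,1}^r$ into the surely-requested file $f$ plus per-file request indicators over $\mathcal{F}_b\setminus f$, and assemble them into the Poisson--binomial sum that defines $g(\{\mathcal{F}_b\setminus f\},k-1)$. The only differences are that you derive the marginal probability $p_i=1-(1+q_i\lambda_u/(3.5\lambda_b))^{-4.5}$ from the size-biased gamma cell-area approximation (the paper simply cites a reference for this), and you explicitly flag that the indicators are only conditionally independent given the common cell area --- a decoupling approximation the paper uses silently --- so your write-up is, if anything, more complete.
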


\emph{Proof:} See Appendix \ref{Proofloadspmf}.

Based on Lemma \ref{fbrspmf}, we have the following corollary.

\begin{corollary}\label{asymloads}
(The pmf of $F_{b,1}^r$ when $\lambda_u \rightarrow \infty$). When $\lambda_u \rightarrow \infty$, the pmf of $F_{b,1}^r$ is
\begin{align}
\lim_{\lambda_u \rightarrow \infty}\mathbb{P}_f^{\mathcal{F}_b}\left(F_{b,1}^r=k \right )= \left \{ \begin{aligned}  &0,~~ k=1,2,\cdots,F_b-1  \\
&1,~~k=F_b
\end{aligned} \right..
\end{align}
\end{corollary}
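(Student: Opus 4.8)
The plan is to work directly from the closed form for $g(\cdot,\cdot)$ supplied by Lemma~\ref{fbrspmf} and to track what happens to each factor as $\lambda_u\to\infty$. Set $\mathcal{B}=\{\mathcal{F}_b\setminus f\}$, so that $|\mathcal{B}|=F_b-1$, and for each $i\in\mathcal{B}$ write $p_i=1-\left(1+\frac{q_i\lambda_u}{3.5\lambda_b}\right)^{-4.5}$. Then Lemma~\ref{fbrspmf} reads $\mathbb{P}_f^{\mathcal{F}_b}(F_{b,1}^r=k)=g(\mathcal{B},k-1)$, and inspecting the definition of $g$ one sees that $g(\mathcal{B},k-1)$ is precisely the Poisson-binomial probability that exactly $k-1$ of the $F_b-1$ independent Bernoulli trials with success probabilities $\{p_i\}_{i\in\mathcal{B}}$ succeed: the factors $\bigl(1-(\cdots)^{-4.5}\bigr)$ over $\mathcal{Y}$ are the ``successes'' and the factors $(\cdots)^{-4.5}$ over $\mathcal{B}\setminus\mathcal{Y}$ are the ``failures.'' The single observation driving the whole argument is that the Zipf law makes $q_i>0$ for every file, so $\left(1+\frac{q_i\lambda_u}{3.5\lambda_b}\right)^{-4.5}\to 0$ and hence $p_i\to 1$ as $\lambda_u\to\infty$.

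For the case $k=F_b$ I would note that the only subset of $\mathcal{B}$ of cardinality $F_b-1=|\mathcal{B}|$ is $\mathcal{B}$ itself, so the sum defining $g(\mathcal{B},F_b-1)$ collapses to the single term $\prod_{i\in\mathcal{B}}p_i$. Since each $p_i\to 1$ and the product is finite, $g(\mathcal{B},F_b-1)\to 1$, which is exactly the claimed limit $\mathbb{P}_f^{\mathcal{F}_b}(F_{b,1}^r=F_b)\to 1$.

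For $1\le k\le F_b-1$ I would argue that every subset $\mathcal{Y}\subseteq\mathcal{B}$ with $|\mathcal{Y}|=k-1<|\mathcal{B}|$ leaves the complement $\mathcal{B}\setminus\mathcal{Y}$ nonempty, so each summand of $g(\mathcal{B},k-1)$ carries at least one vanishing factor $\left(1+\frac{q_j\lambda_u}{3.5\lambda_b}\right)^{-4.5}\to 0$, while all of its remaining factors lie in $[0,1]$ and are therefore bounded. Hence every summand tends to $0$, and since $g(\mathcal{B},k-1)$ is a finite sum of such summands, $g(\mathcal{B},k-1)\to 0$, giving $\mathbb{P}_f^{\mathcal{F}_b}(F_{b,1}^r=k)\to 0$.

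There is no genuine obstacle here beyond bookkeeping; the only point requiring a little care is the uniformity of the bound in the $k<F_b$ case, handled simply by noting that there are at most $\binom{F_b-1}{k-1}$ summands and each is a product of at most $F_b-1$ factors in $[0,1]$ multiplied by a factor that vanishes. As a consistency check, the two limits agree with the normalization $\sum_{k=1}^{F_b}\mathbb{P}_f^{\mathcal{F}_b}(F_{b,1}^r=k)=1$, since the entire probability mass concentrates on $k=F_b$; intuitively, when the user density grows without bound, every one of the $F_b$ backhaul files is requested at the tagged BS almost surely.
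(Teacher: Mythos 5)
Your proof is correct and follows exactly the route the paper intends: the corollary is stated as an immediate consequence of Lemma~\ref{fbrspmf}, and your argument simply fills in the details of taking $\lambda_u\to\infty$ in the closed form of $g(\mathcal{B},k-1)$, using $q_i>0$ so that each factor $\left(1+\frac{q_i\lambda_u}{3.5\lambda_b}\right)^{-4.5}$ vanishes. Nothing is missing.
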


Corollary \ref{asymloads} interprets that  $F_{b,1}^r$ converges to constant $F_b$ in distribution as $\lambda_u \rightarrow \infty$. The asymptotic result is consistent with the fact that when the user density is high, each BS will have many users connected to it, and thus, each BS will require all the backhaul files.

We then calculate the STP,  which is defined in (\ref{MRTfileSP}). Base on Lemma \ref{fbrspmf}, we can rewrite the STP as a combination of the STP conditioned on the given $F_{b,1}^r$. Therefore, we can obtain the STP in Theorem \ref{MRTsuccessP}.

\begin{theorem}\label{MRTsuccessP}
(STP) The STP is given by
\begin{align}\label{MRTsuccessall}
P_{\textrm{s}}(\mathcal{F}_c,\mathbf{t})&= \sum_{f \in \mathcal{F}_c} q_f  P_{\textrm{s}}^{f,c} \left(t_f \right) + \sum_{f \in \mathcal{F} \setminus \mathcal{F}_c} q_f  \sum_{k=1}^{F_b} \mathbb{P}_f^{\mathcal{F}_b}\left(F_{b,1}^r=k \right ) \frac{B} {\max\left(k,B\right)} P_{\textrm{s}}^{b},
\end{align}
where  $\mathbb{P}_f^{\mathcal{F}_b}\left(F_{b,1}^r=k \right )$ is given in (\ref{pmfkbr}), $q_f$ is given in (\ref{Zipfdistribution}), $P_{\textrm{s}}^{f,c} \left( t_f \right)$ and $P_{\textrm{s}}^{b}$ are the STPs of the cached file $f \in \mathcal{F}_c$ and the backhaul file $f \in \mathcal{F} \setminus \mathcal{F}_c$, which are given by
\begin{align}
&P_{\textrm{s}}^{f,c}(t_f)=\frac {t_f} {t_f+ l_{0}^{c,f}} \left \| \left[ \mathbf{I}- \left(\frac {\tau^{2 \backslash \beta}} {t_f+ l_{0}^{c,f}}\right)\  \mathbf{D}_{}^{c,f} \right]^{-1} \right \|_1, \label{PsuccessfC} \\
&P_{\textrm{s}}^{b}=\frac {1} {1+ l_{0}^{b,f}} \left \| \left[ \mathbf{I}- \left(\frac {\tau^{2 \backslash \beta}} {1+ l_{0}^{b,f}}\right)\  \mathbf{D}_{}^{b,f} \right]^{-1} \right \|_1, \label{PsuccessfB}
\end{align}
where $ \|\cdot \|_1$is the $l_1$ induced matrix norm (i.e, $|| \mathbf{B} ||_1= \max_{1 \leq j \leq n} \sum_{i=1}^m |b_{ij}|, ~\mathbf{B} \in \mathbb{R}^{m \times n}$),  $\mathbf{I}$ is an $N \times N$ identity matrix,
$\mathbf{D}_{}^{c,f}$ and $\mathbf{D}_{}^{b,f}$ are $N \times N$ Toeplitz matrices of the cached file $f \in \mathcal{F}_c$ and the backhaul file $f \in \mathcal{F} \setminus \mathcal{F}_c$, which are given by
 \begin{align}\label{matrixT}
\mathbf{D}_{}^{n,f}=\begin{bmatrix}
0 &  \\
l_{1}^{n,f} &0 \\
l_{2}^{n,f} & l_{1}^{n,f} &0  \\
\vdots &\vdots & &\ddots \\
l_{N}^{n,f}& l_{N-1,{}}^{n,f} &\cdots &l_{1}^{n,f} & 0
\end{bmatrix},~n \in \{c,b\},
\end{align}
where $l_{0}^{c,f}$, $l_{0}^{b,f}$, $l_{i}^{c,f}$ and $l_{i}^{b,f}$ are given by
\begin{align}
l_{0}^{c,f}&= t_f \frac{  2\tau} {\beta-2} {}_2 F_1 \left[1,1-\frac{2} {\beta};2-\frac{2} {\beta};-\tau\ \right]  +(1-t_f) \frac {2 \pi} {\beta} \csc \left(\frac{2 \pi} {\beta} \right) \tau^{2 \backslash \beta}, \\
l_{0}^{b,f}&=\frac{  2\tau} {\beta-2} {}_2 F_1 \left[1,1-\frac{2} {\beta};2-\frac{2} {\beta};-\tau\ \right] , \\
l_{i}^{c,f}&=(1-t_f)\frac{2} {\beta} B(\frac{2} {\beta}+1,i-\frac{2} {\beta})  + t_f\frac{2 \tau^{i-2 \backslash \beta}} {i \beta-2} {}_2 F_1 \left[i+1,i-\frac{2} {\beta};i+1-\frac{2} {\beta};-\tau\ \right], \forall i \geq 1, \\
l_{i}^{b,f}&=\frac{2 \tau^{i-2 \backslash \beta}} {i \beta-2} {}_2 F_1 \left[i+1,i-\frac{2} {\beta};i+1-\frac{2} {\beta};-\tau\ \right], \forall i \geq 1.
\end{align}
Here,  ${}_2 F_1(\cdot)$ is the Gauss hypergeometric function and $B(\cdot)$ is the Beta function.
\end{theorem}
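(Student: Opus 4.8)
The plan is to prove (\ref{MRTsuccessall}) in two stages: first obtain the outer decomposition over file types from the load distribution of Lemma~\ref{fbrspmf}, and then derive the two per-file success probabilities $P_{\textrm{s}}^{f,c}$ and $P_{\textrm{s}}^{b}$ by a stochastic-geometry computation that terminates in the Toeplitz representation.

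Starting from the definition (\ref{MRTfileSP}), I would treat the two sums separately. For a cached file $f\in\mathcal{F}_c$ the typical user is served by its nearest $f$-cached BS, so $\mathbb{P}(\textrm{SIR}_f>\tau)=P_{\textrm{s}}^{f,c}(t_f)$ directly, which reproduces the first sum. For a backhaul file $f\in\mathcal{F}\setminus\mathcal{F}_c$ I would condition on $F_{b,1}^r=k$ and treat the backhaul-selection event as independent of the SIR, consistent with the cell-load model underlying Lemma~\ref{fbrspmf}, which depends on the densities rather than on the serving distance. Given $F_{b,1}^r=k$, the tagged BS delivers all requested files when $k\le B$ and otherwise selects $B$ of them uniformly, so the probability that the user's file $f$ is among the transmitted ones equals $1$ for $k\le B$ and $B/k$ for $k>B$, i.e. $B/\max(k,B)$ in both regimes. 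Averaging this selection probability against the pmf $\mathbb{P}_f^{\mathcal{F}_b}(F_{b,1}^r=k)$ in (\ref{pmfkbr}) and multiplying by $P_{\textrm{s}}^{b}=\mathbb{P}(\textrm{SIR}_f>\tau)$ gives the second sum.

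The core is the computation of $P_{\textrm{s}}^{b}$ and $P_{\textrm{s}}^{f,c}$. Writing $\tilde{I}=\|\mathbf{x}_1\|^{\beta}\sum_{i>1}\|\mathbf{x}_i\|^{-\beta}g_i$ for the interference normalized by the signal distance, and using $g_1\sim\textrm{Gamma}(N,1)$ with complementary CDF $\mathbb{P}(g_1>x)=e^{-x}\sum_{n=0}^{N-1}x^n/n!$ together with $g_i\sim\textrm{Exp}(1)$, I would express the success probability as a finite sum of normalized derivatives of the Laplace transform, $\mathbb{P}(\textrm{SIR}_f>\tau)=\sum_{n=0}^{N-1}\frac{(-\tau)^n}{n!}\mathcal{L}_{\tilde{I}}^{(n)}(\tau)$. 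I would then evaluate $\mathcal{L}_{\tilde{I}}$ by the PPP probability generating functional: conditioning on the serving distance, the $\textrm{Exp}(1)$ interference produces factors $(1+z\|\mathbf{x}_1\|^{\beta}r^{-\beta})^{-1}$, and since the resulting exponent is proportional to $\|\mathbf{x}_1\|^2$ the average over the nearest-neighbour distance collapses to a reciprocal, namely $\mathcal{L}_{\tilde{I}}(z)=1/(1+\mathcal{J}(z))$ for the backhaul case and $\mathcal{L}_{\tilde{I}}(z)=t_f/(t_f+\mathcal{J}_c(z))$ for the content-centric case. The content-centric case requires splitting the interferers into the $f$-cached BSs (density $t_f\lambda_b$, confined to $r>\|\mathbf{x}_1\|$) and the non-$f$-cached BSs (density $(1-t_f)\lambda_b$, ranging over all $r$ since they can never serve file $f$), which is precisely what produces the two-term coefficients $l_0^{c,f}$ and $l_i^{c,f}$, whereas the backhaul case keeps only the $r>\|\mathbf{x}_1\|$ population. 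Evaluating the radial integrals in closed form (a Gauss hypergeometric function for the $r>\|\mathbf{x}_1\|$ part and a Beta/cosecant term for the full-plane part) identifies $\mathcal{J}(\tau)=l_0^{b,f}$, $\mathcal{J}_c(\tau)=l_0^{c,f}$, and the normalized derivatives with the $l_i^{n,f}$ of the statement.

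Finally I would convert the finite derivative sum into the matrix norm. Differentiating $(1+\mathcal{J})\mathcal{L}=1$ (respectively $(t_f+\mathcal{J}_c)\mathcal{L}=t_f$) by the Leibniz rule and normalizing yields a convolution recursion $a_n=\frac{\tau^{2/\beta}}{t_f+l_0^{c,f}}\sum_{k=1}^{n}l_k^{c,f}a_{n-k}$ for $a_n=\frac{(-\tau)^n}{n!}\mathcal{L}^{(n)}(\tau)$, with $a_0$ equal to the prefactor $t_f/(t_f+l_0^{c,f})$; the $\tau^{2/\beta}$ here arises from re-expressing the raw derivatives through the closed-form coefficients. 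In vector form this reads $\big(\mathbf{I}-\frac{\tau^{2/\beta}}{t_f+l_0^{c,f}}\mathbf{D}^{c,f}\big)\mathbf{a}=a_0\mathbf{e}_1$, so $\mathbf{a}$ is $a_0$ times the first column of the inverse and $P_{\textrm{s}}^{f,c}=\sum_{n=0}^{N-1}a_n$ is $a_0$ times that column's sum. Because $\mathbf{D}^{c,f}$ is strictly lower triangular with nonnegative entries, the inverse is a nonnegative lower-triangular Toeplitz matrix whose largest column sum is the first one, so the first-column sum equals the induced $\ell_1$ norm $\|\cdot\|_1$, giving (\ref{PsuccessfC}); the same argument with a single interferer population gives (\ref{PsuccessfB}). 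I expect the third step to be the main obstacle: obtaining the reciprocal closed form for $\mathcal{L}_{\tilde{I}}$ and, above all, the passage from the partial derivative sum to the Toeplitz-inverse norm, which is the heart of the Toeplitz-representation technique of \cite{Changli14SmallCell} and where the $\tau^{2/\beta}$ bookkeeping and the nonnegativity argument for the $\ell_1$ norm must be handled with care.
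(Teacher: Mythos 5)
Your proposal is correct and follows essentially the same route as the paper: the Gamma$(N,1)$ tail written as a finite sum of Laplace-transform derivatives, the PGFL applied separately to the $f$-cached interferers (outside $\|\mathbf{x}_1\|$) and the non-$f$-cached interferers (over the whole plane), and the Toeplitz-matrix representation of \cite{Changli14SmallCell} closing the recursion into the $\ell_1$-norm of the matrix inverse. The only difference is one of ordering --- you average over the serving distance first to get the reciprocal form $t_f/(t_f+\mathcal{J}_c(z))$ and then apply Leibniz, whereas the paper differentiates the conditional exponential-form Laplace transform, builds the series $\sum_i \frac{1}{i!}a^i\mathbf{D}^i$, and obtains the matrix inverse by taking the distance expectation at the end; the two computations commute and yield identical coefficients $l_i^{n,f}$.
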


\begin{proof}
 Considering the equivalent channel gain $g_{1} \sim \textrm{Gamma}(N,1)$, the STP is a complex $n$-th derivative of the interference Laplace transform  $\mathcal{L}_{I_{}} \left(s \right)$ \cite{Jeffrey13MIMOHetNet}.  Utilizing the approach in \cite{Changli14SmallCell}, we obtain the expressions of the STPs in lower triangular Toeplitz matrix representation as (\ref{PsuccessfC}) and (\ref{PsuccessfB}). For details, please see Appendix \ref{ProofMRTperformance}.
 \end{proof}

According to (\ref{ASEMRT}), we then obtain the ASE under given $\mathcal{F}_c$ and $\mathbf{t}$
\begin{align}\label{ASEMRTresult}
R_{}(\mathcal{F}_c,\mathbf{t})=\lambda_b  P_{\textrm{s}}(\mathcal{F}_c,\mathbf{t}) \log_2(1+\tau).
\end{align}

For backhaul-limited multi-antenna networks, the change of the BS density $\lambda_b$ and user density $\lambda_u$ influence $P_{\textrm{s}}(\mathcal{F}_c,\mathbf{t})$ via the pmf  of the backhaul request file, i.e., $ \mathbb{P} \left(F_{b,1}^r=k \right)$. However, when   $\lambda_u$ approaches infinity and $\lambda_b$ remains finite, since all the backhaul files are requested by the tagged BS (shown in Corollary \ref{asymloads}), $\mathbb{P}_f^{\mathcal{F}_b}\left(F_{b,1}^r=k \right)$ is no longer related to $\lambda_b$. Therefore, when the user density is high and the design parameter $\mathcal{F}_c$ and $\mathbf{t}$  are given,  deploying more BSs will always increase the ASE.

From Theorem \ref{MRTsuccessP} and the definition of the ASE, we can derive a tractable expression of the ASE for $N=1$, i.e., the backhaul-limited single-antenna networks. The ASE of the backhaul-limited single-antenna networks is given in the following corollary.
\begin{corollary}\label{STPSISO}
(ASE of Single-Antenna Networks) The ASE of the cache-enabled single-antenna networks with limited backhaul is given by
\begin{align}\label{PSISIO}
R_{\textrm{SA}} (\mathcal{F}_c,\mathbf{t})=\lambda_b \log_2(1+\tau) \left( \sum_{f \in \mathcal{F}_c} \frac { q_f t_f} {\zeta_1 (\tau) t_f +\zeta_2}+ \sum_{f \in \mathcal{F} \setminus \mathcal{F}_c}  \sum_{k=1}^{F_b} \frac {\mathbb{P}_f^{\mathcal{F}_b}\left(F_{b,1}^r=k \right ) q_f  B } {\max \left( k,B \right) \left(\zeta_1 (\tau) +\zeta_2 \right)} \right),
\end{align}
where $\zeta_1 (\tau)$ and $\zeta_2 (\tau)$ satisfy
\begin{align}
&\zeta_1 (\tau)=   1+\frac{  2\tau} {\beta-2} {}_2 F_1 \left[1,1-\frac{2} {\beta};2-\frac{2} {\beta};-\tau\ \right] -\frac {2 \pi} {\beta} \csc \left(\frac{2 \pi} {\beta} \right) \tau^{2 \backslash \beta}, \label{expressionzeta1} \\
&\zeta_2 (\tau)= \frac {2 \pi} {\beta} \csc \left(\frac{2 \pi} {\beta} \right) \tau^{2 \backslash \beta}  \label{expressionzeta2}.
\end{align}
\end{corollary}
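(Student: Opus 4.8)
The plan is to specialize Theorem \ref{MRTsuccessP} to the case $N = 1$, where the matrix expressions collapse to scalars, and then perform the algebraic bookkeeping that rewrites the constants in terms of $\zeta_1(\tau)$ and $\zeta_2(\tau)$. First I would observe that for $N = 1$ the Toeplitz matrices $\mathbf{D}^{c,f}$ and $\mathbf{D}^{b,f}$ in (\ref{matrixT}) reduce to the $1 \times 1$ zero matrix: the diagonal of the structure in (\ref{matrixT}) is zero, and for $N = 1$ there are no subdiagonal entries $l_i^{n,f}$ with $i \geq 1$ to survive. Consequently $\mathbf{I}$ is the scalar $1$, so $\mathbf{I} - \left(\tau^{2/\beta}/(t_f + l_0^{c,f})\right)\mathbf{D}^{c,f} = [1]$, its inverse is $[1]$, and the $l_1$ induced norm is simply $1$. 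Substituting into (\ref{PsuccessfC}) and (\ref{PsuccessfB}) then yields the scalar success probabilities $P_{\textrm{s}}^{f,c}(t_f) = t_f/(t_f + l_0^{c,f})$ and $P_{\textrm{s}}^{b} = 1/(1 + l_0^{b,f})$.

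Next I would rewrite the two denominators in terms of $\zeta_1(\tau)$ and $\zeta_2(\tau)$. Writing $A \triangleq \frac{2\tau}{\beta-2}\,{}_2F_1\!\left[1,1-\tfrac{2}{\beta};2-\tfrac{2}{\beta};-\tau\right]$ and $Z \triangleq \frac{2\pi}{\beta}\csc\!\left(\tfrac{2\pi}{\beta}\right)\tau^{2/\beta}$, the definitions of $l_0^{c,f}$ and $l_0^{b,f}$ give $l_0^{c,f} = t_f A + (1-t_f)Z$ and $l_0^{b,f} = A$, while (\ref{expressionzeta1}) and (\ref{expressionzeta2}) give $\zeta_1 = 1 + A - Z$ and $\zeta_2 = Z$. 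A short computation then shows $t_f + l_0^{c,f} = t_f\bigl[1 + (A - Z)\bigr] + Z = \zeta_1(\tau)\,t_f + \zeta_2$ and $1 + l_0^{b,f} = 1 + A = \zeta_1(\tau) + \zeta_2$, so that $P_{\textrm{s}}^{f,c}(t_f) = t_f/(\zeta_1(\tau)t_f + \zeta_2)$ and $P_{\textrm{s}}^{b} = 1/(\zeta_1(\tau) + \zeta_2)$.

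Finally I would insert these two closed forms into the STP expression (\ref{MRTsuccessall}) and multiply by $\lambda_b \log_2(1+\tau)$ as in (\ref{ASEMRT}) to recover exactly (\ref{PSISIO}). I do not anticipate a genuine obstacle here, since the result follows by direct substitution once the matrices are recognized as scalars; the only step requiring care is the bookkeeping identity showing that the common hypergeometric term $A$ and the cosecant term $Z$ recombine into $\zeta_1$ and $\zeta_2$. This must be checked separately for the cached-file and backhaul-file constants, because $l_0^{c,f}$ carries the $t_f$-weighting of $A$ together with the complementary $(1-t_f)$-weighting of $Z$, whereas $l_0^{b,f}$ consists of the bare term $A$ alone.
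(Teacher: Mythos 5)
Your proposal is correct and matches the paper's route: the paper simply remarks that the corollary follows as in the proof of Theorem \ref{MRTsuccessP} with $g_1 \sim \textrm{Exp}(1)$, which is precisely the $N=1$ (i.e., $\textrm{Gamma}(1,1)$) specialization you carry out, with the Toeplitz matrices collapsing to $[0]$ and the norms to $1$. Your bookkeeping identities $t_f + l_0^{c,f} = \zeta_1(\tau) t_f + \zeta_2$ and $1 + l_0^{b,f} = \zeta_1(\tau) + \zeta_2$ both check out, so the substitution into (\ref{MRTsuccessall}) and (\ref{ASEMRT}) completes the argument.
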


The proof of Corollary \ref{STPSISO} is similar to the proof of Theorem \ref{MRTsuccessP} except that the equivalent channel gain $g_{1} \sim  \textrm{Exp}(1)$.

\subsection{Upper Bound  and Asymptotic Analysis}

In this part, we first derive an  upper bound of the STP under given $\mathcal{F}_c$ and $\mathbf{t}$.  We then give the asymptotic analytic results in high user density region. First, we introduce a useful lemma to present a lower bound of the gamma distribution.

\begin{lemma}\label{Gammaupperbound}
\cite{alzer1997some}: Let $g$ be a gamma random variable follows $\text{Gamma}(M,1)$. The probability $\mathbb{P} (g<\tau)$ can be lower bounded by
\begin{align}
\mathbb{P} (g<\tau)>{\left[1-e^{-a \tau} \right]}^M,
\end{align}
where $\alpha={(M!)}^{-\frac {1} {M}}$.
\end{lemma}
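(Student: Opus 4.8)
The plan is to prove the lower bound
\begin{align}
\mathbb{P}(g<\tau) > \left[1-e^{-\alpha\tau}\right]^M, \qquad \alpha=(M!)^{-1/M},\notag
\end{align}
by comparing the cumulative distribution function $G(\tau)=\mathbb{P}(g<\tau)$ of a $\mathrm{Gamma}(M,1)$ variable against the candidate function $H(\tau)=\left[1-e^{-\alpha\tau}\right]^M$ on $[0,\infty)$. First I would record the relevant analytic facts: $G$ is the regularized lower incomplete gamma function, so $G(\tau)=1-e^{-\tau}\sum_{j=0}^{M-1}\tau^{j}/j!$, with $G(0)=0$, $G(\infty)=1$, and density $G'(\tau)=\tau^{M-1}e^{-\tau}/(M-1)!$. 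Likewise $H(0)=0$ and $H(\infty)=1$, so both functions share the same endpoint values; the claim is that the incomplete-gamma CDF strictly dominates the power-of-exponential form in between. The overall strategy is therefore to study the difference $\Delta(\tau)=G(\tau)-H(\tau)$ and show $\Delta(\tau)>0$ for all $\tau>0$.

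Since a direct sign analysis of $\Delta$ is awkward, I would instead compare the two functions through their behavior near the endpoints together with a convexity/monotonicity argument, which is essentially the route taken in \cite{alzer1997some}. The key is the choice of the constant $\alpha=(M!)^{-1/M}$: it is precisely the value that makes the two CDFs agree to the right order in the small-$\tau$ regime. Concretely, as $\tau\to 0^{+}$ one has $G(\tau)\sim \tau^{M}/M!$ from the leading term of the incomplete gamma series, while $H(\tau)=\left[1-e^{-\alpha\tau}\right]^M\sim(\alpha\tau)^{M}=\alpha^{M}\tau^{M}$. Setting $\alpha^{M}=1/M!$ forces these leading terms to coincide, i.e. $\alpha=(M!)^{-1/M}$, so that the gap $\Delta(\tau)$ vanishes to higher order at the origin and its sign is governed by the next-order term. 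I would make this precise and then show that once $\Delta$ starts out nonnegative it cannot become negative before $\tau=\infty$, using that any interior zero of $\Delta$ would force an incompatible relation between $G'$ and $H'$ at that point.

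The cleaner way to run the monotonicity step is to pass to the ratio or to the logarithmic derivative. Writing $H(\tau)^{1/M}=1-e^{-\alpha\tau}$ and $u(\tau)=G(\tau)^{1/M}$ suggests comparing $u(\tau)$ with $1-e^{-\alpha\tau}$ directly; since $x\mapsto x^{M}$ is increasing on $[0,1]$, it suffices to show $G(\tau)^{1/M}\ge 1-e^{-\alpha\tau}$ with equality only at the endpoints. I would differentiate, reduce the inequality to a statement about $G'(\tau)$ versus $M\,G(\tau)^{(M-1)/M}\alpha e^{-\alpha\tau}$, and use the explicit density $G'(\tau)=\tau^{M-1}e^{-\tau}/(M-1)!$ to turn this into an elementary inequality in $\tau$. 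I expect the main obstacle to be exactly this derivative comparison: controlling the incomplete-gamma quantity $G(\tau)^{(M-1)/M}$ against the elementary expression $\tau^{M-1}e^{-\tau}$ uniformly in $\tau$ is delicate, and it is where the special value of $\alpha$ must be used a second time, not merely at the origin. Since this inequality is established in the cited reference \cite{alzer1997some}, I would either invoke it directly or reproduce the endpoint-plus-monotonicity argument above, taking care that the inequality is \emph{strict} for $0<\tau<\infty$ as the lemma asserts.
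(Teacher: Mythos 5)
The paper gives no proof of this lemma --- it is imported verbatim from the cited reference \cite{alzer1997some} --- and your proposal ultimately does the same, deferring the decisive derivative-comparison step to that reference, so the two treatments coincide in substance. Your supporting observations are correct: the constant $\alpha=(M!)^{-1/M}$ is exactly the value that matches the leading $\tau^{M}/M!$ behaviour of both sides at the origin, and the endpoint-plus-monotonicity scheme you outline is indeed the structure of Alzer's argument (note only that the strict inequality requires $M\ge 2$; for $M=1$ one has $\alpha=1$ and the two sides are identical).
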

Utilizing the above lemma, we then obtain the upper bound of the STP as follows.

\begin{theorem}\label{upperboundMRT}
(Upper Bound of STP) The upper bound of the STP is given by
\begin{align}\label{MRTsuccessallupper}
P_{\textrm{s}}^u (\mathcal{F}_c,\mathbf{t})&= \sum_{f \in \mathcal{F}_c} q_f  P_{\textrm{s}}^{u,f,c} (t_f) + \sum_{f \in \mathcal{F} \setminus \mathcal{F}_c} q_f  \sum_{k=1}^{F_b} \mathbb{P}_f^{\mathcal{F}_b}\left(F_{b,1}^r=k \right ) \frac{B} {\max\left(k,B\right)} P_{\textrm{s}}^{u,b},
\end{align}
where $P_{\textrm{s}}^{u,f,c}(t_f)$ and $P_{\textrm{s}}^{u,b}$ are the upper bounds of the STPs of the cached file $f \in \mathcal{F}_c$ and the backhaul file $f \in \mathcal{F} \setminus \mathcal{F}_c$, which are given by
\begin{align}
&P_{\textrm{s}}^{u,f,c} (t_f)= \sum_{i=1}^{N} \frac { (-1)^{i+1} \binom {N} {i} t_f} {(\theta_A \left(i \right) t_f +\theta_C \left(i \right))}, \label{uppercachefile} \\
&P_{\textrm{s}}^{u,b}=\sum_{i=1}^{N} \frac {(-1)^{i+1} \binom {N} {i} } {\left(\theta_A \left(i \right)+\theta_C \left(i \right)\right)} , \label{upperbackhaulfile}
\end{align}
where
\begin{align}
\theta_A \left(i \right)&=1+\frac{  2\tau} {\beta-2} {}_2 F_1 \left[1,1-\frac{2} {\beta};2-\frac{2} {\beta};-i \alpha \tau\ \right] -\frac {2 \pi} {\beta} \csc \left(\frac{2 \pi} {\beta} \right) (i \alpha \tau)^{2 \backslash \beta}  \\
\theta_C \left(i \right)&=\frac {2 \pi} {\beta} \csc \left(\frac{2 \pi} {\beta} \right) (i \alpha \tau)^{2 \backslash \beta}.
  \end{align}
  Here, $\alpha={(N!)}^{-\frac {1} {N}}$ is a constant related to the number of BS antennas $N$.
\end{theorem}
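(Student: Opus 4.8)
The plan is to bound, term by term, the two per-file quantities $P_{\textrm{s}}^{f,c}(t_f)$ and $P_{\textrm{s}}^{b}$ that enter the decomposition (\ref{MRTsuccessall}) of Theorem \ref{MRTsuccessP}, and then reassemble the bound. Since every weight in (\ref{MRTsuccessall}), namely $q_f$, $\mathbb{P}_f^{\mathcal{F}_b}(F_{b,1}^r=k)$, and $\frac{B}{\max(k,B)}$, is nonnegative, an upper bound on each of $P_{\textrm{s}}^{f,c}(t_f)$ and $P_{\textrm{s}}^{b}$ transfers directly to $P_{\textrm{s}}$ and produces (\ref{MRTsuccessallupper}). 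The first step is to write each success probability as a Gamma tail. Cancelling the common transmit power in (\ref{MRTSIR}), the event $\textrm{SIR}_f>\tau$ is equivalent to $g_1>\tau\hat I$, where $g_1\sim\textrm{Gamma}(N,1)$ is the tagged-link gain and $\hat I=\|\mathbf{x}_1\|^{\beta}\sum_{i\in\Phi_b\setminus 1}\|\mathbf{x}_i\|^{-\beta}g_i$ is the interference normalized by the serving link, with $g_i\sim\textrm{Exp}(1)$ independent of $g_1$. Conditioning on $\hat I$, each success probability equals $\mathbb{E}_{\hat I}[\,\mathbb{P}(g_1>\tau\hat I\mid\hat I)\,]$.

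The central step replaces this Gamma tail by the elementary bound of Lemma \ref{Gammaupperbound}. Taking $M=N$ and argument $\tau\hat I$ gives the pointwise bound $\mathbb{P}(g_1>\tau\hat I\mid\hat I)<1-(1-e^{-\alpha\tau\hat I})^{N}$ with $\alpha=(N!)^{-1/N}$. Expanding the right-hand side by the binomial theorem and cancelling the $i=0$ term against the leading $1$ yields the finite alternating sum $\sum_{i=1}^{N}(-1)^{i+1}\binom{N}{i}e^{-i\alpha\tau\hat I}$. Averaging over $\hat I$ preserves the inequality, and by linearity the expectation passes inside the finite sum, leaving $\sum_{i=1}^{N}(-1)^{i+1}\binom{N}{i}\,\mathbb{E}_{\hat I}[e^{-i\alpha\tau\hat I}]$. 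This is precisely the mechanism by which the gamma approximation collapses the $N$-th-derivative/Toeplitz structure of Theorem \ref{MRTsuccessP} into $N$ plain evaluations of a Laplace transform at the discrete points $s=i\alpha\tau$.

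It then remains to evaluate $\mathbb{E}_{\hat I}[e^{-s\hat I}]$ for each association rule, which I would carry out with the PPP Laplace-functional computation already used in Appendix \ref{ProofMRTperformance}. For a backhaul file the distance-based association makes the tagged BS the nearest point of $\Phi_b$, so all interferers lie beyond $r_1=\|\mathbf{x}_1\|$; conditioning on $r_1$, applying the generating functional with $\mathbb{E}_g[e^{-s r_1^\beta v^{-\beta}g}]=(1+s r_1^\beta v^{-\beta})^{-1}$, and averaging over the Rayleigh-distributed $r_1$ gives a fraction of the form $1/(\theta_A(i)+\theta_C(i))$, the distance integral reducing to the Gauss hypergeometric term via the substitution $w=(v/r_1)^{\beta}$. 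For a cached file the content-centric association splits the interference into the $f$-cached BSs of density $t_f\lambda_b$ (restricted to $v>r_1$) and the remaining BSs of density $(1-t_f)\lambda_b$ (unrestricted, $v>0$); the unrestricted integral contributes the cosecant term through $\int_0^\infty w^{2/\beta-1}/(w+s)\,dw=\pi\csc(2\pi/\beta)\,s^{2/\beta-1}$, and averaging over the nearest-$f$-cached-BS distance yields $t_f/(\theta_A(i)t_f+\theta_C(i))$. Substituting these two transforms at $s=i\alpha\tau$ into the alternating sum produces (\ref{uppercachefile}) and (\ref{upperbackhaulfile}).

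The main obstacle is not the probabilistic step, since Alzer's inequality together with the binomial expansion is routine, but the bookkeeping in the final Laplace-transform evaluation: one must track the two interferer classes with their distinct spatial supports in the content-centric case, keep the serving-link normalization $r_1^{\beta}$ consistent through the substitutions, and recognize the resulting definite integrals as the hypergeometric and cosecant functions so that the answer packages cleanly into the single pair $\theta_A(i),\theta_C(i)$ shared by both the cached and backhaul expressions.
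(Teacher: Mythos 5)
Your proposal follows essentially the same route as the paper's own proof: apply Alzer's bound (Lemma \ref{Gammaupperbound}) to the $\textrm{Gamma}(N,1)$ tail, expand $1-(1-e^{-\alpha\tau \hat I})^{N}$ by the binomial theorem so that only evaluations of the interference Laplace transform at $s=i\alpha\tau\|\mathbf{x}_1\|^{\beta}$ remain, reuse the PGFL computations of Appendix \ref{ProofMRTperformance} for the two association rules, and average over the Rayleigh-distributed serving distance to obtain the fractional forms, which then transfer to the full STP through the nonnegative weights in (\ref{MRTsuccessall}). The argument and all key steps match; no gaps.
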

\begin{proof}
The STP of the cached file $f \in \mathcal{F}_c$ is
\begin{align}\label{upperboundinequality}
&P_{\textrm{s}}^{f,c}(t_f) =\mathbb{P}\left(g_{\textrm{1}}>\left( \tau I_{}{\| \mathbf{x}_1\|}^{\beta} \right) \right) \notag \\
& \overset{(a)} {\leq} 1-\mathbb{E}_{I_{}{\| \mathbf{x}_1\|}^{\beta}} \left[ \left( 1-\exp  \left( -\alpha \tau I_{}{\| \mathbf{x}_1\|}^{\beta} \right) \right)^{N} \right] \notag \\
& =\sum_{i=1}^{N} (-1)^{i+1} \binom {N} {i} \mathbb{E}_{I_{}{\| \mathbf{x}_1\|}^{\beta}} \exp \left( - i \alpha \tau I_{}{\| \mathbf{x}_1\|}^{\beta} \right) \notag \\
&=\sum_{i=1}^{N} (-1)^{i+1} \binom {N} {i} \mathbb{E}_{{\| \mathbf{x}_1\|}^{\beta}} \mathcal{L}_{I_{}} \left(i \alpha \tau {\|\mathbf{x}_1\|}^{\beta} \right) \notag \\
&\overset{(b)} {=} \sum_{i=1}^{N}  \frac { (-1)^{i+1} \binom {N} {i}  t_f} {t_f \left(\frac{  2 \tau  } {\beta-2} {}_2 F_1 \left[1,1-\frac{2} {\beta};2-\frac{2} {\beta};-i \alpha \tau\ \right] +1 \right)+(1-t_f) \frac {2 \pi} {\beta} \csc \left(\frac{2 \pi} {\beta} \right) (i \alpha \tau)^{2 \backslash \beta} }.
\end{align}
where (a) follows from $g_{1} \sim \textrm{Gamma}(N,1)$ and Lemma \ref{Gammaupperbound}, (b) follows from the PDF of $\| \mathbf{x}_1\|$ $f_{\| \mathbf{x}_1\|}(r)= 2 \pi t_f \lambda_b r \exp \left \{-\pi t_f \lambda_b r^{2} \right \} $ for $f \in \mathcal{F}_c$ and $\mathcal{L}_{I_{}} \left(i \tau {\|\mathbf{x}_1\|}^{\beta}\right)$ is given in  Appendix \ref{ProofMRTperformance} as $\mathcal{L}_{I_{}} \left(i \tau {\|\mathbf{x}_1\|}^{\beta}\right)=\exp \left(- \pi \lambda_b  \left( \frac{  2 \tau t_f } {\beta-2} {}_2 F_1 \left[1,1-\frac{2} {\beta};2-\frac{2} {\beta};-i \alpha \tau\ \right] +(1-t_f) \frac {2 \pi} {\beta} \csc \left(\frac{2 \pi} {\beta} \right) (i \alpha \tau)^{2 \backslash \beta} \right)  {\|\mathbf{x}_1\|}^2  \right)$. Here, $\alpha={(N!)}^{-\frac {1} {N}}$.

Therefore, we obtain an upper bound of $P_{\textrm{s}}^{u,c,f} \left( t_f \right),  \forall f \in \mathcal{F}_c$. We can obtain the expression of $P_{\textrm{s}}^{u,b}$ similarly and then finish the proof of Theorem \ref{upperboundMRT}.
\end{proof}

The upper bound of the STP is a series of fractional functions of the cache placement $\mathbf{t}$. Comparing with the exact expression of the STP in Theorem \ref{MRTsuccessP}, the upper bound approximates the STP in a simpler manner, and therefore facilitates the analysis and further optimization.

According to Theorem \ref{upperboundMRT}, the upper bound of the ASE is given by
\begin{align}\label{ASEMRTup}
R_{u} \left(\mathcal{F}_c,\mathbf{t} \right)=\lambda_b  P_{\textrm{s}}^u(\mathcal{F}_c,\mathbf{t}) \log_2(1+\tau).
\end{align}

To obtain design insights, we then analyze the upper bound of the ASE in the asymptotic region, i.e., the high user density region. When $\lambda_u \rightarrow \infty$, the discrete random variable $F_{b,1}^r \rightarrow F_b$ in distribution  as shown in Corollary \ref{asymloads}. Therefore, we have the following corollary according to Theorem \ref{upperboundMRT}.
\begin{corollary}\label{highupperboundMRT}
(Asymptotic Upper Bound of ASE) In high user density region, i.e., $\lambda_u \rightarrow \infty$, the asymptotic upper bound of  the ASE is given by

\begin{align}\label{PMRTinfiupperhigh}
R_{u,\infty}(\mathcal{F}_c,\mathbf{t})=\lambda_b \log_2(1+\tau) \bigg( \sum_{f \in \mathcal{F}_c} q_f P_{\textrm{s}}^{u,f,c} (t_f) + \sum_{f \in \mathcal{F} \setminus \mathcal{F}_c}  \frac { q_f  B} {\max \left( F_b,B \right) } P_{\textrm{s}}^{u,b} \bigg),
\end{align}
where $P_{\textrm{s}}^{u,f,c} (t_f)$ and $P_{\textrm{s}}^{u,b}$ are given in (\ref{uppercachefile}) and (\ref{upperbackhaulfile}).
\end{corollary}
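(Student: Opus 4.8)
The plan is to obtain $R_{u,\infty}$ by taking the limit $\lambda_u \to \infty$ directly in the ASE upper bound $R_{u}(\mathcal{F}_c,\mathbf{t}) = \lambda_b P_{\textrm{s}}^u(\mathcal{F}_c,\mathbf{t})\log_2(1+\tau)$ from (\ref{ASEMRTup}). Since $\lambda_b$ and $\log_2(1+\tau)$ do not depend on the user density, they factor out of the limit, so the whole task reduces to computing $\lim_{\lambda_u \to \infty} P_{\textrm{s}}^u(\mathcal{F}_c,\mathbf{t})$ from the expression in (\ref{MRTsuccessallupper}).

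First I would split $P_{\textrm{s}}^u$ into its cached-file part and its backhaul-file part. Inspecting (\ref{uppercachefile}) and (\ref{upperbackhaulfile}), the quantities $P_{\textrm{s}}^{u,f,c}(t_f)$ and $P_{\textrm{s}}^{u,b}$ are built only from $t_f$ and the density-independent constants $\tau$, $\beta$, $N$ and $\alpha=(N!)^{-1/N}$ through $\theta_A(i)$ and $\theta_C(i)$; none of them involves $\lambda_u$, and the popularities $q_f$ in (\ref{Zipfdistribution}) are likewise fixed. Hence the cached-file sum $\sum_{f\in\mathcal{F}_c} q_f P_{\textrm{s}}^{u,f,c}(t_f)$ is constant in $\lambda_u$ and passes through the limit unchanged, reproducing the first group of terms in (\ref{PMRTinfiupperhigh}).

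The only $\lambda_u$-dependence therefore sits inside the pmf $\mathbb{P}_f^{\mathcal{F}_b}(F_{b,1}^r=k)$ in the backhaul sum. Because the outer index set $\mathcal{F}\setminus\mathcal{F}_c$ and the inner index set $\{1,\dots,F_b\}$ are both finite, I may interchange the limit with both sums without any convergence concern, so the limit acts termwise on the pmf. Invoking Corollary \ref{asymloads}, $\lim_{\lambda_u\to\infty}\mathbb{P}_f^{\mathcal{F}_b}(F_{b,1}^r=k)$ equals $1$ for $k=F_b$ and $0$ otherwise; thus each inner sum $\sum_{k=1}^{F_b}\mathbb{P}_f^{\mathcal{F}_b}(F_{b,1}^r=k)\frac{B}{\max(k,B)}P_{\textrm{s}}^{u,b}$ collapses to the single surviving term $\frac{B}{\max(F_b,B)}P_{\textrm{s}}^{u,b}$. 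Summing over $f\in\mathcal{F}\setminus\mathcal{F}_c$ yields the second group of terms in (\ref{PMRTinfiupperhigh}), and reattaching the prefactor $\lambda_b\log_2(1+\tau)$ completes the derivation.

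As for the main obstacle: there is essentially no analytic difficulty here, since everything reduces to a finite sum of bounded terms and the hard probabilistic content — the concentration of $F_{b,1}^r$ at $F_b$ as the network becomes user-dense — has already been isolated in Corollary \ref{asymloads}. The only point requiring genuine care is the bookkeeping check that neither $P_{\textrm{s}}^{u,f,c}$ nor $P_{\textrm{s}}^{u,b}$ carries hidden $\lambda_u$-dependence, so that the pmf is truly the sole channel through which the user density enters the upper bound.
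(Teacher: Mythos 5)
Your proposal is correct and follows essentially the same route as the paper, which obtains Corollary \ref{highupperboundMRT} simply by substituting the degenerate limiting pmf of $F_{b,1}^r$ from Corollary \ref{asymloads} into the upper bound of Theorem \ref{upperboundMRT}. Your additional bookkeeping — verifying that $P_{\textrm{s}}^{u,f,c}$, $P_{\textrm{s}}^{u,b}$ and $q_f$ carry no hidden $\lambda_u$-dependence and that the finite sums permit a termwise limit — is exactly the (unstated) justification the paper relies on.
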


We plot Fig. \ref{performanceall}  and  Fig. \ref{Performanceaym} to validate the correctness of the analytic results. Fig. \ref{performanceall} plots the successful transmission probability vs. the number of BS antennas and target SIR. Fig. \ref{performanceall} verifies Theorem \ref{MRTsuccessP} and  Theorem \ref{upperboundMRT}, and demonstrates the tightness of the upper bound. Fig. \ref{performanceall} also shows that the successful transmission probability increases with the number of BS antennas and decreases with the SIR threshold. Moreover, Fig. \ref{performanceall} (a) indicates that when the user density is large, the increase of the BS density increases the successful transmission probability.  Fig. \ref{Performanceaym} plots the area spectrum efficiency vs. user density by showing that when the user density is larger than a certain threshold, i.e., $6 \times 10^{-3} \mathrm{m}^{-2}$ for the single-antenna networks and $4 \times 10^{-3} \mathrm{m}^{-2}$ for the multi-antenna networks, the asymptotic upper bound of the ASE is nearly same as the ASE.

\begin{figure}

\begin{minipage}[t]{0.5\linewidth}
\centering
    \includegraphics[height=2in,width=3in]{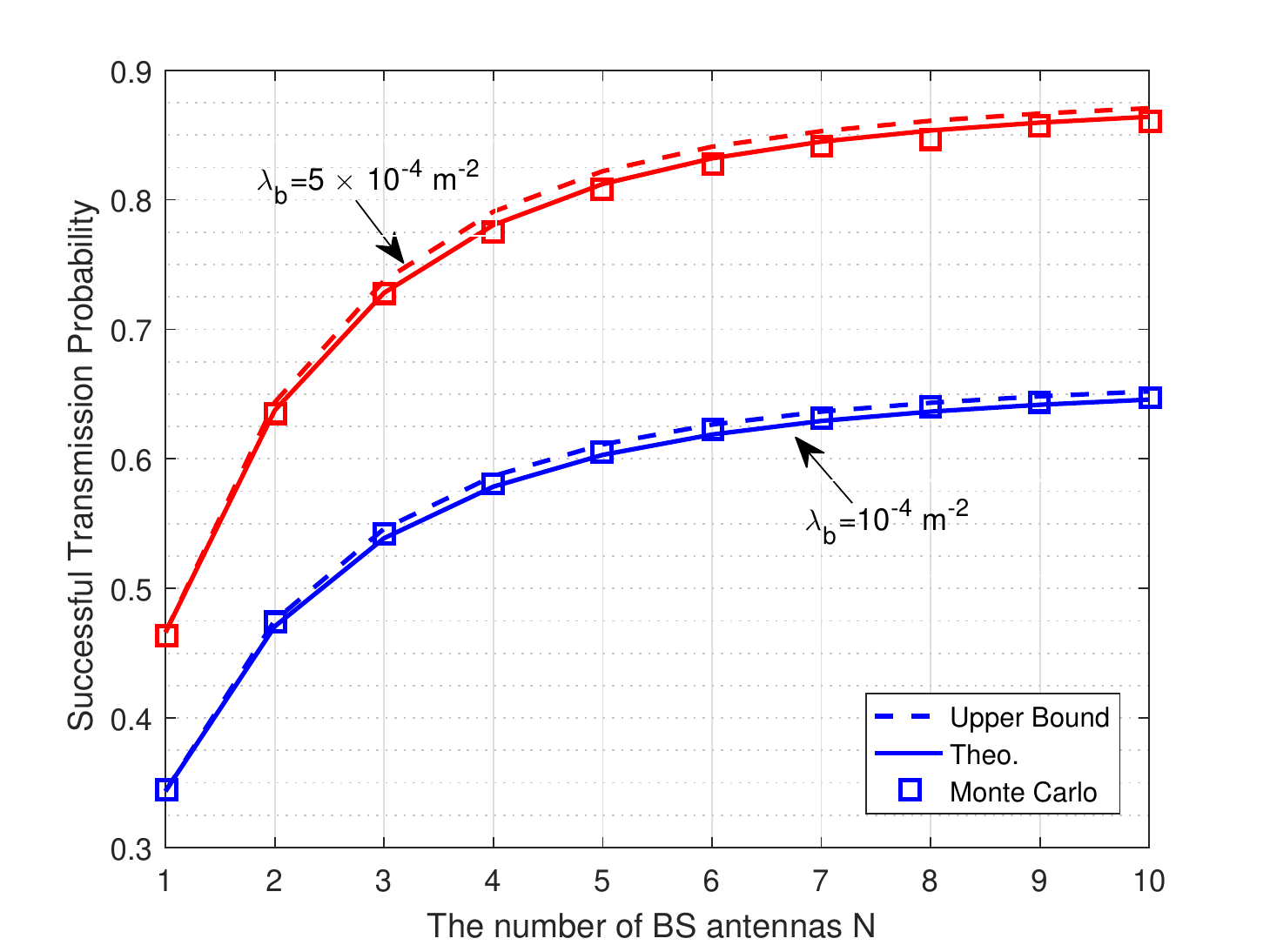}
    \caption*{\footnotesize (a) STP vs. the number of BS antennas $N$ at $\tau=0dB$.}
\end{minipage}
\begin{minipage}[t]{0.5\linewidth}
\centering
    \includegraphics[height=2in,width=3in]{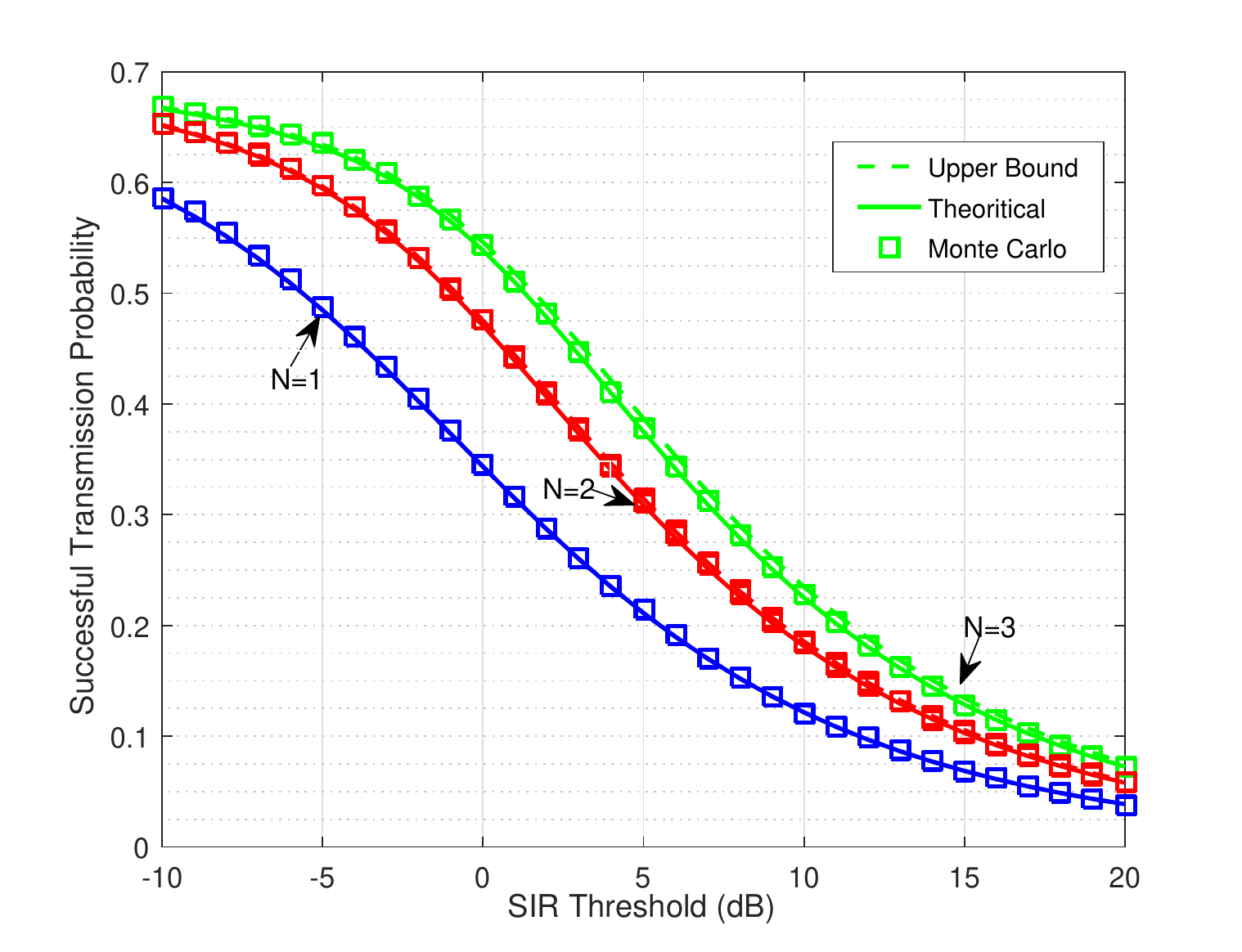}
    \caption*{\footnotesize (b) STP vs.  SIR threshold $\tau$ at $\lambda_b=10^{-4} \mathrm{m}^{-2}$.}
\end{minipage}
\caption{\small STP vs. the number of BS antennas $N$ and SIR threshold $\tau$. $\lambda_u=10^{-3} \mathrm{m}^{-2}$, $\beta=4$, $F=8$, $B=C=2$, $\mathcal{F}_b=\{1,2,3,4\}$, $\mathcal{F}_c=\{5,6,7,8\}$, $\mathbf{t}=(0.8,0.6,0.4,0.2)$, $\gamma=1$. In this paper,  the transmit power is $6.3$W, the noise power in the Monte Carlo simulation is $\sigma_n=-97.5\mathrm{dBm}$ \cite{holma2009lte}, the theoretical results are obtained without consideration of noise. The Monte Carlo results are obtained by averaging over $10^6$ random realizations.}\label{performanceall}
\end{figure}

\begin{figure}
\centering
    \includegraphics[height=2in,width=3in]{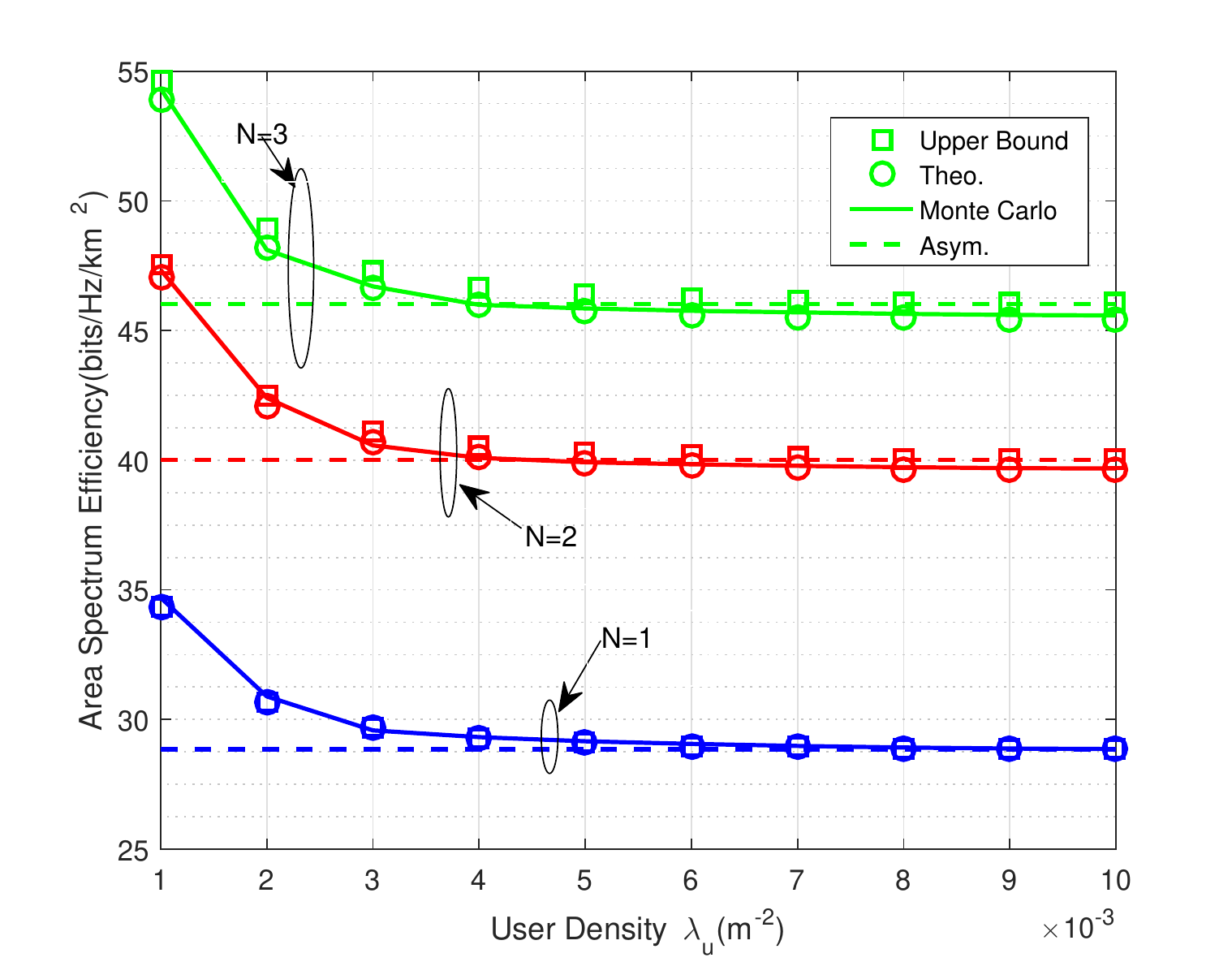}
\caption{The ASE vs. user density $\lambda_u$. $\lambda_b=10^{-4} \mathrm{m}^{-2}$, $\beta=4$, $F=8$, $B=C=2$, $\mathcal{F}_b=\{1,2,3,4\}$, $\mathcal{F}_c=\{5,6,7,8\}$, $\mathbf{t}=(0.8,0.6,0.4,0.2)$, $\gamma=1$, $\tau=0\mathrm{dB}$.} \label{Performanceaym}
\label{Performanceaym}
\end{figure}

\section{ASE Optimization}\label{ASEoptimization}

\subsection{General ASE Optimization}
In this part, we solve the ASE maximization problem, i.e., maximize $R_{}(\mathcal{F}_c,\mathbf{t})$ via optimizing  the file allocation $\mathcal{F}_c$  and the cache placement $\mathbf{t}$. Based on the relationship between $R_{}(\mathcal{F}_c,\mathbf{t})$ and  $P_{\textrm{s}}\left(\mathcal{F}_c , \mathbf{t}\right)$, the ASE optimization problem is formulated as follows.
\begin{problem}\label{problemMRTorigin}
(ASE Optimization)
\begin{align}
R_{}^* \triangleq  &\max_{\mathcal{F}_c,\mathbf{t}} ~~\lambda_b  P_{\textrm{s}}\left(\mathcal{F}_c , \mathbf{t}\right) \log_2(1+\tau)  \\
 &s.t. ~~~(\ref{fileconstraint}),(\ref{T01}),(\ref{Ttotal}).  \notag
\end{align}
where $ P_{\textrm{s}}(\mathcal{F}_c,\mathbf{t})$ is given in (\ref{MRTsuccessall}).
\end{problem}

The above problem is a mixed-integer problem to optimize the discrete parameter $\mathcal{F}_c$ and the continuous parameter $\mathbf{t}$. To solve the complex problem, we first explore optimal properties of
the discrete variable $\mathcal{F}_c$ and then optimize the continuous variable $\mathbf{t}$. To optimally design the file allocation $\mathcal{F}_c$, we first  study the properties of the STPs of the cached file and backhaul file, i.e., $P_{\textrm{s}}^{f,c} \left(t_f\right)$ and $P_{\textrm{s}}^{b}$. Based on the properties of the lower triangular Toeplitz matrix form in (\ref{PsuccessfC}) and (\ref{PsuccessfB}), we then obtain the following lemma.

\begin{lemma}\label{propertiesSTP}
$P_{\textrm{s}}^{f,c} (t_f)$ and $P_{\textrm{s}}^{b}$ have following properties:

1) $P_{\textrm{s}}^{f,c} (t_f)$ is bounded by
\begin{align}\label{STPupperlowerTf}
\frac{t_f} {\mu_A t_f+\nu_A }\leq P_{\textrm{s}}^{f,c} \left( t_f \right) \leq \frac{t_f} {\mu_B t_f+\nu_B},
\end{align}
where $\mu_A$, $\nu_A$, $\mu_B$ and $\nu_B$ are given by
\begin{align}
\mu_A&= 1- \frac {2 \pi} {\beta} \csc \left(\frac{2 \pi} {\beta} \right) \tau^{2 \backslash \beta} +\frac{  2\tau} {\beta-2} {}_2 F_1 \left[1,1-\frac{2} {\beta};2-\frac{2} {\beta};-\tau\ \right] \notag \\
&+\sum_{i=1}^{N-1} \frac { N-i} {N}  \left[\frac{2 \tau^{i}} {i \beta-2} {}_2 F_1 \left[i+1,i-\frac{2} {\beta};i+1-\frac{2} {\beta};-\tau\ \right] - \frac{2 } {\beta} B(\frac{2} {\beta}+1,i-\frac{2} {\beta}) \tau^{2 \backslash \beta} \right], \\
\nu_A&=\frac {2 \pi} {\beta} \csc \left(\frac{2 \pi} {\beta} \right) \tau^{2 \backslash \beta} -\sum_{i=1}^{N-1} \frac { N-i} {N}  \frac{2 } {\beta} B(\frac{2} {\beta}+1,i-\frac{2} {\beta}) \tau^{2 \backslash \beta},  \\
\mu_B&= 1- \frac {2 \pi} {\beta} \csc \left(\frac{2 \pi} {\beta} \right) \tau^{2 \backslash \beta} +\frac{  2\tau} {\beta-2} {}_2 F_1 \left[1,1-\frac{2} {\beta};2-\frac{2} {\beta};-\tau\ \right] \notag \\
&+\sum_{i=1}^{N-1}   \left[\frac{2 \tau^{i}} {i \beta-2} {}_2 F_1 \left[i+1,i-\frac{2} {\beta};i+1-\frac{2} {\beta};-\tau\ \right] - \frac{2 } {\beta} B(\frac{2} {\beta}+1,i-\frac{2} {\beta}) \tau^{2 \backslash \beta}  \right] ,\\
\nu_B&=\frac {2 \pi} {\beta} \csc \left(\frac{2 \pi} {\beta} \right) \tau^{2 \backslash \beta}  -\sum_{i=1}^{N-1}   \frac{2 } {\beta} B(\frac{2} {\beta}+1,i-\frac{2} {\beta}) \tau^{2 \backslash \beta}  ).
\end{align}
Furthermore,  $\nu_A$, $\nu_B$ are positive and $\mu_A$, $\mu_B$ are no larger than $1$.

2) $P_{\textrm{s}}^{f,c} (t_f)$ is an increasing function of $t_f$.

3) $P_{\textrm{s}}^{f,c} (t_f)$ is no larger than $P_{\textrm{s}}^{b}$ and no smaller than $0$, i.e., $0 \leq P_{\textrm{s}}^{f,c} (t_f) \leq P_{\textrm{s}}^{b}$.
\end{lemma}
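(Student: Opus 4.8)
The plan is to handle the three properties in the order 1, 3, 2, because property~3 will follow almost for free once property~2 is in hand, and property~1 supplies the structural facts used throughout. The key structural observation is that $\mathbf{D}^{c,f}$ and $\mathbf{D}^{b,f}$ are strictly lower-triangular Toeplitz matrices, hence nilpotent with $(\mathbf{D}^{c,f})^{N}=\mathbf{0}$. Writing $a \triangleq t_f + l_{0}^{c,f}$ and $c \triangleq \tau^{2/\beta}/a$, the inverse in (\ref{PsuccessfC}) is therefore the finite Neumann sum $\mathbf{M}\triangleq[\mathbf{I}-c\,\mathbf{D}^{c,f}]^{-1}=\sum_{k=0}^{N-1}(c\,\mathbf{D}^{c,f})^{k}$. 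First I would verify that every $l_{i}^{c,f}\ge 0$ (the Beta and Gauss-hypergeometric factors are positive for $\beta>2$ and $i\ge 1$), so that $\mathbf{M}$ has nonnegative entries and is again lower-triangular Toeplitz; its $l_1$ induced norm then equals its largest column sum, which for such a matrix is the first column sum. I would also record two boundary facts: $a=\zeta_{1}(\tau)t_f+\zeta_{2}(\tau)$ is affine with $\zeta_{1},\zeta_{2}$ as in (\ref{expressionzeta1})--(\ref{expressionzeta2}), and at $t_f=1$ all of $l_{0}^{c,f},\,l_{i}^{c,f},\,\mathbf{D}^{c,f},\,c$ collapse to their backhaul counterparts $l_{0}^{b,f},\,l_{i}^{b,f},\,\mathbf{D}^{b,f}$, so that $P_{\textrm{s}}^{f,c}(1)=P_{\textrm{s}}^{b}$.

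For property~1, let $\sigma_j$ denote the sum of column $j$ of $\mathbf{M}$. Left-multiplying $\mathbf{M}(\mathbf{I}-c\,\mathbf{D}^{c,f})=\mathbf{I}$ by $\mathbf{1}^{\top}$ gives the backward recursion $\sigma_j = 1 + c\sum_{m\ge 1}\sigma_{j+m}\,l_{m}^{c,f}$ with $\sigma_j=0$ for $j>N$, from which $1=\sigma_N\le\cdots\le\sigma_1$, so $\|\mathbf{M}\|_1=\sigma_1$ and $P_{\textrm{s}}^{f,c}(t_f)=(t_f/a)\,\sigma_1$. One estimate of $\sigma_1$ is the self-consistent geometric bound: using $\sigma_{j+m}\le\sigma_1$ yields $\sigma_1\le(1-c\sum_i l_{i}^{c,f})^{-1}$, hence a clean fraction $t_f/(a-\tau^{2/\beta}\sum_i l_{i}^{c,f})$ in which the $l_{i}^{c,f}$ carry first-column weight $1$. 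A second estimate comes from the mean column sum $\tfrac{1}{N}\sum_j\sigma_j = 1+\tfrac{c}{N}\mathbf{1}^{\top}\mathbf{D}^{c,f}\mathbf{1}+\cdots$, where $\mathbf{1}^{\top}\mathbf{D}^{c,f}\mathbf{1}=\sum_i(N-i)l_{i}^{c,f}$ produces exactly the averaged weights $(N-i)/N$, combined with $\sigma_1\ge\tfrac{1}{N}\sum_j\sigma_j$. Substituting $l_{i}^{c,f}=(1-t_f)C_i+t_fD_i$ (with $C_i,D_i$ the two $t_f$-independent groups in its definition) and $a=\zeta_1 t_f+\zeta_2$, then collecting the $t_f$ and constant parts, produces the affine denominators $\mu_A t_f+\nu_A$ and $\mu_B t_f+\nu_B$; the sign bookkeeping (governed by the sign of $D_i-C_i$) determines which estimate yields the upper and which the lower bound. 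Positivity of $\nu_A,\nu_B$ I expect to follow from the series identity $\sum_{i=1}^{\infty}\tfrac{2}{\beta}B(\tfrac{2}{\beta}+1,i-\tfrac{2}{\beta})=\tfrac{2\pi}{\beta}\csc(\tfrac{2\pi}{\beta})$, which dominates the truncated sums, and $\mu_A,\mu_B\le 1$ from a matching bound on the residual hypergeometric terms.

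Property~2 is the main obstacle. I would show $\tfrac{d}{dt_f}P_{\textrm{s}}^{f,c}(t_f)>0$ directly from $P_{\textrm{s}}^{f,c}=(t_f/a)\sigma_1$. The prefactor $t_f/a=t_f/(\zeta_1 t_f+\zeta_2)$ is strictly increasing since $\zeta_2>0$; the difficulty is that $\sigma_1$ also varies with $t_f$, both through $c=\tau^{2/\beta}/a$ (decreasing in $t_f$) and through the entries $l_{i}^{c,f}=(1-t_f)C_i+t_fD_i$. I would differentiate the column-sum recursion to obtain a triangular linear system for $\{d\sigma_j/dt_f\}$, solve it by back-substitution from $d\sigma_N/dt_f=0$, and argue that the net effect keeps the full derivative positive on $[0,1]$. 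This is exactly the competition between the shrinking serving distance (favoring a larger STP) and the shifting interferer composition, and it is the step I expect to be hardest; I anticipate needing the nonnegativity of the $l_{i}^{c,f}$ and possibly the sandwiching bounds of property~1 to pin down the sign.

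Finally, property~3 follows with no further computation. Nonnegativity of the entries of $\mathbf{M}$ gives $P_{\textrm{s}}^{f,c}(t_f)\ge 0$, while the monotonicity from property~2 together with the boundary identity $P_{\textrm{s}}^{f,c}(1)=P_{\textrm{s}}^{b}$ recorded in the first step yields $P_{\textrm{s}}^{f,c}(t_f)\le P_{\textrm{s}}^{f,c}(1)=P_{\textrm{s}}^{b}$ for every $t_f\in[0,1]$.
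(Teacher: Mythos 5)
Your overall skeleton matches the paper's: the ordering of the three properties, the boundary identity $P_{\textrm{s}}^{f,c}(1)=P_{\textrm{s}}^{b}$, the nilpotent/Neumann view of the Toeplitz inverse, and your self-consistent upper estimate $\sigma_1\le\bigl(1-c\sum_i l_i^{c,f}\bigr)^{-1}$ is exactly the paper's $\|\mathbf{A}^{-1}\|_1\le\|\mathbf{I}\|_1/(1-\|\mathbf{I}-\mathbf{A}\|_1)$. However, your route to the \emph{lower} bound in property 1 is quantitatively too weak to yield the stated constants. From $\sigma_1\ge\frac{1}{N}\sum_j\sigma_j=\frac{1}{N}\mathbf{1}^{\top}\mathbf{M}\mathbf{1}$ you obtain $\sigma_1\ge 1+\frac{c}{N}\sum_i(N-i)l_i^{c,f}+\cdots$, whereas the claimed bound $t_f/(\mu_A t_f+\nu_A)$ requires $\sigma_1\ge\bigl(1-\frac{c}{N}\sum_i(N-i)l_i^{c,f}\bigr)^{-1}$, which is strictly larger. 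Already for $N=2$, where $(\mathbf{D}^{c,f})^2=\mathbf{0}$ so the ``$\cdots$'' vanish, your estimate gives $1+c\,l_1^{c,f}/2$ while the lemma needs $(1-c\,l_1^{c,f}/2)^{-1}$. The paper instead picks $\mathbf{y}=\mathbf{1}$, sets $\mathbf{x}=\mathbf{A}\mathbf{y}$, and uses the duality-type bound $\|\mathbf{A}^{-1}\|_1\ge\|\mathbf{y}\|_1/\|\mathbf{x}\|_1=N/\bigl(N-c\sum_i(N-i)l_i^{c,f}\bigr)$, which lands exactly on $\mu_A,\nu_A$; you would need to replace your averaging step by this argument (and verify the entries of $\mathbf{A}\mathbf{1}$ are nonnegative so that the column sum equals the $\ell_1$ norm).

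The larger gap is property 2, which you explicitly leave open (``argue that the net effect keeps the full derivative positive''). Differentiating the column-sum recursion and back-substituting does not by itself settle the sign, precisely because $c$ and the entries $l_i^{c,f}$ move in opposite directions as $t_f$ grows. The paper's device is to write $P_{\textrm{s}}^{f,c}(t_f)=\|t_f\mathbf{B}^{-1}\|_1$ with $\mathbf{B}=(t_f+l_0^{c,f})\mathbf{I}-\tau^{2\backslash\beta}\mathbf{D}^{c,f}$, observe that $\partial\mathbf{B}/\partial t_f$ is a \emph{constant} lower-triangular Toeplitz matrix with $\|\partial\mathbf{B}/\partial t_f\|_1=1-k_0+\sum_{i=1}^{N-1}k_i\le 1$ (this uses the series identity $l_0^{c,f}=\sum_{i\ge1}l_i^{c,f}$, the same one you invoke for $\nu_A>0$), and then chain the reverse triangle inequality with submultiplicativity to obtain $\partial P_{\textrm{s}}^{f,c}/\partial t_f\ge\frac{1}{t_f}P_{\textrm{s}}^{f,c}(1-P_{\textrm{s}}^{f,c})\ge0$. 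Without this step (or a substitute), property 2 is unproven, and with it falls the upper half of property 3, which, exactly as you say, rests on monotonicity together with $P_{\textrm{s}}^{f,c}(1)=P_{\textrm{s}}^{b}$.
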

\emph{Proof:} See Appendix \ref{ProofSTPproperties}.

Property 1 in Lemma \ref{propertiesSTP} shows that the bounds of $P_{\textrm{s}}^{f,c} (t_f)$ are  fractional functions of $t_f$. The expressions of the bounds are similar to the expression of the STP of the cached file in single-antenna networks given in Theorem \ref{STPSISO}.   Based on the properties of $P_{\textrm{s}}^{f,c} (t_f)$ and $P_{\textrm{s}}^{b}$, we  have the following theorem to reveal the properties of the optimal file allocation $\mathcal{F}_c^{*}$.
\begin{theorem}\label{optimalFc}
(Property of the Optimal File Allocation $\mathcal{F}_c^{*}$)
To optimize cached file set $\mathcal{F}_c$  for Problem \ref{problemMRTorigin}, we should allocate at least $C$ files and at most $F-B$ files to store in the cache, that is to say, $F_c^{*}=\{C,C+2,\cdots,F-B\}$.
\end{theorem}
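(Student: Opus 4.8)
The plan is to establish the two bounds $C\le F_c^*$ and $F_c^*\le F-B$ separately by an \emph{exchange argument}. Starting from any feasible $(\mathcal{F}_c,\mathbf{t})$ that violates one of the bounds, I would exhibit a feasible modification that moves a single file across the cache/backhaul partition and weakly increases the objective $P_{\textrm{s}}(\mathcal{F}_c,\mathbf{t})$ of (\ref{MRTsuccessall}); iterating such moves drives $F_c$ into the claimed range without ever decreasing the ASE, so an optimal solution with $C\le F_c^*\le F-B$ always exists. Two facts from Lemma \ref{propertiesSTP} power the argument: $P_{\textrm{s}}^{f,c}(t_f)$ is nondecreasing in $t_f$ (Property 2) and $0\le P_{\textrm{s}}^{f,c}(t_f)\le P_{\textrm{s}}^{b}$ (Property 3). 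I would also use the boundary identity $P_{\textrm{s}}^{f,c}(1)=P_{\textrm{s}}^{b}$, obtained by setting $t_f=1$ in (\ref{PsuccessfC}): then $l_0^{c,f}=l_0^{b,f}$ and $l_i^{c,f}=l_i^{b,f}$ for all $i\ge1$, so $\mathbf{D}^{c,f}=\mathbf{D}^{b,f}$ and the two expressions coincide. Intuitively, a file cached with probability one sits at every BS, so content-centric and distance-based association agree.

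For the upper bound $F_c^*\le F-B$, I would suppose $F_b=F-F_c<B$. Then every BS has at most $F_b<B$ backhaul requests, so $F_{b,1}^r\le F_b<B$ on every sample path and the congestion factor $\frac{B}{\max(k,B)}$ equals $1$ for every realized $k$; each backhaul file therefore contributes exactly $q_f P_{\textrm{s}}^{b}$. I would move one cached file $f'$ into the backhaul set, leaving the remaining cached probabilities unchanged (feasibility is immediate, and by Property 2 redistributing the freed budget $t_{f'}$ can only help further). After the move $F_b'=F_b+1\le B$, so the congestion factor of every backhaul file, including $f'$, stays $1$: the existing terms are untouched, while the contribution of $f'$ rises from $q_{f'}P_{\textrm{s}}^{f,c}(t_{f'})$ to $q_{f'}P_{\textrm{s}}^{b}$, a nonnegative change by Property 3. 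Repeating until $F_b=B$ gives $F_c^*\le F-B$.

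For the lower bound $F_c^*\ge C$, I would suppose $F_c<C$. Then $\sum_{i\in\mathcal{F}_c}t_i\le F_c\le C-1$, so constraint (\ref{Ttotal}) is slack by at least one unit, and since $F_b=F-F_c>F-C\ge B\ge0$ at least one backhaul file exists. I would move a backhaul file $f'$ into the cache and set $t_{f'}=1$; feasibility holds because the new total is at most $(C-1)+1=C$. By the boundary identity the contribution of $f'$ becomes $q_{f'}P_{\textrm{s}}^{b}$, which is at least its previous backhaul value $q_{f'}P_{\textrm{s}}^{b}\sum_{k}\mathbb{P}_{f'}^{\mathcal{F}_b}(F_{b,1}^r=k)\frac{B}{\max(k,B)}$ since the congestion factor is at most $1$. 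The remaining cached files are unchanged, and it remains to show the surviving backhaul files do not get worse; iterating until $F_c=C$ then yields $F_c^*\ge C$.

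The step I expect to be the main obstacle is precisely this last claim: deleting $f'$ from the backhaul set weakly increases the congestion factor of every other backhaul file. It is needed because in the regime $F_c<C$ the backhaul is genuinely congested ($F_b>B$), unlike the upper-bound case. I would handle it with a pathwise coupling read off from Lemma \ref{fbrspmf}: by (\ref{fcfbexpression}), for a fixed backhaul file $f$ one has $F_{b,1}^r=1+\sum_{i\in\mathcal{F}_b\setminus f}X_i$ with independent Bernoulli indicators $X_i$ of parameter $1-(1+\frac{q_i\lambda_u}{3.5\lambda_b})^{-4.5}$; removing $f'$ simply drops the term $X_{f'}$, so the request count cannot increase on any sample path. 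Because $k\mapsto\frac{B}{\max(k,B)}$ is nonincreasing, the congestion factor $\sum_{k}\mathbb{P}_f^{\mathcal{F}_b}(F_{b,1}^r=k)\frac{B}{\max(k,B)}$ of each surviving backhaul file weakly increases, which closes the lower-bound argument. Combining the two bounds establishes $C\le F_c^*\le F-B$.
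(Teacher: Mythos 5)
Your proposal is correct and follows essentially the same route as the paper's Appendix~D: an exchange/contradiction argument that moves files across the cache--backhaul partition, powered by Properties 2 and 3 of Lemma~\ref{propertiesSTP} and the boundary identity $P_{\textrm{s}}^{f,c}(1)=P_{\textrm{s}}^{b}$, with feasibility in the $F_c<C$ case coming from the slack $\sum_i t_i\le F_c\le C-1$. The one substantive difference is to your credit: the paper's computation of the objective difference silently ignores that shrinking $\mathcal{F}_b$ changes the pmf of $F_{b,1}^r$ for the surviving backhaul files, whereas your pathwise coupling via the independent-Bernoulli form of (\ref{fcfbexpression}) shows this change is favorable (the congestion factor $B/\max(k,B)$ weakly increases), which closes that gap.
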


\begin{proof}   We utilize the contradiction to prove Theorem \ref{optimalFc}. More specifically, we consider the cases in which we cache less than $C$ files or more than $F-B$ files, and we then prove that the cases are not optimal in terms of the ASE. For details, please see Appendix \ref{ProofofFcallocation}. \end{proof}

The above theorem interprets that for optimal file allocation, the number of cached files should be larger than the cache size and the number of the backhaul files shoud be larger than the backhaul capability, indicating that we should make full use of the resources.

Based on Theorem \ref{optimalFc}, we can reduce the complexity of search for $\mathcal{F}_c^{*}$. Otherwise, we have to check the cases such that $F_c <C$ or $F_c>F-B$. When $B$ or $C$ is large, Theorem \ref{optimalFc} will largely reduce the complexity. Under given $\mathcal{F}_c$, we optimize the cache placement $\mathbf{t}$. The ASE optimization over $\mathbf{t}$ under given $\mathcal{F}_c$ is formulated as follows
\begin{problem}\label{problemTgivenFc}
(Cache Placement Optimization under Given $\mathcal{F}_c$)
\begin{align}
R^*(\mathcal{F}_c) \triangleq  &\max_{\mathbf{t}} ~~\lambda_b  \sum_{f \in \mathcal{F}_c} q_f  P_{\textrm{s}}^{f,c} \left(t_f\right)  \log_2(1+\tau)  \\
 &s.t. ~~~(\ref{T01}),(\ref{Ttotal}).  \notag
\end{align}
where $P_{\textrm{s}}^{f,c} \left(t_f\right)$ is given in (\ref{PsuccessfC}).

\end{problem}

The optimal solution for Problem \ref{problemTgivenFc}  is denoted as $ \mathbf{t}^* \left( \mathcal{F}_c \right)$. The above problem  is a continuous optimization problem of a differentiable function  over a convex set and we can use the gradient projection method to obtain a local optimal solution.  Under given $\mathcal{F}_c$, we can obtain optimal cache placement $\mathbf{t}^{*} \left(\mathcal{F}_c \right)$ using Algorithm \ref{gradientorigin}.

\begin{algorithm}
\caption{Optimal Solution to Problem \ref{problemTgivenFc}}
\algsetup{linenodelimiter=.}
\begin{algorithmic}[1]\label{gradientorigin}

\STATE \textbf{Initialization}: $n=1$, $n_{\textrm{max}}=10^{4}$ and $t_i(1)=\frac {1} {F_c}$ for all $i \in \mathcal{F}_c$. Constant lower triangular Toeplitz matrix $\frac {\partial  \mathbf{B} } {\partial t_i}$ is given in (\ref{expressionK}).
\REPEAT
\STATE Calculate $ \mathbf{D}_{}^{c,i}$ with $t_i=t_i(n)$ according to (\ref{matrixT}) for all $i \in \mathcal{F}_c$.
 \STATE $\mathbf{B}_i(n)=  \left(t_i(n)+ l_{0}^{c,i} \right)\mathbf{I}- {\tau^{2 \backslash \beta}} \  \mathbf{D}_{}^{c,i} $ for all $i \in \mathcal{F}_c$.
 \STATE $t_i^{'}(n+1)=t_i(n) + s(n) \left \|{\mathbf{B}_i^{-1}(n) } - t_i(n)  \mathbf{B}_i^{-1}(n) \frac {\partial  \mathbf{B} } {\partial t_i}\mathbf{B}_i^{-1}(n) \right \|_1 \lambda_b \log_2(1+\tau) q_i $ for all $i \in \mathcal{F}_c$. \label{matrixinverse}
 \STATE $t_i\left(n+1\right)=\left[ t_i^{'}(n+1)-u^* \right]_0^1$ for all $i \in \mathcal{F}_c$, where $u^*$ satisfying $\sum_{i \in \mathcal{F}_c} \left[ t_i^{'}(n+1)-u^* \right]_0^1=C$ and $[x]_0^1$ denotes $\max \{ \min\{1,x\},0 \}$.
\STATE $n=n+1$.
\UNTIL{Convergence or $n$ is larger than $n_{\textrm{max}}$.}
\end{algorithmic}
\end{algorithm}

From Corollary \ref{STPSISO}, we can easily observe that when $N=1$, i.e., in the single-antenna networks,  the objective function of Problem \ref{problemTgivenFc} is concave and the Slaters condition is satisfied, implying that strong duality holds. In this case, we can obtain a closed-form optimal solution to the convex optimization problem using KKT conditions. After some manipulations, we obtain the optimal cache placement $\mathbf{t}^{*}$ under given $\mathcal{F}_c$ for single-antenna networks.

\begin{corollary}\label{OptimalTgivenFc}
(Optimal Cache Placement of Single-Antenna Networks) When the cached file set $\mathcal{F}_c$ is fixed, the optimal cache placement $\mathbf{t}^*$  of the single-antenna networks is given by
\begin{align}
t_f^*=\left[\frac{1} {\zeta_1 (\tau)} \left(\sqrt{\frac {\lambda_b \log_2(1+\tau) q_f \zeta_2 (\tau)} {u^*}}-\zeta_2 (\tau)\right) \right]_0^1,~f \in {\mathcal{F}_c}, \label{opMRTasymT}
\end{align}
where $[x]_0^1$ denotes $\max \{ \min\{1,x\},0 \}$ and $u^*$ is the optimal dual variable for the constraint  $\sum_{i \in \mathcal{F}_C^*} t_i \leq C$, which satisfies
\begin{align}\label{opMRTasymTdualsolution}
\sum_{f \in {\mathcal{F}_C}} \left[\frac{1} {\zeta_1 (\tau)} \left(\sqrt{\frac {\lambda_b \log_2(1+\tau) q_f  \zeta_2 (\tau) } {u^*}}-\zeta_2 (\tau) \right) \right]_0^1=C.
\end{align}
Here $\zeta_1 (\tau)$ and $\zeta_2 (\tau)$ are given in (\ref{expressionzeta1}) and (\ref{expressionzeta2}), respectively.
\end{corollary}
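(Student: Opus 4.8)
The plan is to specialize Problem \ref{problemTgivenFc} to $N=1$ via Corollary \ref{STPSISO}, where the STP of each cached file collapses to $P_{\textrm{s}}^{f,c}(t_f)=t_f/(\zeta_1(\tau)t_f+\zeta_2(\tau))$. The objective then becomes $\lambda_b\log_2(1+\tau)\sum_{f\in\mathcal{F}_c}q_f t_f/(\zeta_1 t_f+\zeta_2)$, a nonnegatively weighted sum over the convex feasible set cut out by (\ref{T01}) and (\ref{Ttotal}). The first task is to certify that this is a concave maximization, so that the KKT conditions are both necessary and sufficient for optimality.

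For concavity I would differentiate each summand $\phi(t)=t/(\zeta_1 t+\zeta_2)$ twice, obtaining $\phi''(t)=-2\zeta_1\zeta_2/(\zeta_1 t+\zeta_2)^3$. Since $\zeta_2>0$ is immediate from (\ref{expressionzeta2}), each $\phi$ is strictly concave on $[0,1]$ provided $\zeta_1>0$; this single positivity fact, from (\ref{expressionzeta1}), is the one inequality needing a short argument and follows the same reasoning used to establish positivity of $\nu_A,\nu_B$ in Lemma \ref{propertiesSTP}. The weighted sum with weights $q_f>0$ is then concave. Slater's condition holds because the point $t_i=\epsilon$ for all $i$, with $\epsilon$ small enough that $F_c\,\epsilon<C$, is strictly feasible, so strong duality applies.

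With strong duality in hand, I would form the Lagrangian associated with the budget constraint (\ref{Ttotal}), writing $u\ge 0$ for its multiplier and handling the box constraints (\ref{T01}) by projection. The stationarity condition $\partial L/\partial t_f=0$ reads $\lambda_b\log_2(1+\tau)q_f\zeta_2/(\zeta_1 t_f+\zeta_2)^2=u$. Solving for the positive root yields $t_f=\zeta_1^{-1}\big(\sqrt{\lambda_b\log_2(1+\tau)q_f\zeta_2/u}-\zeta_2\big)$, which is exactly the interior form in (\ref{opMRTasymT}) before clipping. Because $\phi$ is strictly increasing in $t_f$ (consistent with Property 2 of Lemma \ref{propertiesSTP}), the budget constraint binds at the optimum whenever $F_c>C$, so complementary slackness forces $\sum_f t_f^*=C$, which pins down $u^*$ through the water-filling equation (\ref{opMRTasymTdualsolution}).

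The main obstacle is the careful treatment of the box constraints $t_f\in[0,1]$. I would introduce multipliers for $t_f\ge 0$ and $t_f\le 1$ and carry out the standard case analysis: when the unconstrained root lies in $(0,1)$ both multipliers vanish and $t_f^*$ equals the interior expression, whereas when the root falls outside $[0,1]$ complementary slackness pins $t_f^*$ to the violated bound, which is precisely the effect of the clipping operator $[\,\cdot\,]_0^1$. Verifying that this projected candidate jointly satisfies all KKT conditions, and that a feasible water level $u^*$ solving (\ref{opMRTasymTdualsolution}) always exists by the monotonicity and continuity of its left-hand side in $u^*$, completes the proof.
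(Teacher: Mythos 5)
Your proposal is correct and follows essentially the same route as the paper, which likewise derives the corollary by observing that for $N=1$ the objective of Problem \ref{problemTgivenFc} is concave, invoking Slater's condition, and reading off the water-filling solution from the KKT stationarity and complementary slackness conditions. The only loose end is your justification of $\zeta_1(\tau)>0$ (Lemma \ref{propertiesSTP} only asserts $\mu_A,\mu_B\le 1$ and $\nu_A,\nu_B>0$, not positivity of the $\mu$'s), but the claim is true — writing $\zeta_1(\tau)=1-\tau^{2/\beta}\int_0^{\tau^{-2/\beta}}(1+u^{\beta/2})^{-1}\,\mathrm{d}u$ shows it lies in $(0,1)$ — so this is a presentational gap rather than a mathematical one.
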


Finally, combining the analysis of the properties of $\mathcal{F}_c^*$ in Theorem \ref{optimalFc} and the optimization of $\mathbf{t}$ under given $\mathcal{F}_c$, we can obtain $\left(\mathcal{F}_c^*,\mathbf{t}^* \right)$ to Problem \ref{problemMRTorigin}. The process of solving  Problem \ref{problemMRTorigin} is summarized in Algorithm \ref{Optimalexactall}.

\begin{algorithm}
\caption{Optimal Solution to Problem \ref{problemMRTorigin}}
\algsetup{linenodelimiter=.}
\begin{algorithmic}[1]\label{Optimalexactall}

\STATE \textbf{Initialization}: $R_{}^*=\infty$.
\FOR{$F_c=C:F-B$ }    \label{searchfile}
 \STATE Choose $\mathcal{F}_c \in \left \{\mathcal{X} \subseteq \mathcal{F} : |\mathcal{X} |=F_c \right \}$. \label{searchfilein}
 \STATE Obtain the optimal solution $\mathbf{t}^{*} \left(\mathcal{F}_c \right)$  to Problem \ref{problemTgivenFc} using Algorithm 1 (when $N>1$) or Corollary \ref{OptimalTgivenFc} (when $N=1$).
 \IF {$R_{}^*<R_{}\left(\mathcal{F}_c , \mathbf{t}^{*} \left(\mathcal{F}_c \right)\right)$}
 \STATE $R_{}^*=R_{}\left(\mathcal{F}_c , \mathbf{t}^{*} \left(\mathcal{F}_c \right)\right)$,  $\left(\mathcal{F}_c^{*},\mathbf{t}^{*} \right)=\left(\mathcal{F}_c , \mathbf{t}^{*} \left(\mathcal{F}_c \right)\right)$.
  \ENDIF
\ENDFOR
\end{algorithmic}
\end{algorithm}

 Algorithm \ref{Optimalexactall} includes two layers. In the outer layer, we search $\mathcal{F}_c^*$ by checking all the possible  $\sum_{i=C}^{F-B} \binom{F}{i}$ choices, and in the inner layer, we utilize Algorithm \ref{gradientorigin} or Corollary \ref{OptimalTgivenFc} to obtain $\mathbf{t}^* \left( \mathcal{F}_c \right)$ under given $\mathcal{F}_c$.
We refer to the optimal solution based on  Algorithm \ref{Optimalexactall} as \emph{Exact Opt.}.

\subsection{Asymptotic ASE Optimization based on the Upper Bound in (\ref{PMRTinfiupperhigh})}
In Algorithm \ref{Optimalexactall}, we need to consider $\sum_{i=C}^{F-B} \binom{F}{i}$ choices and the complexity is $\Theta(F^F)$. When $F$ is very large,  Algorithm \ref{Optimalexactall} is not acceptable due to high complexity. Furthermore, the expression of $P_{\textrm{s}}^{f,c} \left(t_f\right)$ is complex and we need to calculate the matrix inverse to obtain the derivative of $t_f$. Thus Algorithm \ref{Optimalexactall} requires high complexity if the number of antennas is large. Note that the upper bound of the STP closely approximates the STP, as illustrated in Fig. \ref{performanceall}, therefore we utilize the upper bound of the STP, i.e., $P_{\textrm{s}}^{u,f,c} \left(t_f\right)$ to approximate $P_{\textrm{s}}^{f,c} \left(t_f\right)$.  To facilitate the optimization, we formulate the problem of optimizing the asymptotic upper bound of the ASE to provide insightful guidelines for the parameter design in the high user density region.  Based on the Corollary \ref{highupperboundMRT}, the asymptotic optimization problem is formulated as follows.
\begin{problem}\label{ProblemMRTasymT}
(Asymptotic ASE Optimization)
\begin{align}
 R_{u,\infty}^* \triangleq \max_{\mathcal{F}_c, \mathbf{t}}& ~~ R_{u,\infty} \left(\mathcal{F}_c,\mathbf{t} \right) \notag \\
 s.t & ~~~(\ref{fileconstraint}),(\ref{T01}),(\ref{Ttotal}).
\end{align}
\end{problem}

The above problem is a mixed-integer problem.  By carefully investigating the characteristic of $R_{u,\infty} \left(\mathcal{F}_c,\mathbf{t} \right)$, which is a series of fractional functions of $\mathbf{t}$ in (\ref{PMRTinfiupperhigh}), we obtain the following lemma to reveal the properties of $R_{u,\infty} \left(\mathcal{F}_c,\mathbf{t} \right)$.
\begin{lemma}\label{upperboundproperty}
Under a given $\mathcal{F}_c$, $R_{u,\infty} \left(\mathcal{F}_c,\mathbf{t}\right)$ is an increasing function of $t_f$ for any $f \in \mathcal{F}_c$.
\end{lemma}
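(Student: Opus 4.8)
The plan is to reduce the statement to a one–variable monotonicity fact and then prove that fact by a stochastic–dominance argument. First I would fix $\mathcal{F}_c$ and inspect the objective in (\ref{PMRTinfiupperhigh}): the second sum $\sum_{f\in\mathcal{F}\setminus\mathcal{F}_c}\frac{q_f B}{\max(F_b,B)}P_{\textrm{s}}^{u,b}$ contains no $\mathbf{t}$, and the first sum is separable, so $R_{u,\infty}$ depends on a given $t_f$ only through the single term $\lambda_b\log_2(1+\tau)\,q_f\,P_{\textrm{s}}^{u,f,c}(t_f)$. Since $\lambda_b$, $\log_2(1+\tau)$ and $q_f$ are strictly positive, it suffices to prove that $P_{\textrm{s}}^{u,f,c}(t_f)$ in (\ref{uppercachefile}) is increasing in $t_f$ on $[0,1]$.

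For that monotonicity I would avoid differentiating (\ref{uppercachefile}) directly: each summand $\frac{t_f}{\theta_A(i)t_f+\theta_C(i)}$ is individually increasing, because its denominator stays positive on $[0,1]$ and $\theta_C(i)>0$, so its derivative $\theta_C(i)/(\theta_A(i)t_f+\theta_C(i))^2$ is positive; however, the factor $(-1)^{i+1}$ turns $\frac{d}{dt_f}P_{\textrm{s}}^{u,f,c}$ into an alternating sum whose sign is not evident. Instead I would return to the representation established inside the proof of Theorem \ref{upperboundMRT}, $P_{\textrm{s}}^{u,f,c}(t_f)=1-\mathbb{E}\big[(1-e^{-\alpha\tau I\|\mathbf{x}_1\|^{\beta}})^{N}\big]$, and recast it probabilistically: since $(1-e^{-\alpha y})^{N}=\mathbb{P}(\max_k E_k\le\alpha y\mid y)$ for i.i.d. $E_k\sim\textrm{Exp}(1)$, one obtains $P_{\textrm{s}}^{u,f,c}(t_f)=\mathbb{P}(\tilde g>\tau I\|\mathbf{x}_1\|^{\beta})$ with $\tilde g=\alpha^{-1}\max_k E_k$ independent of the geometry. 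It then suffices to show that $y=\tau I\|\mathbf{x}_1\|^{\beta}$ is stochastically decreasing in $t_f$, since $\mathbb{P}(\tilde g>y)$ is the expectation of the decreasing function $\bar F_{\tilde g}(\cdot)$ of $y$.

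I would establish this by conditioning on the serving distance $r=\|\mathbf{x}_1\|$ and writing $P_{\textrm{s}}^{u,f,c}(t_f)=\mathbb{E}_{r}[\Psi(r,t_f)]$ with $\Psi(\rho,t_f)=\mathbb{P}(\tilde g>\tau\rho^{\beta}I\mid r=\rho)$. Two facts drive the argument. (i) The serving distance obeys $\mathbb{P}(r>\rho)=e^{-\pi t_f\lambda_b\rho^{2}}$, which is decreasing in $t_f$, so $r$ is first-order stochastically decreasing in $t_f$; and $\Psi(\cdot,t_f)$ is decreasing in $\rho$ because a larger link distance weakens the signal. (ii) For fixed $\rho$, the conditional interference $I$ is stochastically decreasing in $t_f$: given that the nearest $f$-cached BS lies at distance $\rho$, on $\{|x|>\rho\}$ the $f$-cached interferers (density $t_f\lambda_b$) and the non-$f$-cached interferers (density $(1-t_f)\lambda_b$) superpose into one PPP of density $\lambda_b$ that does \emph{not} depend on $t_f$, whereas inside the disk $\{|x|\le\rho\}$ the only interferers are non-$f$-cached, forming a PPP of density $(1-t_f)\lambda_b$ that thins as $t_f$ grows. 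A monotone coupling on the disk makes $I$ pointwise non-increasing in $t_f$, so $\Psi(\rho,t_f)$ is increasing in $t_f$. Finally, for $t_1<t_2$ I would split $P_{\textrm{s}}^{u,f,c}(t_2)-P_{\textrm{s}}^{u,f,c}(t_1)=\big(\mathbb{E}_{r(t_2)}[\Psi(r(t_2),t_2)-\Psi(r(t_2),t_1)]\big)+\big(\mathbb{E}_{r(t_2)}[\Psi(r(t_2),t_1)]-\mathbb{E}_{r(t_1)}[\Psi(r(t_1),t_1)]\big)$, where the first bracket is nonnegative by (ii) and the second by (i) together with the monotonicity of $\Psi$ in $\rho$.

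The crux I expect is fact (ii): naively, raising $t_f$ pulls the serving BS closer but simultaneously demotes the former serving BS to an interferer, so the net effect on $I$ looks ambiguous. The decomposition that merges the two BS classes beyond the serving radius into a $t_f$-free density-$\lambda_b$ process is what dissolves this ambiguity and isolates a genuinely monotone (thinning) dependence inside the disk. As a by-product, since the argument never invokes the specific law of $\tilde g$, the same reasoning reproves Property 2 of Lemma \ref{propertiesSTP} with $\tilde g$ replaced by the exact gain $g_1\sim\textrm{Gamma}(N,1)$.
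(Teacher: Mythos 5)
Your proposal is correct, but it proves the lemma by a genuinely different route than the paper. The paper's Appendix~\ref{Proofpropertyupper} works analytically: it passes to $1-\mathbb{E}\big[(1-e^{-\alpha\tau I\|\mathbf{x}_1\|^\beta})^N\big]$, replaces the inner expectation over $I$ by the closed-form Laplace transform to get $1-\mathbb{E}_{\|\mathbf{x}_1\|}\big[U^N(t_f|\mathbf{x}_1)\big]$, and differentiates $U^N$ in $t_f$ for fixed $\mathbf{x}_1$. You instead recast the upper bound as $\mathbb{P}(\tilde g>\tau I\|\mathbf{x}_1\|^\beta)$ with $\tilde g=\alpha^{-1}\max_k E_k$ and prove that $\tau I\|\mathbf{x}_1\|^\beta$ is stochastically decreasing in $t_f$ via the inside/outside-the-disk decomposition and a thinning coupling. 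Your route buys two things the paper's argument does not deliver cleanly. First, the paper's step ``after taking expectation over $I$'' implicitly uses $\mathbb{E}_I[(1-e^{-sI})^N]=(1-\mathcal{L}_I(s))^N$, which is false for $N\ge 2$ (by Jensen, the left side dominates the right), so the paper is really proving monotonicity of a further-relaxed bound; your argument works with the correct joint law and never needs this interchange. Second, the paper's closing sentence waves away the fact that the distribution of $\|\mathbf{x}_1\|$ itself depends on $t_f$, whereas your two-bracket split handles exactly that dependence via first-order stochastic dominance of the serving distance. Your observation that term-by-term differentiation of the alternating sum in (\ref{uppercachefile}) is inconclusive is also accurate. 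One step you should tighten: the claim that $\Psi(\rho,t_f)$ decreases in $\rho$ is not immediate from ``a larger link distance weakens the signal,'' because enlarging $\rho$ also enlarges the exclusion ball and so stochastically reduces the conditional interference; the clean justification is the scaling map $x\mapsto x/\rho$, under which $\tau\rho^\beta I_\rho$ becomes the same shot-noise functional driven by point processes whose densities scale as $\rho^2$, hence is stochastically increasing in $\rho$ (equivalently, read it off the explicit $\mathcal{L}_I(i\alpha\tau\rho^\beta)=e^{-c_i\rho^2}$ with $c_i>0$ from Appendix~\ref{ProofMRTperformance}). With that patch, your proof is complete, and as you note it transfers verbatim to Property~2 of Lemma~\ref{propertiesSTP}.
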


\emph{Proof:} See Appendix \ref{Proofpropertyupper}.

Based on the properties of the asymptotic upper bound  $R_{u,\infty}(\mathcal{F}_c,\mathbf{t})$,  we then analyze the properties of the optimal file allocation  $\mathcal{F}_c^{*}$ and obtain $\mathcal{F}_c^{*}$ as a unique solution.
\begin{theorem}\label{aymoptFc}
(Asymptotic Optimal File Allocaiton)
In high user density region, i.e., $\lambda_u \rightarrow \infty$, the optimal cached file  $\mathcal{F}_C^*$  is given by $\mathcal{F}_C^*=\{B+1,B+2,\cdots,F \}$.
\end{theorem}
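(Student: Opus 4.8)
The plan is to prove optimality of $\mathcal{F}_c^*=\{B+1,\dots,F\}$ by a pair of exchange (swap) arguments that transform any feasible pair $(\mathcal{F}_c,\mathbf{t})$ into one with cached set $\{B+1,\dots,F\}$ without ever decreasing $R_{u,\infty}$. The starting observation is that the objective in (\ref{PMRTinfiupperhigh}) is \emph{separable}: each cached file $f$ contributes $q_f P_{\textrm{s}}^{u,f,c}(t_f)$, which depends only on its own caching probability $t_f$, while the entire backhaul set contributes a common factor $\tfrac{B}{\max(F_b,B)}P_{\textrm{s}}^{u,b}$ multiplied by $\sum_{f\in\mathcal{F}_b}q_f$, where $P_{\textrm{s}}^{u,b}$ is a constant independent of the allocation. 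Two facts drive the argument: by Lemma \ref{upperboundproperty}, $P_{\textrm{s}}^{u,f,c}(t_f)$ is increasing in $t_f$, so $P_{\textrm{s}}^{u,f,c}(t_f)\le P_{\textrm{s}}^{u,f,c}(1)=P_{\textrm{s}}^{u,b}$ for every $t_f\in[0,1]$; and the Zipf popularities in (\ref{Zipfdistribution}) are non-increasing, i.e. $q_1\ge q_2\ge\cdots\ge q_F$.

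First I would reduce to the case $F_b=B$. If $F_b<B$, the factor $\tfrac{B}{\max(F_b,B)}$ equals $1$, so moving any cached file $f_1$ to the backhaul keeps the factor at $1$ and raises that file's contribution from $q_{f_1}P_{\textrm{s}}^{u,f,c}(t_{f_1})$ to $q_{f_1}P_{\textrm{s}}^{u,b}\ge q_{f_1}P_{\textrm{s}}^{u,f,c}(t_{f_1})$, while all other contributions are unchanged (the freed cache budget can only be redistributed to help); this weakly increases $R_{u,\infty}$, and I repeat it until $F_b=B$. If $F_b>B$, I would move the \emph{least popular} backhaul file $f_0$ into the cache with $t_{f_0}=0$; the surviving $F_b-1\ge B$ backhaul files then see their factor rise from $\tfrac{B}{F_b}$ to $\tfrac{B}{F_b-1}$, and the net change evaluates to $\tfrac{BP_{\textrm{s}}^{u,b}}{F_b}\big(\tfrac{1}{F_b-1}\sum_{f\in\mathcal{F}_b\setminus f_0}q_f-q_{f_0}\big)$, which is nonnegative because each of the $F_b-1$ surviving popularities is at least $q_{f_0}$. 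Repeating drives $F_b$ down to $B$.

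With $F_b=B$ fixed (so the backhaul factor is exactly $1$), I would finish with a rearrangement argument. If $\mathcal{F}_b\ne\{1,\dots,B\}$ there must exist a cached file $j$ and a backhaul file $k$ with $j<k$, hence $q_j\ge q_k$. Swapping their roles---putting $j$ on the backhaul and caching $k$ with the budget $t_k:=t_j$ previously used by $j$---preserves the total cache budget and changes the objective by $(q_j-q_k)\big(P_{\textrm{s}}^{u,b}-P_{\textrm{s}}^{u,f,c}(t_j)\big)\ge0$, using $q_j\ge q_k$ and $P_{\textrm{s}}^{u,f,c}(t_j)\le P_{\textrm{s}}^{u,b}$. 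Performing these swaps repeatedly (bubble-sort fashion) yields $\mathcal{F}_b=\{1,\dots,B\}$, i.e. $\mathcal{F}_c^*=\{B+1,\dots,F\}$, without decreasing $R_{u,\infty}$; since every transformation stays feasible, the maximum of $R_{u,\infty}$ over $\mathbf{t}$ for this cached set is at least its maximum for any other $\mathcal{F}_c$, establishing optimality.

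I expect the main obstacle to be the $F_b>B$ reduction in the second step: unlike the other moves, relocating a backhaul file to the cache sacrifices that file's contribution entirely (I set $t_{f_0}=0$), so optimality hinges on showing that the discount relief $\tfrac{B}{F_b-1}-\tfrac{B}{F_b}$ enjoyed by the remaining backhaul files outweighs this loss. This is precisely where choosing $f_0$ to be the least popular backhaul file matters, through the majorization-type inequality $\sum_{f\in\mathcal{F}_b\setminus f_0}q_f\ge(F_b-1)q_{f_0}$; the remaining steps are routine consequences of the monotonicity in Lemma \ref{upperboundproperty} and the Zipf ordering.
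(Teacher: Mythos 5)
Your proposal is correct and follows essentially the same route as the paper's proof in Appendix F: an exchange argument that first forces $|\mathcal{F}_b^*|\ge B$ (using $P_{\textrm{s}}^{u,f,c}(t_f)\le P_{\textrm{s}}^{u,b}$), then $|\mathcal{F}_b^*|\le B$ (parking excess backhaul files in the cache with $t_f=0$ and exploiting that the top-$B$ popularities average at least as high as all $F_b$ of them), and finally swaps cached/backhaul files pairwise via the factorization $(q_j-q_k)\bigl(P_{\textrm{s}}^{u,b}-P_{\textrm{s}}^{u,f,c}(t_j)\bigr)\ge 0$. The only cosmetic difference is that you perform the reductions one file (or one swap) at a time where the paper does them in a single batch; your explicit statement of the averaging inequality $\frac{1}{F_b-1}\sum_{f\ne f_0}q_f\ge q_{f_0}$ is in fact cleaner than the paper's terse justification of that step.
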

\begin{proof}
 We first prove that the number of optimal cached files is $F-B$ and then prove that the optimal $F-B$ cached files are the least $F-B$ popular files. For details, please see Appendix \ref{ProofofasymTFc}.
\end{proof}

Theorem \ref{aymoptFc} interprets that we should transmit $B$ most popular files via the backhaul and cache the remaining files when the user density is very large.  Compared to the process  of checking $\sum_{i=C}^{F-B} \binom{F}{i}$ choices in Algorithm \ref{Optimalexactall}, we get a unique optimal solution of file allocation and thus largely reduce the complexity. When $\mathcal{F}_c^{*}$ is given, we  only need to optimize the continuous variable $\mathbf{t}$. We then use the gradient projection method get the local optimal solution. The algorithm is summarized in Algorithm \ref{aympalgorithm}.

\begin{algorithm}
\caption{Optimal Solution to Problem \ref{ProblemMRTasymT}} \label{aympalgorithm}
\algsetup{linenodelimiter=.}
\begin{algorithmic}[1]

\STATE \textbf{Initialization}: $n=1$, $n_{\textrm{max}}=10^{4}$, $\mathcal{F}_c^{*}= \left \{B+1,B+2,\cdots,F\right \}$, $t_i(1)=\frac {1} {F_c}$ for all $i \in \mathcal{F}_c^{*}$, $\theta_A \left(j \right)=1+\frac{  2\tau} {\beta-2} {}_2 F_1 \left[1,1-\frac{2} {\beta};2-\frac{2} {\beta};-j \alpha \tau\ \right] -\frac {2 \pi} {\beta} \csc \left(\frac{2 \pi} {\beta} \right) (j \alpha \tau)^{2 \backslash \beta}$,
$\theta_C \left(j \right)=\frac {2 \pi} {\beta} \csc \left(\frac{2 \pi} {\beta} \right) (j \alpha \tau)^{2 \backslash \beta}$ for all $j \in \{1,2,\cdots, N\}$.
\REPEAT
 \STATE $t_i^{'}(n+1)=t_i(n) + s(n) \left( \sum_{j=1}^{N}  \frac { (-1)^{j+1} \binom {N} {j} \theta_C \left(j \right) } {(\theta_A \left(j \right) t_i(n) +\theta_C \left(j \right))^2} \right)\lambda_b \log_2(1+\tau) q_i $ for all $i \in \mathcal{F}_c$.
 \STATE $t_i\left(n+1\right)=\left[ t_i^{'}(n+1)-u^* \right]_0^1$ for all $i \in \mathcal{F}_c$, where $u^*$ satisfying $\sum_{i \in \mathcal{F}_c} \left[ t_i^{'}(n+1)-u^* \right]_0^1=C$ and $[x]_0^1$ denotes $\max \{ \min\{1,x\},0 \}$.
\STATE $n=n+1$.
\UNTIL{Convergence or $n$ is larger than $n_{\textrm{max}}$.}
\end{algorithmic}
\end{algorithm}

\begin{figure}
\centering
    \includegraphics[height=2in,width=3in]{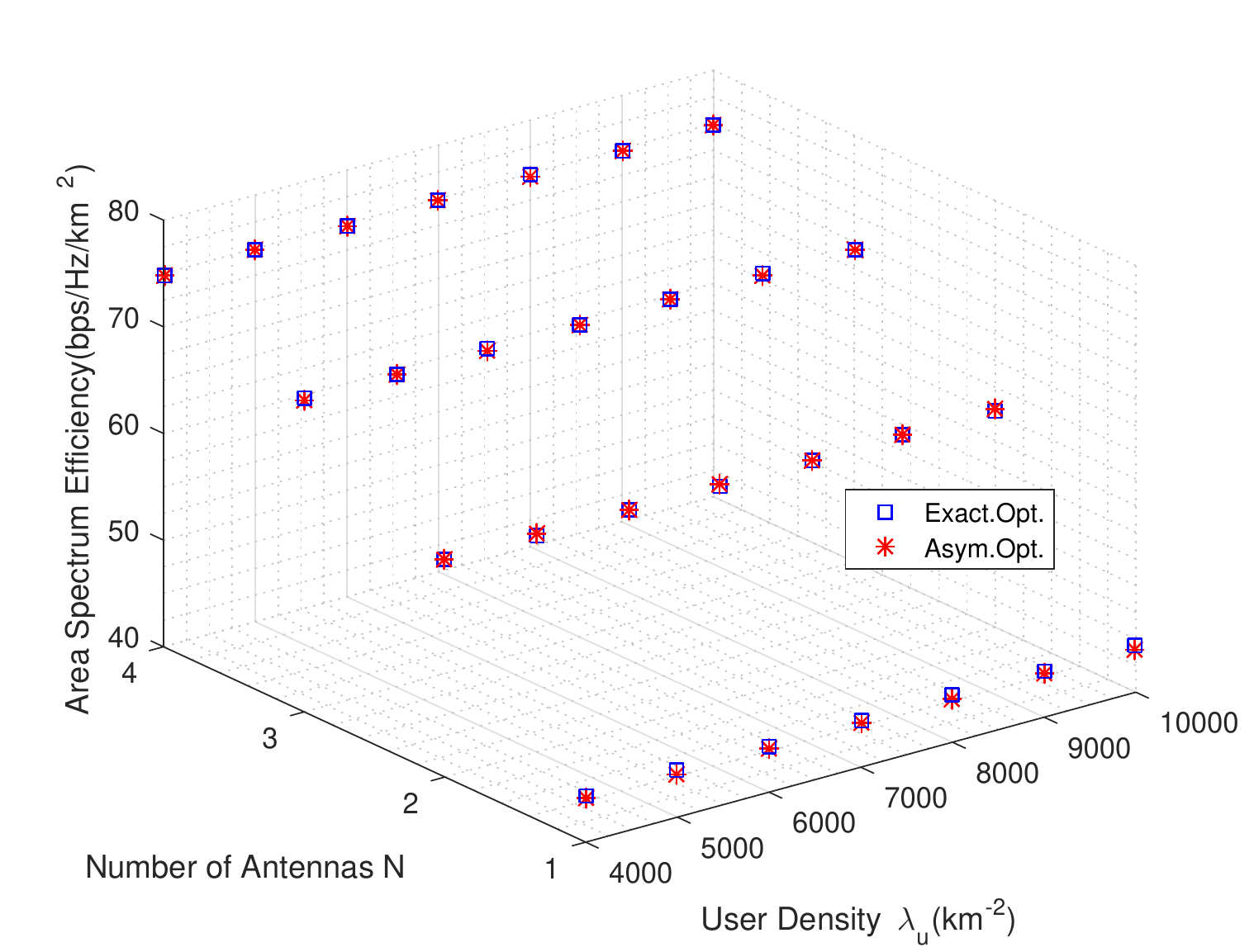}
\caption{Comparison between \emph{Exact Opt.} and \emph{Asym. Opt.}. $\lambda_b=10^{-4} \mathrm{m}^{-2}$, $\beta=4$, $F=6$, $B=C=2$, $\gamma=0.6$, $\tau=0\mathrm{dB}$.}
\label{CompareOptimalresult}
\end{figure}

Algorithm \ref{aympalgorithm} is guaranteed to converge because the gradient projection method converges to a local optimal point for solving a problem whose feasible set is convex. In Algorithm \ref{aympalgorithm}, the calculation of the matrix inverse (Step \ref{matrixinverse} in Algorithm \ref{gradientorigin}) and the search for  $\mathcal{F}_c^{*}$ (Step \ref{searchfilein} in Algorithm \ref{Optimalexactall} ) are avoided. Therefore, Algorithm \ref{aympalgorithm} achieves lower complexity comparing to Algorithm \ref{Optimalexactall}. We refer to the caching scheme based on Algorithm \ref{aympalgorithm} as \emph{Asym. Opt.}.

Now we utilize simulations to compare the proposed \emph{Exact Opt.} (the optimal solution obtained by Algorithm \ref{Optimalexactall}) and \emph{Asym. Opt.} (the asymptotic optimal solution obtained by Algorithm \ref{aympalgorithm}). From Fig. \ref{CompareOptimalresult}, we can see that the performance of  \emph{Asym. Opt.} is very close to that of \emph{Exact Opt.}, even when the user density is low. Therefore, Algorithm \ref{aympalgorithm} with low complexity  is applicable and effective for parameter design in general region.

\section{Numerical Results}\label{numeric results}
In this section, we compare the proposed asymptotic optimal caching scheme given by Algorithm \ref{aympalgorithm} with three caching schemes, i.e., the MPC (most popular caching) scheme \cite{Debbah15cache}, the UC (uniform caching) scheme \cite{15UCcaching} and the IID (identical independent distributed caching) scheme \cite{Nagananda15IIDcache}. In the MPC scheme, the BSs cache or use backhaul to deliver the most popular $B+C$ files. In the IID scheme, the BSs select  $B+C$ files to cache or transmit via the backhaul in an i.i.d manner with probability $q_i$ for file $i$.  In the UC scheme, the BSs select $B+C$ files according to the uniform distribution to cache or deliver via the backhaul. Note that in simulations, we consider the noise. Unless otherwise stated, our simulation environment parameters are as follows:  $P=6.3\mathrm{W}$, $\sigma_n=-97.5\mathrm{dBm}$, $\lambda_b=10^{-4} \mathrm{m}^{-2}$, $\lambda_u=5 \times 10^{-3} \mathrm{m}^{-2}$, $\beta=4$,  $F=500$, $\tau=0\mathrm{dB}$.

Fig. \ref{optimization1} and Fig. \ref{optimization2} illustrate the area spectrum efficiency vs. different parameters. We observe that the proposed asymptotic optimal scheme outperforms all previous caching schemes. In addition, out of the previous caching schemes, the MPC scheme achieves the best performance and the UC scheme achieves the worst performance.

Fig. \ref{optimization1} (a) plots the ASE vs. the number of BS antennas. We can see that the ASE of all the schemes increases with the number of BS antennas. This is because the increase of the number of BS antennas leads to larger spatial diversity and thus achieves better performance. It is shown that the increase of the number of BS antennas leads to an increasing gap between the proposed asymptotic optimal caching scheme and previous caching schemes.  This is because the better performance of a larger number of BS antennas leads to a larger gain when we exploit the file diversity. Furthermore, for asymptotic optimal caching scheme,  the less popular files are more likely to be stored when the number of BS antennas is large. Fig. \ref{optimization1} (b) plots the ASE vs. the Zipf parameter $\gamma$. We can see that the ASE of the proposed asymptotic optimal caching scheme, the MPC scheme and the IID scheme increases with the increase of the Zipf parameter $\gamma$. This is because when $\gamma$ increases, the probability that the popular files are requested increases, and hence, the users are more likely to require the popular files from the nearby BSs who cache the files or obtain the files via backhaul. The change of the Zipf parameter $\gamma$ have no influence to the ASE of the UC scheme. This is because in the UC scheme, all the files are stored/fetched with the same probability, and the change of the file popularity by altering $\gamma$ has no influence to the ASE.

\begin{figure}
\begin{minipage}[t]{0.5\linewidth}
\centering
    \includegraphics[height=2in,width=3in]{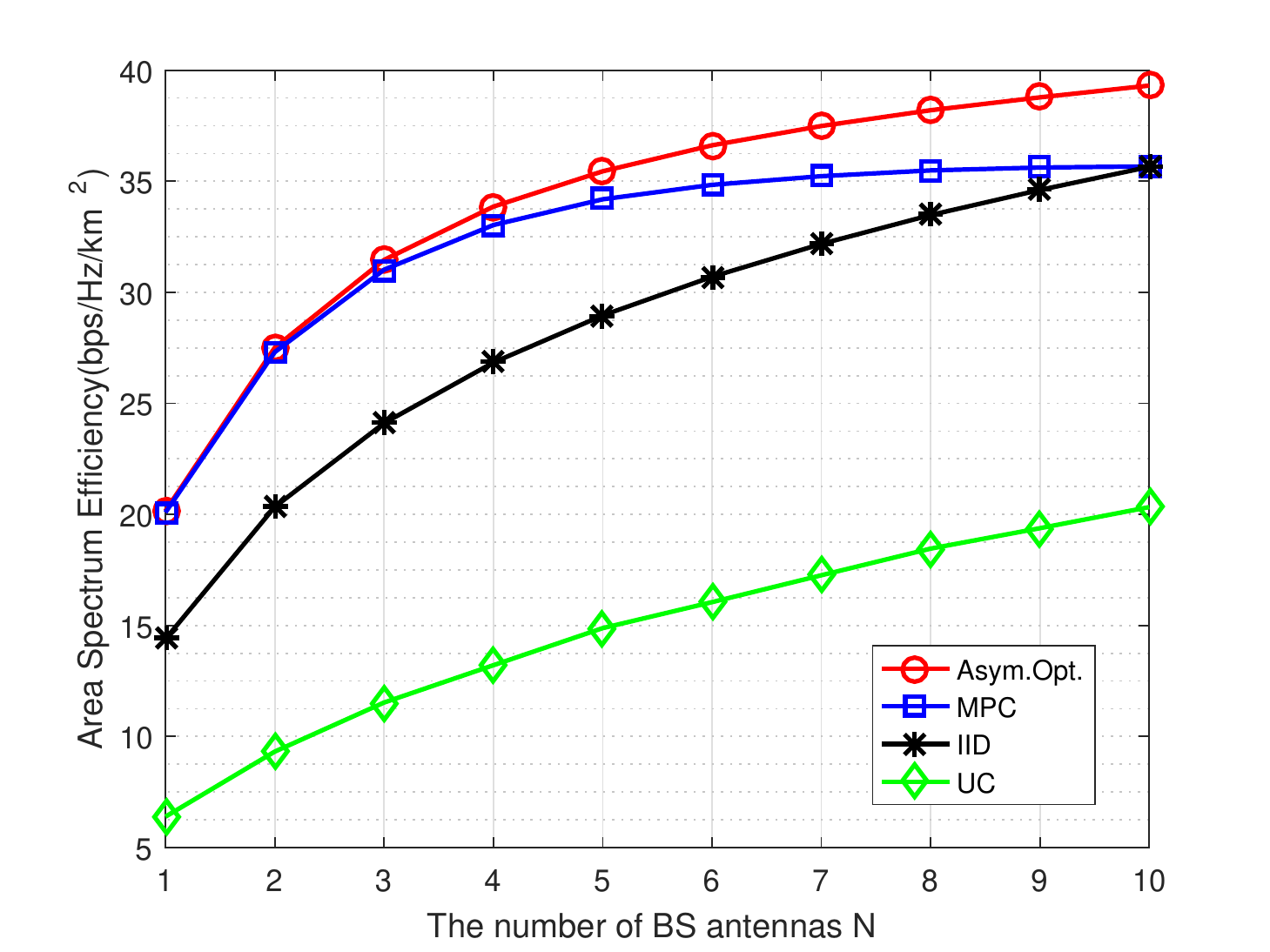}
\caption*{(a) Number of BS antennas at $C=30$, $B=20$, $\gamma=0.6$.}
\end{minipage}
\begin{minipage}[t]{0.5\linewidth}
\centering
     \includegraphics[height=2in,width=3in]{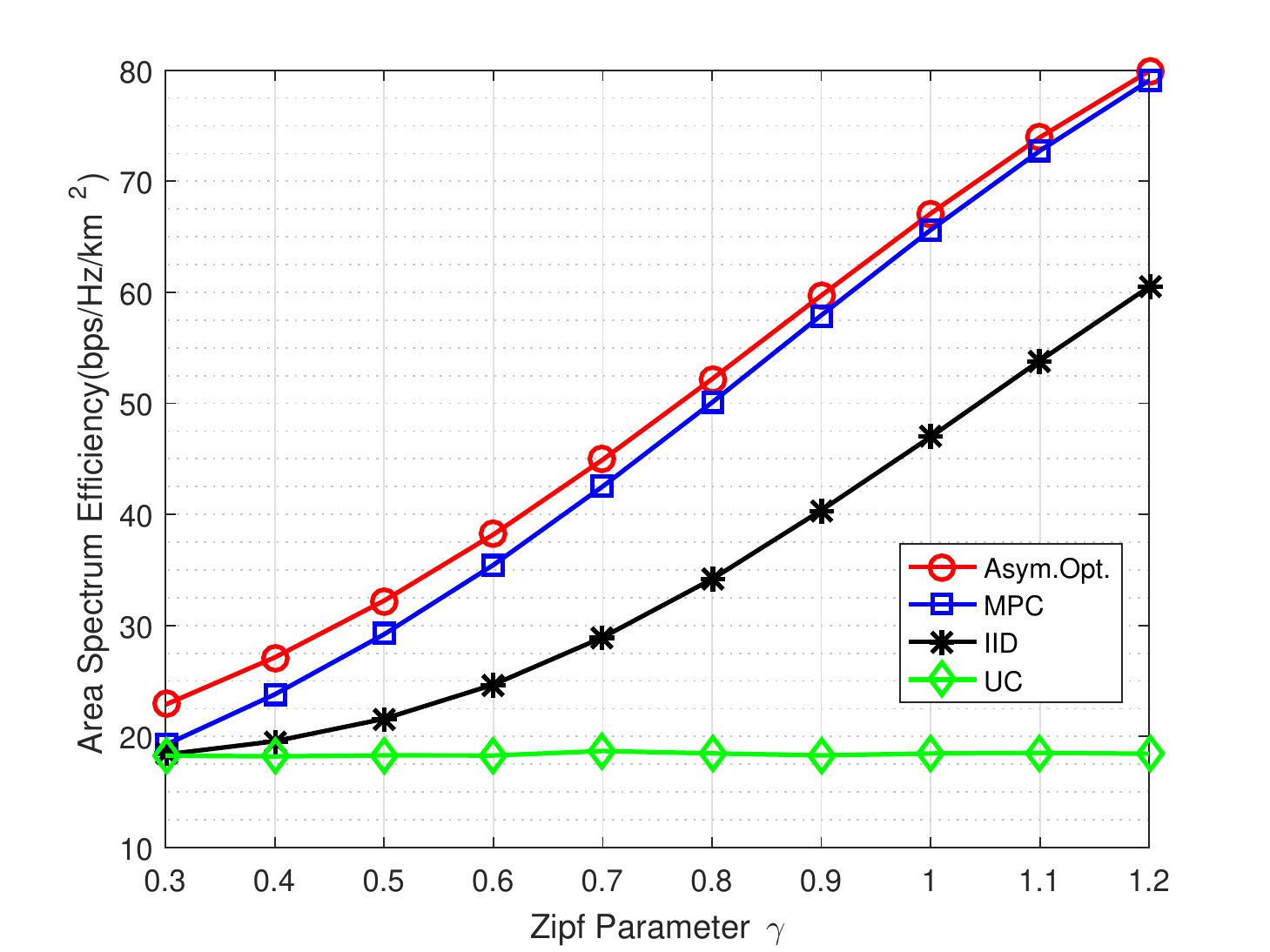}
\caption*{(b) Zipf parameter at $N=8$, $C=30$, $B=20$. }
\end{minipage}
\caption{ASE vs. the number of BS antennas $N$ and Zipf parameter $\gamma$.} \label{optimization1}
\end{figure}

\begin{figure}
\begin{minipage}[t]{0.5\linewidth}
\centering
    \includegraphics[height=2in,width=3in]{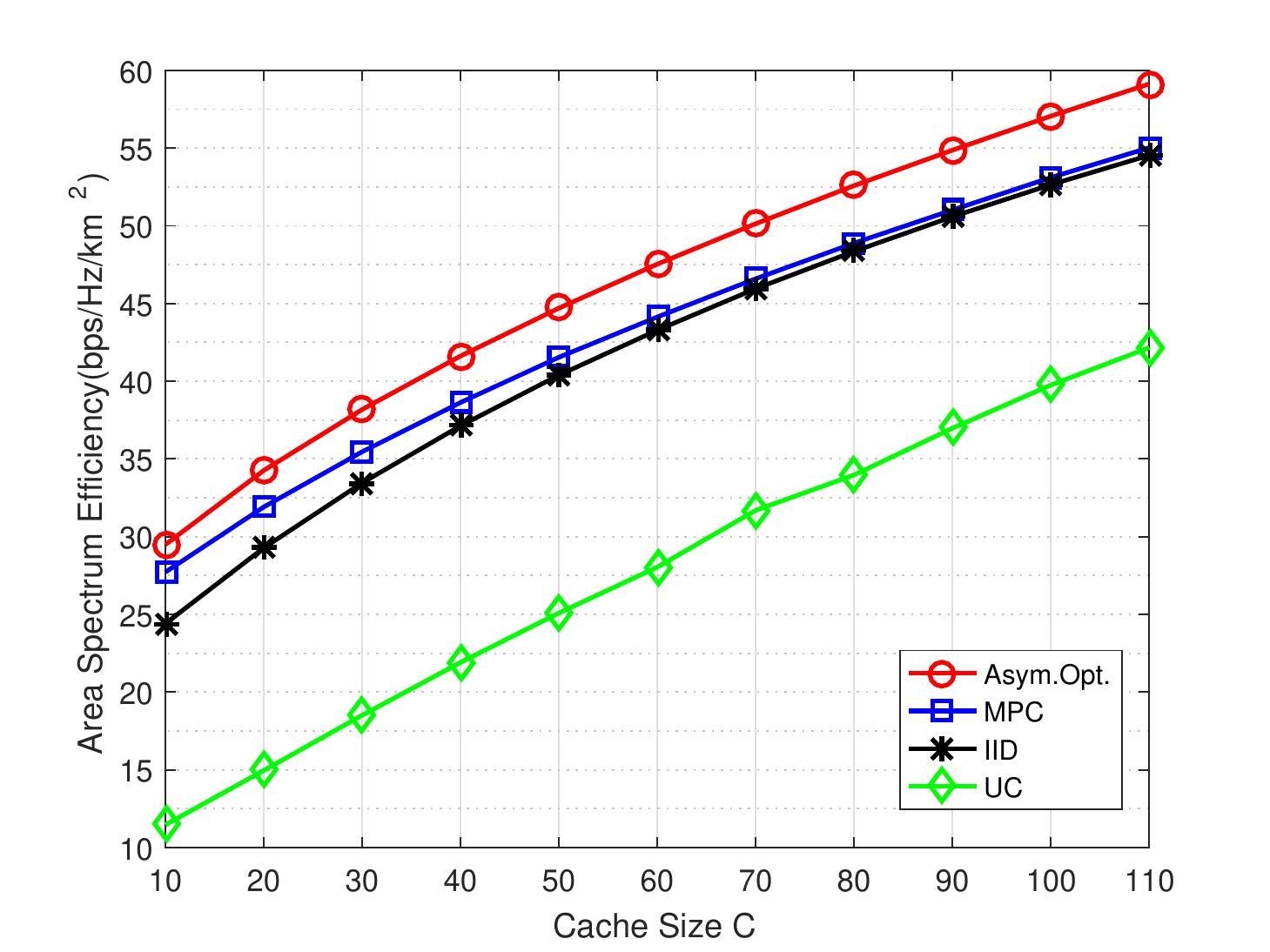}
\caption*{\footnotesize (a) Cache size at $N=8$, $B=20$, $\gamma=0.6$.}
\end{minipage}
\begin{minipage}[t]{0.5\linewidth}
\centering
     \includegraphics[height=2in,width=3in]{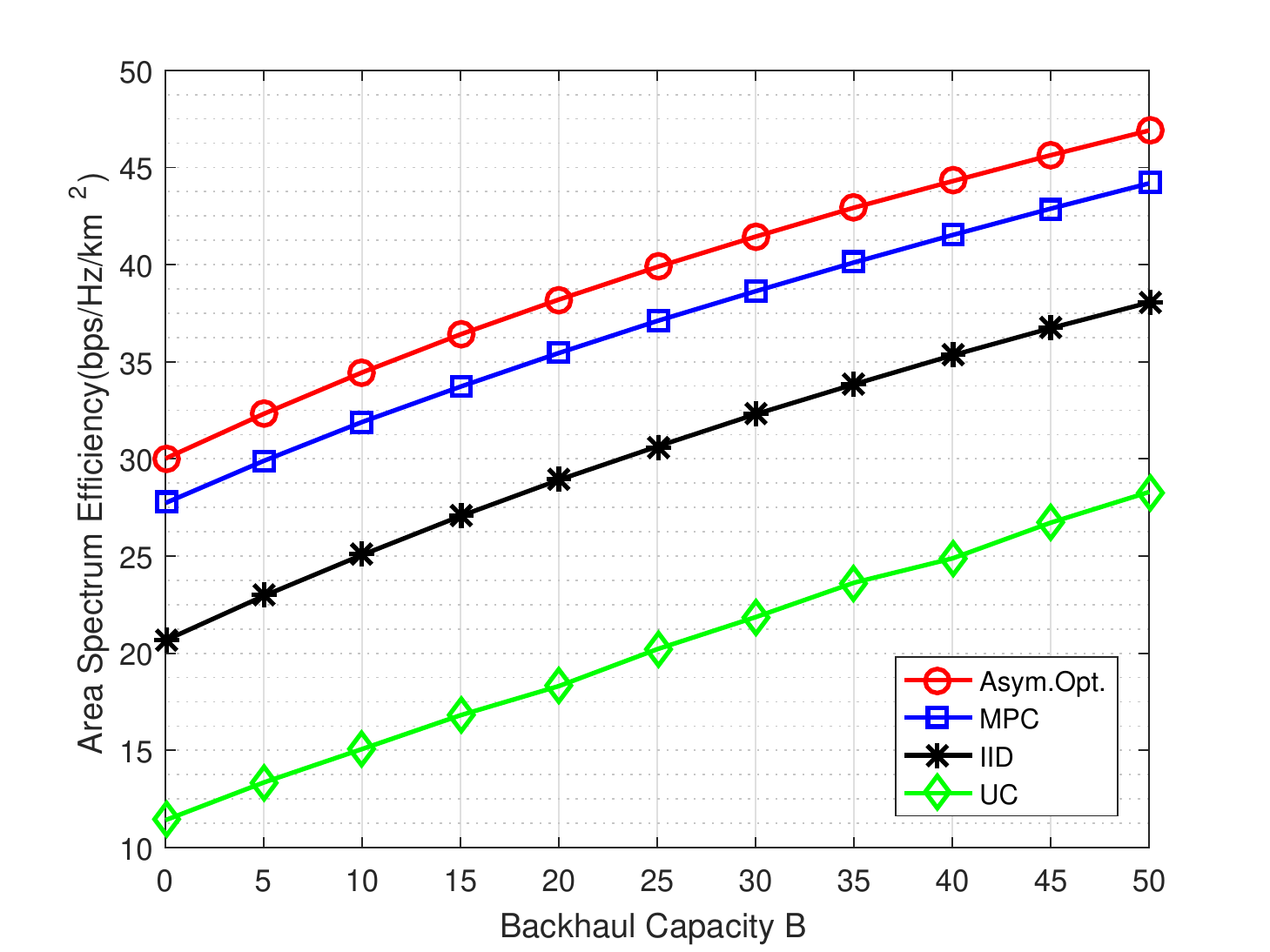}
\caption*{(b) Backhaul capability at $N=8$, $C=30$, $\gamma=0.6$.}
\end{minipage}
\caption{ASE vs. cache size $C$ and backhaul capability $B$. } \label{optimization2}
\end{figure}

Fig. \ref{optimization2} plots the ASE vs. the cache size $C$ or the backhaul capability $B$. From  Fig. \ref{optimization2}, we can see that the ASE of all the schemes increases with the cache size and the backhaul capability because the probability that a randomly requested file is cached at or delivered via the backhaul increases.  The increase of the cache size leads to the increase of the gap between the proposed asymptotic optimal caching scheme and the MPC scheme. This is because when the cache size is small, the ASEs of the proposed caching scheme and the MPC scheme mainly come from the spectrum efficiency of the backhaul files, which is same for the proposed caching scheme and the MPC scheme. The increase of the cache size can bring larger gains of caching diversity. However, the increase of the backhaul capability has no influence to the gap between the proposed caching scheme and the previous caching schemes.

\section{Conclusion}\label{conclusion}
In this paper, we consider the analysis and optimization of random caching in backhaul-limited multi-antenna networks. We propose a file allocation and cache placement design to effectively improve the network performance.  We first derive an exact expression and an upper bound of the successful transmission probability, using tools from stochastic geometry. Then, we consider the area spectrum efficiency maximization problem with continuous variables and integer variables.   We obtain a local optimal solution with reduced complexity by exploring optimal properties, and we also solve an asymptotic optimization problem in high user density region, utilizing the upper bound as the objective function. Finally, we show that the proposed asymptotic optimal caching scheme achieves better performance compared with the existing caching schemes, and the gains are larger when the number of the antennas is larger and/or the Zipf parameter is smaller.

\begin{appendix}
\subsection{Proof of Lemma \ref{fbrspmf} }\label{Proofloadspmf}

According to the thinning theory of the PPP, the density of the users that require file $f$ is $q_f \lambda_u$. We define $P_i$ as the probability that file $i$ is requiring by the BS. Note that $P_i$ is equivalent to the probability that the number of the users requiring file $i$ is not zero, which is $P_i=1-{\left(1+ \frac {q_i  \lambda_u} {3.5  \lambda_b} \right)}^{-4.5}$  according to \cite{Jeffrey13Loads}.  When file $f$ is requested by a certain BS, to calculate the pmf of $F_{b,1}^r$, we need to consider the rest $F_b-1$ files out of $\{\mathcal{F}_b \setminus f \}$ because file $f$ is always requested by the BS. We define the file from the set $\{\mathcal{F}_b \setminus f \}$ as  \emph{rest backhaul file}  and the file from the set $\{\mathcal{F}_{b,1}^r \setminus f \}$ as  \emph{rest backhaul request file}.  To calculate the probability that the number of the rest backhaul request files is $k$, i.e., $\mathbb{P}_f^{\mathcal{F}_b}\left(F_{b,1}^r-1=k \right)$, we combine all the probabilities of the cases when  any given $k-1$ rest backhaul files are required and the $F_b-k$ rest backhaul files are not required. Therefore, $\mathbb{P}_f^{\mathcal{F}_b} \left(F_{b,1}^r=k \right)$ is given by
\vspace{0.1cm}
\begin{align}
\mathbb{P}_f^{\mathcal{F}_b} \left(F_{b,1}^r-1=k \right) = \sum_{\mathcal{Y} \in \left \{\mathcal{X} \subseteq \left \{ \mathcal{F}_b \setminus f \right \} : |\mathcal{X} |=k \right \} }
\prod_{i \in \mathcal{X} } P_i \prod_{i \in \mathcal{B} \setminus \mathcal{X} } \left(1- P_i \right), k=\{0,1,\cdots,F_b-1 \}.
\end{align}
Therefore we finish the proof of \emph{Lemma} \ref{fbrspmf}.

\vspace{-0.4em}

\subsection{Proof of Theorem \ref{MRTsuccessP} }\label{ProofMRTperformance}
The STP of $f \in \mathcal{F}_c$ is given by
\begin{align}
P_{\textrm{s}}^{f,c}(t_f) &=\mathbb{P} \left( \textrm{SIR}_{}^f  >\tau \right) \notag \\
&=\mathbb{P}\left(g_{\textrm{1}}> \tau I_{}{\| \mathbf{x}_1\|}^{\beta}  \right) \notag \\
&\overset{(a)} {=}\mathbb{E}_{\mathbf{x}_1} \left[ \sum_{m=0}^{N-1} \frac {(-1)^m \tau^m {\| \mathbf{x}_1\|}^{m \beta}} {m!}    \mathcal{L}_{I_{}}^{(m)} \left(\tau {\| \mathbf{x}_1\|}^{\beta} \right) \right ],
\end{align}
where (a) follows from $g_{1} \sim \textrm{Gamma}(N,1)$, $I \triangleq \sum_{ i \in \left \{ \Phi_b \setminus 1 \right \} }    \| \mathbf{x}_i \|^{-\beta} g_i $, $\mathcal{L}_{I} (s)
\triangleq \mathbb{E}_{I} \left[ \exp (-Is) \right]$ is the Laplace transform of $I$, and $\mathcal{L}_{I}^{(m)} (s)$ is the $m$th derivative of $\mathcal{L}_{I} (s)$.

Denote $s=\tau {\| \mathbf{x}_1\|}^{ \beta}$ and $y_m^{}=\frac {(-1)^m s^m} {m!} \mathcal{L}_{I_{}}^{(m)} (s)$, we have $P_{\textrm{s}}^{f,c}(t_f) = \mathbb{E}_{\mathbf{x}_1}  \left[\sum_{m=0}^{N-1} y_m \right]$. To distinguish the the interference, we define $I_{}^f=\sum_{i \in \Phi_b^f \setminus B_{f,0}}   \| \mathbf{x}_1 \|^{-\beta} g_{i}$ and $I_{}^{-f}=\sum_{i \in \Phi_b^{-f} \setminus B_{f,0}}   \| \mathbf{x}_1 \|^{-\beta} g_{i}$. We then have $\mathcal{L}_{I_{}} (\tau {\|\mathbf{x}_1\|}^{\beta}))= \mathcal{L}_{I_{}^f} (\tau {\|\mathbf{x}_1\|}^{\beta})) \mathcal{L}_{I_{}^{-f}} (\tau {\|\mathbf{x}_1\|}^{\beta}))$.

Therefore, we can derive the expression of $\mathcal{L}_{I_{}^f} \left(\tau {\|\mathbf{x}_1\|}^{\beta})\right)$ as follows
\begin{align}\label{MRTfLaplace}
\mathcal{L}_{I_{}^f} \left(\tau {\|\mathbf{x}_1\|}^{\beta})\right)&=\mathbb{E} \left[ \exp \left\{ -\tau {\|\mathbf{x}_1\|}^{\beta} \sum_{ i \in \left \{ \Phi_b^f \backslash 1 \right \} }  \| \mathbf{x}_i \|^{-\beta} g_{i} \right\} \right] \notag \\
&\overset{(a)}{=}\prod_{i \in \left \{ \Phi_b^f \backslash 1 \right \} } \mathbb{E} \left[ \frac {1} {1+\tau {\|\mathbf{x}_1\|}^{\beta} \| \mathbf{x}_i \|^{-\beta} }  \right] \notag \\
&\overset{(b)} {=}\exp\left(-2 \pi t_f \lambda_b \int_{\|\mathbf{x}_1 \|}^{\infty}  \left(1-\frac {1} {1+\tau {\|\mathbf{x}_1\|}^{\beta} r^{-\beta}} \right) r \mathrm{d} r \right)\notag \\
&= \exp \left(- \pi t_f \lambda_b \frac{ 2 \tau} {\beta-2} {}_2 F_1 \left[1,1-\frac{2} {\beta};2-\frac{2} {\beta};-\tau\ \right] {\|\mathbf{x}_1\|}^2  \right),
\end{align}
where (a) follows from $g_{i}  \sim  \exp(1)$ due to the random beamforming effect, (b) follows from the probability generating functional (PGFL) of the PPP \cite{haenggi2012stochastic}.
Similarly, we have
\begin{align}\label{MRTnfLaplace}
\mathcal{L}_{I_{}^{-f}} \left(\tau {\|\mathbf{x}_1\|}^{\beta} \right)&=\prod_{i \in \Phi_b^{-f} } \mathbb{E} \left[ \exp \left\{ -\tau {\|\mathbf{x}_1\|}^{\beta} \| \mathbf{x}_i \|^{-\beta} g_{i}\right\} \right] \notag \\
&=\exp\left(-2 \pi (1-t_f) \lambda_b \int_{0}^{\infty}  \left(1-\frac {1} {1+\tau {\|\mathbf{x}_1\|}^{\beta} r^{-\beta}} \right) r \mathrm{d} r \right)\notag \\
&=\exp \left( - \pi (1-t_f) \lambda_b \frac{2 \pi} {\beta} \csc \left(\frac{2 \pi} {\beta} \right) \tau^{2 \backslash \beta} {\|\mathbf{x}_1\|}^2  \right).
\end{align}

Therefore, the expression of $\mathcal{L}_{I_{}} (\tau {\|\mathbf{x}_1\|}^{\beta}))$ is given by
\begin{align}
&\mathcal{L}_{I_{}} (\tau {\|\mathbf{x}_1\|}^{\beta}))=\exp \left(- \pi \lambda_b  \left( t_f \frac{  2\tau} {\beta-2} {}_2 F_1 \left[1,1-\frac{2} {\beta};2-\frac{2} {\beta};-\tau\ \right] +(1-t_f) \frac {2 \pi} {\beta} \csc \left(\frac{2 \pi} {\beta} \right) \tau^{2 \backslash \beta} \right)  {\|\mathbf{x}_1\|}^2  \right).  \notag
\end{align}

To further calculate the STP, we need to calculate the $n$th derivative of Laplace transform $\mathcal{L}_{I_{}}^{(n)} (s)$. After some calculations, we can derive the following recursive relationship
\begin{align}\label{MRTrecursive}
\mathcal{L}_{I_{}}^{(n)} (s) &= \sum_{i=0}^{n-1} \binom {n-1} {i} (-1)^{n-i} (n-i)!\left[\pi t_f \lambda_b \int_{{\|\mathbf{x}_1 \|}^2}^{\infty} \frac {\left( v^{-\frac{\beta} {2} }\right)^{n-i} \mathrm{d} v } {\left(1+s v^{-\frac{\beta} {2} } \right)^{n-i+1}} \right] \mathcal{L}_{I_{}}^{(i)} (s)\notag \\
&+\sum_{i=0}^{n-1} \binom {n-1} {i} (-1)^{n-i} (n-i)!\left[ \pi (1-t_f) \lambda_b \int_{0}^{\infty} \frac {\left( v^{-\frac{\beta} {2} }\right)^{n-i} \mathrm{d} v} {\left(1+s v^{-\frac{\beta} {2} } \right)^{n-i+1}}  \right] \mathcal{L}_{I_{}}^{(i)} (s).
\end{align}
According to the definition of $y_n^{}$ and $s$, we have
\begin{align}
y_n^{}=a \sum_{i=0}^{n-1}  \frac {n-i} {n} l_{n-i}^{} y_i^{},
\end{align}
where $l_{i}=\left((1-t_f)\int_{0}^{\infty} \frac {\left( w^{-\frac{\beta} {2} }\right)^{i} \mathrm{d} w} {\left(1+ w^{-\frac{\beta} {2} } \right)^{i+1}} + t_f\int_{\tau^{-2 \backslash \beta}}^{\infty} \frac {\left( w^{-\frac{\beta} {2} }\right)^{i} \mathrm{d} w} {\left(1+ w^{-\frac{\beta} {2} } \right)^{i+1}} \right), i=\{1,2,\cdots, N-1 \}$ and $a=\pi \lambda_b {\|\mathbf{x}_1 \|}^2 \tau^{2 \backslash \beta}$. Note that $l_{i}$ can be expressed as the combination of the Gauss hypergeometric function and the Beta function, which is presented in Theorem \ref{MRTsuccessP}. Let $l_{0}= \big ( t_f \frac{  2\tau} {\beta-2} {}_2 F_1 \big [1,1-\frac{2} {\beta};2-\frac{2} {\beta};-\tau\ \big] +(1-t_f) \frac {2 \pi} {\beta} \csc \left(\frac{2 \pi} {\beta}  \big) \tau^{2 \backslash \beta} \right)  $ and we then have $y_0^{}=\mathcal{L}_{I_{}} (\tau {\|\mathbf{x}_1\|}^{\beta}))=\exp \left(- \pi \lambda_b  l_{0}  {\|\mathbf{x}_1\|}^2  \right)$.

To get the expression of $y_n$, we need to solve a series of linear equality. We then construct a Toeplitz matrix  as \cite{Changli14SmallCell} and after some manipulations, we obtain $P_{\textrm{s}}^{f,c}\left(t_f\right)$ as follows
\begin{align}
P_{\textrm{s}}^{f,c}(t_f)= \mathbb{E}_{\mathbf{x}_1} \left[ \left \|y_0^{} \sum_{i=0}^{N-1} \frac {1} {i!} a^i \mathbf{D}_{}^{i}  \right\|_1 \right],
\end{align}
where $\| \|_1$ is the $l_1$ induced matrix norm and the expression of $\mathbf{D}_{}$ is
 \begin{align}
\mathbf{D}_{}=\begin{bmatrix}
0 &  \\
l_{1} &0 \\
l_{2} & l_{1} &0  \\
\vdots &\vdots & &\ddots \\
l_{N}& l_{M-2}^{} &\cdots &l_{1} & 0
\end{bmatrix}.
\end{align}

According to the thinning of the PPP, the PDF of the distance of the closest $f$-cached BS to the typical user is $ 2 \pi t_f  \lambda_b \|\mathbf{x}_1\| e^{-\pi t_f \lambda_b {\|\mathbf{x}_1\|}^2}$. After taking expectation over $\mathbf{x}_1$ and utilizing the Taylor expansion, the STP is given by
\begin{align}
P_{\textrm{s}}^{f,c} \left(t_f\right)=\frac {t_f} {t_f+ l_{0}} \left \| \left[ \mathbf{I}- \left(\frac {\tau^{2 \backslash \beta}} {t_f+ l_{0}}\right)\  \mathbf{D}_{} \right]^{-1}\right \|_1.
\end{align}

 We can obtain $P_{\textrm{s}}^{b} , f \in \mathcal{F}_b$ similarly and we omit the details due to space limitation.

\vspace{-0.8em}

\subsection{Proof of \emph{Lemma} \ref{propertiesSTP}}\label{ProofSTPproperties}
Firstly, we prove the property 1.  Let $\mathbf{A} \triangleq \left[ \mathbf{I}- \left(\frac {\tau^{2 \backslash \beta}} {t_f+ l_{0}^{c,f}}\right)\  \mathbf{D}_{}^{c,f} \right]$ and we have $P_{\textrm{s}}^{f,c}(t_f)=\frac {t_f} {t_f+ l_{0}^{c,f}} \left \| \mathbf{A} ^{-1} \right \|_1$. We first derive the lower bound of $\left \| \mathbf{A} ^{-1} \right \|_1$. For any $\mathbf{x}$ and $\mathbf{y}$ satisfying $\mathbf{y} =\mathbf{A}^{-1} \mathbf{x}$, we have $\left \|  \mathbf{A}^{-1} \right \|_1 \geq \frac {\| \mathbf{y}\|_1}  {\| \mathbf{x}\|_1}$ due to the inequality ${\| \mathbf{y}\|_1}  \leq \left \|  \mathbf{A}^{-1} \right \|_1  {\| \mathbf{x}\|_1}$. Let $\mathbf{y}=[1,1,\cdots,1]^{T}$ and then we have
 \begin{align}\label{upperinvA}
 \left \|  \mathbf{A}^{-1} \right \|_1 \geq \frac {\| \mathbf{y}\|_1}  {\| \mathbf{x}\|_1} = \frac{N} {N-\frac {\tau^{2 \backslash \beta}} {t_f+ l_{0}^{c,f}} \sum_{i=1}^{N-1} (N-i) l_{i}^{c,f} }.
 \end{align}

We then derive the upper bound of $\left \| \mathbf{A} ^{-1} \right \|_1$. Noticing that $\mathbf{A} ^{-1} =\left(\mathbf{I}-\mathbf{A} \right)  \mathbf{A} ^{-1} +\mathbf{I}$ and using the triangle inequality, we have $\left\|  \mathbf{A}^{-1} \right\|_1 \leq   \left \| \left(\mathbf{I}-\mathbf{A} \right) \right \|_1 \left\|  \mathbf{A}^{-1} \right\|_1 +\left\|  \mathbf{I} \right\|_1$. Therefore, we obtain the upper bound of $\left \| \mathbf{A} ^{-1} \right \|_1$ as follows
\begin{align}\label{lowerinvA}
\left \| \mathbf{A} ^{-1} \right \|_1 \leq \frac{\left\|  \mathbf{I} \right\|_1} {1- \left \| \left(\mathbf{I}-\mathbf{A} \right) \right \|_1} =\frac {1} {1-\frac {\tau^{2 \backslash \beta}} {t_f+ l_{0}^{c,f}} \sum_{i=1}^{N-1} l_{i}^{c,f}}.
\end{align}
Note that $P_{\textrm{s}}^{f,c}(t_f)=\frac {t_f} {t_f+ l_{0}^{c,f}} \left \| \mathbf{A} ^{-1} \right \|_1$, we get the bounds of $P_{\textrm{s}}^{f,c}(t_f)$ as (\ref{STPupperlowerTf}) after substituting the expression of $l_{i}^{c,f}$ and $l_{0}^{c,f}$ in Theorem \ref{MRTsuccessP}. It can be shown that $l_{i+1,{}}^{c,f} \leq l_{i}^{c,f}$ for $i \in \mathbb{N}$ and  $l_{0}^{c,f}=\sum_{i=1}^{\infty} l_{i}^{c,f}, \forall t_f \in [0,1]$. Moreover, $l_{i}^{c,f}$ decreases with  $t_f$  for $i \in \mathbb{N}$. Therefore, after carefully checking the properties of $\nu_A$, $\mu_A$, $\nu_B$ and $\mu_B$, we obtain property 1.

Secondly, we prove property 2. We define $\mathbf{B}  \triangleq \left( t_f+ l_{0}^{c,f} \right) \mathbf{A}$ and then we have $P_{\textrm{s}}^{f,c}\left(t_f\right)= \left \|  t_f \mathbf{B}^{-1} \right \|_1$.
Furthermore, the derivative of $\mathbf{B}$ w.r.t. $t_f$ is a lower triangular Toeplitz matrix and
 \begin{align}\label{expressionK}
\frac {\partial  \mathbf{B} } {\partial t_f} =\begin{bmatrix}
1-k_0&  \\
k_1  &1-k_0 \\
k_2 & k_1 &1-k_0 \\
\vdots &\vdots & &\ddots \\
k_{N-1} &k_{N-2} &\cdots &k_1  & 1-k_0
\end{bmatrix},
\end{align}
where $k_0$ and $k_i, i \in \{1,2,\cdots,N-1\}$ are given by
\begin{align}
&k_0= \frac {2 \pi} {\beta} \csc \left(\frac{2 \pi} {\beta} \right) \tau^{2 \backslash \beta} -\frac{  2\tau} {\beta-2} {}_2 F_1 \left[1,1-\frac{2} {\beta};2-\frac{2} {\beta};-\tau\ \right], \\
&k_i= \frac{2 \tau^{2 \backslash \beta} } {\beta} B(\frac{2} {\beta}+1,i-\frac{2} {\beta}) - \frac{2 \tau^{i}} {i \beta-2} {}_2 F_1 \left[i+1,i-\frac{2} {\beta};i+1-\frac{2} {\beta};-\tau\ \right],1 \leq i \leq N-1.
\end{align}

Then we derive the derivative of  $P_{\textrm{s}}^{f,c}(t_f)$ w.r.t. $t_f$ as follows
\begin{align}
\frac{\partial P_{\textrm{s}}^{f,c}(t_f) } {\partial t_f} &=\left \|  \frac {\partial (t_f \mathbf{B}^{-1})} {\partial t_f} \right \|_1 \notag \\
&=\left \|\mathbf{B}^{-1} - t_f  \mathbf{B}^{-1}  \frac {\partial  \mathbf{B}} {\partial t_f} \mathbf{B}^{-1} \right \|_1  \notag \\
&\geq \frac {1} {t_f} \left( \left \|t_f \mathbf{B}^{-1} \right \|_1 - \left \| t_f  \mathbf{B}^{-1}  \frac {\partial  \mathbf{B}} {\partial t_f} \mathbf{B}^{-1} t_f \right \|_1  \right) \notag \\
&\geq \frac {1} {t_f} \left \|t_f \mathbf{B}^{-1} \right \|_1 \left(1- \left \| \frac {\partial  \mathbf{B} } {\partial t_f} \right \|_1  \left \|t_f \mathbf{B}^{-1} \right \|_1 \right) \notag \\
&=\frac {1} {t_f} \left \|t_f \mathbf{B}^{-1} \right \|_1 \left(1- \left(1-k_0+\sum_{i=1}^{N-1} k_i  \right) \left \|t_f \mathbf{B}^{-1} \right \|_1 \right).
\end{align}
Note that $1-k_0+\sum_{i=1}^{N-1} k_i \leq 1-k_0+\sum_{i=1}^{\infty} k_i=1$, therefore we have
\begin{align}
\frac{\partial P_{\textrm{s}}^{f,c}(t_f) } {\partial t_f}&\geq \frac {1} {t_f} \left \|t_f \mathbf{B}^{-1} \right \|_1 \left(1-   \left \|t_f \mathbf{B}^{-1} \right \|_1 \right) =\frac {1} {t_f} P_{\textrm{s}}^{f,c}(t_f) \left(1-   P_{\textrm{s}}^{f,c}(t_f)  \right).
\end{align}
Note that $P_{\textrm{s}}^{f,c}(t_f) \in [0,1]$, we then finish the proof of property 2 because $\frac{\partial P_{\textrm{s}}^{f,c}(t_f) } {\partial t_f} \geq 0$

Finally, we prove property 3.  Note that $P_{\textrm{s}}^{f,c}(t_f)=P_{\textrm{s}}^{b}$ if and only if $t_f=1$. According to property 2, we have $P_{\textrm{s}}^{f,c}(t_f) \leq P_{\textrm{s}}^{b}$. According to property 1, the upper bound and the lower bound of $P_{\textrm{s}}^{f,c}(t_f)$ are both $0$ if $t_f=0$, therefore we have $P_{\textrm{s}}^{f,c}(t_f)=0$ if $t_f=0$.

\vspace{-0.4em}

\subsection{Proof of Theorem \ref{optimalFc}}\label{ProofofFcallocation}
To prove property 1 is equivalent to prove that $B \leq |\mathcal{F}_b^{*}| \leq F-C$ for optimizing $P_{\textrm{s}}\left( \mathcal{F}_c,\mathbf{t} \right)$.

We first prove that $|\mathcal{F}_b^{*}| \leq F-C$. Suppose that there exists optimal $\left(\mathcal{F}_b^{*},\mathbf{t}^{*} \right)$ to problem \ref{problemMRTorigin}  satisfying $|\mathcal{F}_b| > F-C$. Now we construct a feasible solution $\left(\mathcal{F}_b^{'}, \mathbf{t}^{'} \right)$ to problem \ref{problemMRTorigin}, where $\mathcal{F}_b^{'}$ is the set of the most popular $F-C$ files of $\mathcal{F}_b^{*}$, the elements of $\mathbf{t}^{'}$ are same as $\mathbf{t}^{*}$  if $f \in \{ \mathcal{F} \setminus \mathcal{F}_b^{*} \}$ and are one if $f \in \{ \mathcal{F}_b^{*} \setminus  \mathcal{F}_b^{'}\}$. Due to the fact that $F_c^{'} \leq C$, $\left(\mathcal{F}_b^{'}, \mathbf{t}^{'} \right)$ is a feasible solution satisfying the constraints.  Note that $P_{\textrm{s}}^{f,c} (t_f)=P_{\textrm{s}}^{b}$ if $t_f=1$ , we then have $P_{\textrm{s}}(\mathbf{t}^{'},\mathcal{F} \setminus  \mathcal{F}_b^{'})-P_{\textrm{s}}(\mathbf{t}^{*},\mathcal{F} \setminus  \mathcal{F}_b^{*})=\sum_{f \in \mathcal{F}_b^* \setminus \mathcal{F}_b^{'}} q_f    \left(P_{\textrm{s}}^{f,c} (1)-\sum_{k=1}^{F_b} \mathbb{P} \left(F_{b,1}^r=k \right)\frac{B} {\max\left(k,B\right)} P_{\textrm{s}}^{b} \right)>0$, which contradicts with the optimality of  $\left(\mathcal{F}_b^{*},\mathbf{t}^{*} \right)$. Therefore, we prove $\left|\mathcal{F}_b^{*} \right| \leq F-C$.

We then prove that $|\mathcal{F}_b^{*}| \geq B$. Suppose that there exist optimal $\left(\mathcal{F}_b^{*},\mathbf{t}^{*} \right)$ to problem \ref{problemMRTorigin}  satisfying $|\mathcal{F}_b| < B$. Now we construct a feasible solution $\left(\mathcal{F}_b^{'}, \mathbf{t}^{'} \right)$ to problem \ref{problemMRTorigin}, where $\mathcal{F}_b^{'}$ is the combining of  $\mathcal{F}_b^{*}$ and any $B-|\mathcal{F}_b^{*}|$ files in $\left \{ \mathcal{F} \setminus \mathcal{F}_b^{*} \right \}$, the elements of $\mathbf{t}^{'}$ are same as $\mathbf{t}^{*}$  if $f \in \{ \mathcal{F} \setminus \mathcal{F}_b^{'} \}$.  When $\left|\mathcal{F}_b \right| \leq B$,
$P_{\textrm{s}}(\mathcal{F}_c,\mathbf{t})= \sum_{f \in \mathcal{F}_c} q_f  P_{\textrm{s}}^{f,c} (t_f) + \sum_{f \in \mathcal{F} \setminus \mathcal{F}_c} q_f P_{\textrm{s}}^{b}$. Note that $P_{\textrm{s}}^{b} \geq P_{\textrm{s}}^{f,c} (t_f)$ , we then have $P_{\textrm{s}}(\mathbf{t}^{'},\mathcal{F} \setminus  \mathcal{F}_b^{'})-p_{\textrm{s}}(\mathbf{t}^{*},\mathcal{F} \setminus  \mathcal{F}_b^{*})=\sum_{f \in \mathcal{F}_b^{,} \setminus \mathcal{F}_b^{*}}   q_f  \left( P_{\textrm{s}}^{b} -P_{\textrm{s}}^{f,c} (t_f) \right) \geq 0$, which contradicts with the optimality of  $\left( \mathcal{F}_b^{*},\mathbf{t}^{*} \right)$. Therefore, we prove $\left|\mathcal{F}_b^{*} \right| \geq B$.

\vspace{-0.8em}

\subsection{Proof of Lemma \ref{upperboundproperty} }\label{Proofpropertyupper}
To prove Lemma \ref{upperboundproperty} is equivalent to prove $P_{\textrm{s}}^{u,f,c} (t_f)$ is increasing w.r.t. $t_f$ for $f \in \mathcal{F}_c$.  When $N=1$, we have $P_{\textrm{s}}^{u,f,c} (t_f)= \frac { q_f t_f} {\zeta_1 (\alpha \tau) t_f +\zeta_2 (\alpha \tau)}$ and it is increasing w.r.t. $t_f$. We then consider the scenario when $N \geq 2$.  According to the proof of the upper bound, $P_{\textrm{s}}^{u,f,c} (t_f)= 1-\mathbb{E}_{I_{}{\| \mathbf{x}_1\|}^{\beta}} \left[ \left( 1-\exp  \left( -\alpha \tau I_{}{\| \mathbf{x}_1\|}^{\beta} \right) \right)^{N} \right] $. After taking expectation over $I$, we have
\begin{align}
P_{\textrm{s}}^{u,f,c} (t_f)=1-\mathbb{E}_{{\| \mathbf{x}_1\|}^{\beta}} \left[ \left(1- \exp \left( \pi \lambda_b  \left( \left(1-\zeta_1 (\alpha \tau) \right) t_f -\zeta_2 (\alpha \tau) \right)  {\|\mathbf{x}_1\|}^2  \right)  \right)^{N} \right],
\end{align}
where $\zeta_1 (\alpha \tau)$ and $\zeta_2 (\alpha \tau)$ are given in (\ref{expressionzeta1}) and (\ref{expressionzeta2}).

Let $U\left(t_f | \mathbf{x}_1\right) \triangleq  \left(1- \exp \left(\pi \lambda_b  \left( \left(1-\zeta_1 (\alpha \tau)\right) t_f -\zeta_2 (\alpha \tau) \right)  {\|\mathbf{x}_1\|}^2  \right)  \right)$ and we have
\begin{align}
&\frac{\partial U^{N} \left(t_f | \mathbf{x}_1\right) } {\partial t_f} =N  \left( \pi \lambda_b {\|\mathbf{x}_1\|}^2 \left(\zeta_1 (\alpha \tau)-1\right) \right) U^{N-1} \left(t_f | \mathbf{x}_1\right)  \exp \left( \pi \lambda_b  \left( \left( 1-\zeta_1 (\alpha \tau) \right) t_f -\zeta_2 (\alpha \tau) \right)  {\|\mathbf{x}_1\|}^2  \right).  \notag
\end{align}
Note that $\zeta_1 (\alpha \tau) <1$ and $\zeta_2 (\alpha \tau)>0$, we have $\frac{\partial U^{N} \left(t_f | \mathbf{x}_1\right) } {\partial t_f}<0$ and $U^{N} \left(t_f | \mathbf{x}_1\right) $ is   a decreasing function of $t_f$ for any $ \mathbf{x}_1$. Therefore, $P_{\textrm{s}}^{u,f,c} (t_f)=1-\mathbb{E}_{{\| \mathbf{x}_1\|}^{\beta}} \left[U^{N} \left(t_f | \mathbf{x}_1 \right)\right]$ is an increasing function of $t_f$. This is because $\mathbb{E}_{{\| \mathbf{x}_1\|}^{\beta}} \left[U^{N} \left(t_f | \mathbf{x}_1 \right)\right]$ can be interpreted as a combination of a series of $U^{N}\left(t_f | \mathbf{x}_1 \right)$ with different $\mathbf{x}_1$ and the monotonicity holds.

\vspace{-0.8em}

\subsection{Proof of Theorem \ref{aymoptFc} }\label{ProofofasymTFc}
To prove $\mathcal{F}_c^{*}=\{B+1,B+2,\cdots,F\} $  is equivalent to prove $\mathcal{F}_b^{*}=\{1,2,\cdots,B\}$.

Firstly we prove that $|\mathcal{F}_b^*| \leq B$. Suppose that there exist optimal $\left(\mathcal{F}_b^{*},\mathbf{t}^{*} \right)$ to problem \ref{problemMRTorigin}  satisfying $|\mathcal{F}_b^{*}| > B$ and then we have $\max \left( F_b^{*},B \right)=F_b$. Now we construct a feasible solution $\left(\mathcal{F}_b^{'}, \mathbf{t}^{'} \right)$ to problem \ref{ProblemMRTasymT}, where $\mathcal{F}_b^{'}$ is the set of the most popular $B$ files of $\mathcal{F}_b^{*}$, the elements of $\mathbf{t}^{'}$ are same as $\mathbf{t}^{*}$  if $f \in \{ \mathcal{F} \setminus \mathcal{F}_b^{*} \}$ and are zero if $f \in \{ \mathcal{F}_b^{*} \setminus  \mathcal{F}_b^{'}\}$.  Note that $P_{\textrm{s}}^{u,f,c} \left(t_f \right)=0$ if and only if $t_f=0$ , we then have $R_{\textrm{s},\infty}^{u} (\mathbf{t}^{'},\mathcal{F} \setminus  \mathcal{F}_b^{'})-R_{\textrm{s},\infty}^{u}(\mathbf{t}^{*},\mathcal{F} \setminus  \mathcal{F}_b^{*})= \lambda_b \log_2(1+\tau) \left( \sum_{f \in \mathcal{F}_b^{'}} \frac {q_f P_{\textrm{s}}^{u,b}  } {B }-\sum_{f \in \mathcal{F}_b^{*}}  \frac { q_f P_{\textrm{s}}^{u,b}  } {F_b } \right)  B>0$ because $P_{\textrm{s}}^{u,b}>0$, which contradicts with the optimality of  $\left( \mathcal{F}_b^{*},\mathbf{t}^{*} \right)$. Therefore, we can prove $\left|\mathcal{F}_b^{*} \right| \leq B$.

Secondly we prove that $|\mathcal{F}_b^*| \geq B$. Note that $P_{\textrm{s}}^{u,f,c} \left(t_f \right)$ increases w.r.t. $t_f$ and $P_{\textrm{s}}^{u,f,c}\left(t_f\right) \leq P_{\textrm{s}}^{u,b}$, therefore we can easily prove $\left|\mathcal{F}_b^{*} \right| \geq B$ similarly as Appendix \ref{ProofofFcallocation}.

Combining the results $\left|\mathcal{F}_b^{*} \right| \leq B$ and $\left|\mathcal{F}_b^{*} \right| \geq B$, we have $\left|\mathcal{F}_b^{*} \right| = B$. Finally we prove $\mathcal{F}_b^{*}=\{1,2,\cdots,B\}$.  Suppose that there exists optimal $\left(\mathcal{F}_b^{*},\mathbf{t}^{*} \right)$ to problem \ref{problemMRTorigin}  satisfying $|\mathcal{F}_b^{*}|=B$ and $\mathcal{F}_b \neq \{1,2,\cdots\,B\}$.  We construct a feasible solution $\left(\mathcal{F}_b^{'}, \mathbf{t}^{'} \right) $ to problem \ref{ProblemMRTasymT}, where $\mathcal{F}_b^{'} = \{1,2,\cdots\,B\}$, the elements of $\mathbf{t}^{'}$ are same as $\mathbf{t}^{*}$  if $f \in \{  \mathcal{F} \setminus \left( \mathcal{F}_b^{*} \bigcup \mathcal{F}_b^{'} \right) \} $ and are same as $\mathbf{t}^{*}$ in order for the rest files, i.e.,   $t^{'}_{f_n}=t^{*}_{f_m}, {f_n} \in \left \{ \mathcal{F}_b^{*} \bigcap \mathcal{F}_c^{'} \right \}, {f_m} \in \left \{ \mathcal{F}_b^{'} \bigcap \mathcal{F}_c^{*} \right \}, n=m$.  Note that $n$ and $m$ denote the order of the file in $\left \{ \mathcal{F}_b^{*} \bigcap \mathcal{F}_c^{'} \right \}$ and $\left \{ \mathcal{F}_b^{'} \bigcap \mathcal{F}_c^{*} \right \}$. We then have $R_{\textrm{s},\infty}^{u} (\mathbf{t}^{'},\mathcal{F} \setminus  \mathcal{F}_b^{'})-R_{\textrm{s},\infty}^{u}(\mathbf{t}^{*},\mathcal{F} \setminus  \mathcal{F}_b^{*})= \lambda_b \log_2(1+\tau)  \sum_{f_m \in \left \{ \mathcal{F}_b^{'} \bigcap \mathcal{F}_c^{*} \right \} } q_{f_m} \left( P_{\textrm{s}}^{u,b}- P_{\textrm{s}}^{c,b}\left( t_{f_m}^{*} \right)\right)  -\lambda_b \log_2(1+\tau) \sum_{f_n \in \left \{ \mathcal{F}_b^{*} \bigcap \mathcal{F}_c^{'} \right \} } q_{f_n} \left( P_{\textrm{s}}^{u,b}- P_{\textrm{s}}^{c,b}\left( t_{f_n}^{'} \right)\right)$. Note that  $P_{\textrm{s}}^{u,b}- P_{\textrm{s}}^{c,b}\left( t_{f_m}^{*} \right)= P_{\textrm{s}}^{u,b}- P_{\textrm{s}}^{c,b}\left( t_{f_n}^{'} \right)$ if $m=n$, we then obtain $R_{\textrm{s},\infty}^{u} (\mathbf{t}^{'},\mathcal{F} \setminus  \mathcal{F}_b^{'})-R_{\textrm{s},\infty}^{u}(\mathbf{t}^{*},\mathcal{F} \setminus  \mathcal{F}_b^{*})= \lambda_b \log_2(1+\tau)  \sum_{m \in \left \{ 1,2, \cdots, \left | \left \{ \mathcal{F}_b^{'} \bigcap \mathcal{F}_c^{*} \right \} \right| \right \} } \big( q_{f_m} -q_{f_n} \big) \big( P_{\textrm{s}}^{u,b}- P_{\textrm{s}}^{c,b} \big( t_{f_m}^{*} \big) \big) >0$. The last inequality is because $q_{f_m}>q_{f_n}$ if $n=m$ and $P_{\textrm{s}}^{u,b}> P_{\textrm{s}}^{c,b}\left( t_{f_m}^{*} \right)$.

\end{appendix}

\bibliographystyle{IEEEtran}
\bibliography{reference}
\end{document}